\documentclass[10pt,onecolumn,journal]{IEEEtran}
\usepackage{indentfirst,amsmath}
\usepackage{color}

\usepackage{amsfonts,amsmath,amssymb,amsthm}
\usepackage{latexsym,amscd}
\usepackage{amsbsy}
\usepackage{graphicx,multirow,bm}
\usepackage{algorithm}
\usepackage{algorithmicx}
\usepackage{algpseudocode}
\usepackage{xcolor}
\usepackage{tikz}
\usepackage{diagbox}
\usepackage{cite}
\usepackage{arydshln}
\usepackage{bbm}

\newtheorem{Theorem}{Theorem}

\newtheorem{Definition}{Definition}
\newtheorem{Example}{Example}
\newtheorem{Property}{Property}
\newtheorem{Remark}{Remark}
\newtheorem{Lemma}{Lemma}

\newtheorem*{Fact}{Fact}

\newcommand{\tabincell}[2]{\begin{tabular}{@{}#1@{}}#2\end{tabular}}
  % Use Input in the format of Algorithm
 % Use Output in the format of Algorithm

\DeclareMathAlphabet{\mathpzc}{OT1}{pzc}{m}{it}

\begin{document}
	\title{Optimal Repair/Access MDS Array Codes with Multiple Repair Degrees
			\author{Yi Liu, Jie Li, \IEEEmembership{Member,~IEEE}, and Xiaohu Tang, \IEEEmembership{Senior Member,~IEEE}}
		\thanks{Y. Liu and X.H. Tang are with the Information Security and National Computing Grid Laboratory, Southwest Jiaotong University, Chengdu, 610031, China (e-mail: yiliu.swjtu@outlook.com, xhutang@swjtu.edu.cn).}
		\thanks{J. Li was with the Hubei Key Laboratory of Applied Mathematics, Faculty of Mathematics and Statistics, Hubei University, Wuhan 430062, China (e-mail: jieli873@gmail.com).}}
	\date{}
	\maketitle

	\begin{abstract}
		In the literature, most of the known high-rate $(n,k)$ MDS array codes with the optimal repair property only support a single repair degree (i.e., the number of helper nodes contacted during a repair process)  $d$, where $k\le d\le n-1$. However, in practical storage systems, the number of available nodes changes frequently. Thus, it is preferred to construct $(n,k)$ MDS array codes with multiple repair degrees and the optimal repair property for all nodes. To the best of our knowledge, only two MDS array codes have such properties in the literature, which were proposed by Ye and Barg (IEEE Trans. Inform. Theory, 63(10), 2001-2014, 2017). However, their sub-packetization levels are relatively large. In this paper, we present a generic construction method that can convert some MDS array codes with a single repair degree into the ones with multiple repair degrees and optimal repair property for a set of nodes, while the repair efficiency/degrees of the remaining nodes can be kept.  %As applications of the generic construction method, three explicit constructions of high-rate MDS array codes with multiple repair degrees and the optimal repair/access property for all nodes are obtained over small finite fields. 
		As an application of the generic construction method, an explicit construction of high-rate MDS array code with multiple repair degrees and the optimal access property for all nodes is obtained over a small finite field. Especially, the sub-packetization level  is much smaller than that of the two codes proposed by Ye and Barg concerning the same parameters $n$ and $k$.
	\end{abstract}
	
	\begin{IEEEkeywords}
		Distributed storage, high-rate, MDS array codes, sub-packetization, optimal repair, repair degree.
	\end{IEEEkeywords}
	
	\section{Introduction}
	Distributed storage systems, such as those run by  Hadoop, Google Colossus, Microsoft Azure \cite{Micro}, OceanStore \cite{ocean}, Total Recall \cite{total}, and DHash++ \cite{Dhash}, are widely used in not only large-scale data centers but also peer-to-peer storage settings. Currently, deployed distributed storage systems are formed of thousands of individual nodes, where the node failures are 
normal. Therefore, in order to ensure reliability, a certain amount of redundant data should be stored in the distributed storage system as well. Conventionally, distributed storage systems use replications to produce redundant data, such as HDFS \cite{HDFS}. However, due to the large storage consumption of exact replicas, there is a trend for distributed storage systems to migrate from replications to erasure codes \cite{Micro}. Compared with the former, erasure codes can offer higher reliability at the same redundancy level and thus have been extensively deployed in distributed storage systems.
	
	Among families of erasure codes, maximum distance separable (MDS) codes provide optimal trade-off between fault-tolerance and storage overhead. By distributing the codeword across distinct storage nodes, in the case of node failures, the missing data can be recovered from the data at some surviving nodes, which are named \textit{helper nodes}. During the repair process, efficient operation of the system requires minimizing the \textit{repair bandwidth}, which is defined as the amount of data downloaded to repair a failed node \cite{Dimakis}.
	
	It was proved in \cite{Dimakis} that for an $(n,k)$ MDS code with code length $n$ and dimension $k$, the recovery of a single failed node from $d$ helper nodes should download at least a  fraction $\frac{1}{d-k+1}$ of the data stored in each of the helper nodes, i.e., the repair bandwidth $\gamma(d)$ satisfies
	\begin{eqnarray}\label{Eqn_bound_on_gamma}
		% \nonumber to remove numbering (before each equation)
		\gamma(d)\ge \frac{d}{d-k+1}N,
	\end{eqnarray}
	where $d\in [k:n)$ and $N$ are called the \textit{repair degree} and \textit{sub-packetization level}, respectively. Particularly, the code is referred to as an \textit{array code} if $N>1$ \cite{array codes}. In the literature, most existing MDS codes are designed as a kind of array codes to achieve the lower bound in \eqref{Eqn_bound_on_gamma}. In this paper, we also focus on MDS array codes.
	
For $d\in [k:n)$, rewrite $d=k+\theta-1$, where $2\le \theta\le  n-k$. For an $(n,k)$ MDS array code, if the repair bandwidth attains the lower bound in \eqref{Eqn_bound_on_gamma} when repairing a failed node by connecting $d$ helper nodes, we say that the code has the \textit{$\theta$-optimal repair property} for this node. More generally, given any $m$ ($m\ge 2$) positive integers $\delta_0,\delta_1,\cdots,\delta_{m-1}$ with $2 \le \delta_0<\delta_1<\cdots<\delta_{m-1} \le r=n-k$, if a node has the $\delta_z$-optimal repair property for all $0\le z < m$, we say that this node has the \textit{$\delta_{[0:m)}$-optimal repair property}, where $\delta_{[0:m)}=\{\delta_0,\delta_1,\cdots,\delta_{m-1}\}$. Besides the repair bandwidth, some other metrics also need to be optimized in practice. In general, during the process of repairing a failed node, a symbol downloaded from one helper node can be a  linear combination of several symbols at this node, and the amount of data accessed can be more than that transmitted. When repairing a failed node by connecting $d=k+\theta-1$ helper nodes, if the amount of data accessed from the helper nodes also meets the lower bound in \eqref{Eqn_bound_on_gamma}, we say that the $(n,k)$ MDS array code has the \textit{$\theta$-optimal access property} for this node. Similarly, we also say one node has the \textit{$\delta_{[0:m)}$-optimal access property}  if it has $\delta_z$-optimal access property for all $0\le z<m$.  Actually, the optimal access property implies the optimal repair property, but not vice versa. In this sense, the optimal access property can be viewed as an enhanced property of the optimal repair property. %Furthermore, in a practical distributed storage system where updates are frequent, an MDS storage code is preferred to possess  the \textit{optimal update property}, i.e., when a symbol in a systematic node is rewritten,   only the symbol itself and one symbol at each parity node need an update, which achieves the minimum reading/writing during the writing of information \cite{zigzag}.
	
Up to now, for $k>n/2$ (i.e., the high-rate regime),  some explicit constructions of MDS array codes which support a single repair degree and with the $\theta$-optimal repair property have been proposed, where $2\le \theta\le r$. Among them, most constructions are limited to the case of $\theta=r$,  i.e.,  repairing a failed node requires connecting all the $n-1$ surviving nodes, where some of the notable works are \cite{zigzag,extended_zigzag,Long_arxiv,repair_parity_zigzag,new_modified_zigzag,invariant_subspace,all_nodes,Hadamard_strategy,Sasidharan_Kumar,coupled_layer,RES}. Only a few known explicit constructions of MDS array codes support $d<n-1$ (i.e., $\theta<r$) \cite{Barg_1,coupled,kumar}, however, they either have a large sub-packetization level (e.g., the two codes proposed in \cite{Barg_1}) or have restrictions on the choices of the parameter $d$ or equivalently $\theta$ (e.g., the MDS array codes proposed in \cite{coupled,kumar}), where the two MDS array codes proposed in \cite{Barg_1} are respectively called YB code 1 and  YB code 2. Particularly, in this paper, the MDS array code with the optimal access property and optimal sub-packetization level proposed in \cite{kumar} is called VBK code. More recently, in \cite{t-transformation}, Liu \textit{et al.} proposed an explicit construction of high-rate $(n,k)$ MDS array code with the $\theta$-optimal access property for all nodes, where the sub-packetization level is $\theta ^{\lceil\frac{n}{2} \rceil}$, which is between that of the YB codes 1, 2 \cite{Barg_1} and the ones proposed in \cite{coupled,kumar}. 
	
Constructions of high-rate MDS array codes with multiple repair degrees were first proposed by Ye and Barg \cite{Barg_1} in 2017, where two $(n,k)$ MDS array codes with all nodes having $\delta_{[0:m)}$-optimal repair property for any subset $\delta_{[0:m)}$ of $[2:r]$ were proposed. Both codes have sub-packetization levels $\delta^n$, where
	\begin{eqnarray}\label{eqn value delta}
		\delta=\mathrm{lcm}(\delta_0,\delta_1,\cdots,\delta_{m-1}).
	\end{eqnarray}  Specifically, the parity-check matrices of the two MDS array codes are based on diagonal matrices and permutation matrices.
	For convenience, we refer to the one based on diagonal matrices as YB code 3 and the other one as YB code 4. To the best of our knowledge,  YB codes 3 and 4 are the only two known high-rate MDS array codes with the $\delta_{[0:m)}$-optimal repair property for all nodes in the literature. However, their sub-packetization levels are relatively large.

	In this paper, we aim to construct high-rate $(n,k)$ MDS array codes that have  $\delta_{[0:m)}$-optimal repair property for all nodes and a lower sub-packetization level. By this motivation, we provide a generic construction method that can convert some known MDS array codes with the $\delta_0$-optimal repair property into another MDS array code, which makes a set of nodes possessing the $\delta_{[0:m)}$-optimal repair property, and simultaneously  preserves the optimal repair/access property for the remaining nodes. By applying this generic construction method multiple times, an algorithm is proposed that can construct $(n,k)$ MDS array codes with the $\delta_{[0:m)}$-optimal repair property for all nodes from a class of special $(n,k)$ MDS array codes with the $\delta_0$-optimal repair property for all nodes. As application of the algorithm to VBK code in \cite{kumar}, we obtain an explicit high-rate $(n,k)$ MDS array code $\mathbbmss{G}$ which has the $\delta_{[0:m)}$-optimal access property for all nodes. Specifically, the new code $\mathbbmss{G}$ has a sub-packetization level $\delta^{\lceil\frac{n}{\delta_0}\rceil}$ for $\delta_0=2,3,4$, which is much smaller than that of YB codes 3 and 4, where $\delta$ is defined in \eqref{eqn value delta}. When $\delta_0>4$, consider the new code $\mathbbmss{G}$ with $(\{4\}\cup \delta_{[0:m)})$-optimal access property for all nodes, it not only has a smaller sub-packetization level $(\mathrm{lcm}(4,\delta))^{\lceil \frac{n}{4} \rceil}$ than that of YB codes 3 and 4, but also can support one more repair degree than YB codes 3 and 4. %Roughly speaking, in contrast to YB codes 3 and 4 with the same $n,k$ and set $\delta_{[0:m)}$, the sub-packetization level of the new code $\mathbbmss{G}_1$ is decreased by a factor $\delta^{n-\lceil\frac{n}{\delta_0}\rceil}$ when $\delta_0=2,3,4$ and at least a factor $\frac{\delta^{n-\lceil \frac{n}{4}\rceil}}{4^{\lceil \frac{n}{4}\rceil}}$ when $\delta_0>4$.
	
	The remainder of the paper is organized as follows. Section II reviews some necessary preliminaries. Section III proposes the generic construction method and its asserted properties. Section IV gives the algorithm of this method and a new  explicit construction of high-rate MDS array code which is obtained by means of this algorithm. Section V gives comparisons of key parameters among the MDS array code proposed in this paper and YB codes 3, 4. Finally, Section VI concludes this paper.

	\section{Preliminaries}\label{section S2}
	
	In this section, we introduce the MDS property and optimal repair property of MDS array codes, and a series of special partitions for a given standard basis set.
	
	\subsection{Structure of MDS Array Codes}
	
	Let $\mathbb{F}_q$ be a finite field with $q$ elements where $q$ is a prime power. For two non-negative integers $a$ and $b$ with $a<b$, define $[a:b)$ and $[a:b]$ as two ordered sets $\{a,a+1,\cdots,b-1\}$ and $\{a,a+1,\cdots,b\}$, respectively. An $(n,k)$ array code encodes a file of size $\mathcal{M}=kN$ into $n$ fragments $\mathbf{f}_0,\mathbf{f}_1,\cdots,\mathbf{f}_{n-1}$, which are stored across $n$ nodes, respectively, where $\mathbf{f}_i=(f_{i,0},f_{i,1},\cdots,f_{i,N-1})^{\top}$ is a column vector of length $N$ over $\mathbb{F}_q$, and $\top$ denotes the transpose operator.  	

	In this paper, $(n,k)$ array codes are assumed to be defined in the following parity-check form:
	\begin{equation}\label{Eqn Code C}
		\underbrace{\left(
			\begin{array}{cccc}
				A_{0,0} & A_{0,1} & \cdots & A_{0,n-1}\\
				A_{1,0} & A_{1,1} & \cdots & A_{1,n-1}\\
				\vdots & \vdots &\ddots & \vdots\\
				A_{r-1,0}&A_{r-1,1}&\cdots&A_{r-1,n-1}\\
			\end{array}
			\right)}_{block~ matrix~ A}
		\left(
		\begin{array}{c}
			\mathbf{f}_0\\\mathbf{f}_1\\\vdots\\\mathbf{f}_{n-1}\\
		\end{array}
		\right)=\mathbf{0}_{rN}
	\end{equation}
	where $r=n-k$. Throughout this paper, $\mathbf{0}_{N}$ (resp. $\mathbf{0}_{N\times M})$ denotes the zero column of length $N$ (resp. matrix of order $N\times M$), and will be abbreviated as $\mathbf{0}$ in the sequel if the length (resp. order) is clear. In \eqref{Eqn Code C}, the $rN\times rN$ matrix $A$ is called the \textit{parity-check matrix} of the code, which can be simplified as $$A=(A_{t,i})_{t\in [0:r),i\in [0:n)}$$ to indicate the block entries. Note that for each $t\in [0:r)$, $\sum\limits_{i=0}^{n-1}A_{t,i}\mathbf{f}_i=\mathbf{0}$ contains $N$ equations, for convenience, we say that $\sum\limits_{i=0}^{n-1}A_{t,i}\mathbf{f}_i=\mathbf{0}$ is the $t$-th \textit{parity-check group} (PCG), where $A_{t,i}$ is an $N\times N$ matrix  over $\mathbb{F}_q$. 
	
	An $(n,k)$ code is said to have the MDS property if the original file can be reconstructed by connecting any $k$ out of the $n$ nodes, i.e., the data stored in any set of $r=n-k$ nodes can be obtained by the remaining $k$ nodes.
	
	\begin{Lemma}[\cite{Barg_1}]
		An $(n,k)$ array code defined by \eqref{Eqn Code C} has the MDS property if and only if the block matrix
		\begin{eqnarray*}
			\left(
			\begin{array}{cccc}
				A_{0,i_0} & A_{0,i_1} & \cdots & A_{0,i_{r-1}}\\
				A_{1,i_0} & A_{1,i_1} & \cdots & A_{1,i_{r-1}}\\
				\vdots & \vdots &\ddots & \vdots\\
				A_{r-1,i_0}&A_{r-1,i_1}&\cdots&A_{r-1,i_{r-1}}\\
			\end{array}
			\right)
		\end{eqnarray*}
		of order $rN$ is nonsingular for any $r$-subset $\{i_0,i_1,\cdots,i_{r-1}\}\subset [0:n)$. 
	\end{Lemma}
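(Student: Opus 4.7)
The plan is to translate the MDS property into a statement about unique solvability of a linear system, one system for each choice of $r$ ``erased'' coordinates. Fix any $r$-subset $I = \{i_0, i_1, \ldots, i_{r-1}\} \subset [0:n)$, and regard the $r$ nodes in $I$ as \emph{failed} while the remaining $k$ nodes are \emph{surviving}. The MDS property is equivalent to the assertion that, for every such $I$, the vectors $\mathbf{f}_{i_0}, \ldots, \mathbf{f}_{i_{r-1}}$ are uniquely determined from the remaining $\mathbf{f}_i$ via the parity-check equations.

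First I would isolate the unknowns: for each $t \in [0:r)$, rewrite the $t$-th PCG as
$$\sum_{j=0}^{r-1} A_{t,i_j}\mathbf{f}_{i_j} = -\sum_{i \in [0:n)\setminus I} A_{t,i}\mathbf{f}_i.$$
Stacking these $r$ block equations over $t$ yields a square linear system of order $rN$ whose coefficient matrix is exactly the block matrix $B_I=(A_{t,i_j})_{t\in[0:r),\,j\in[0:r)}$ appearing in the lemma, and whose right-hand side is a known function of the surviving data. This system admits a unique solution for every right-hand side if and only if $B_I$ is nonsingular, so the MDS property for this particular $I$ is equivalent to the nonsingularity of $B_I$. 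Quantifying over all $r$-subsets $I\subset[0:n)$ then delivers both directions of the equivalence simultaneously.

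To make the ``only if'' direction fully rigorous, I would argue by contraposition: if $B_I$ is singular for some $I$, pick a nonzero vector in its kernel and extend it by zero blocks at coordinates outside $I$ to produce a nonzero codeword supported entirely on $I$. This codeword agrees with the all-zero codeword on the $k$ surviving nodes, so these $k$ nodes cannot reconstruct the file, violating the MDS property. The only mildly subtle point is this translation between ``recoverability from every $k$ nodes'' and ``no nonzero codeword is supported on any $r$ coordinates'' — the familiar minimum-distance argument adapted from scalar codes to the vector/array setting — and I do not expect it to present any real obstacle.
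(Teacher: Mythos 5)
The paper states this lemma with a citation to Ye and Barg and gives no proof of its own, so there is no in-paper argument to compare against. Your proof is the standard erasure-decoding argument for array codes in parity-check form and is correct: isolating the $r$ unknown blocks gives a square system of order $rN$ with coefficient matrix $B_I$; nonsingularity of $B_I$ gives unique recoverability of the erased blocks from any codeword's surviving data; and in the converse direction a kernel vector of a singular $B_I$, padded with zero blocks outside $I$, produces a nonzero codeword invisible to the $k$ surviving nodes, so reconstruction fails. The only cosmetic point is that the MDS property concerns uniqueness of solutions for right-hand sides arising from actual codewords (existence is automatic there), rather than solvability for arbitrary right-hand sides, but since both are equivalent to nonsingularity of a square matrix, and your contrapositive paragraph exhibits the uniqueness failure directly, the argument is sound.
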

	
	\subsection{Optimal Repair Property}\label{section sec2-1}
	
	An $(n,k)$ MDS code with the $\theta$-optimal repair property is preferred, i.e., any failed node can be repaired by downloading $\frac{N}{\theta}$ symbols from each of the $d=k+\theta-1$ helper nodes. In this paper, similarly to that in \cite{t-transformation}, when repairing a failed node $i\in [0:n)$, the $\frac{N}{\theta}$ symbols downloaded from each helper node $j\in \mathcal{H}$ is represented by $R_{i,\theta}\mathbf{f}_j$, where $\mathcal{H}$ denotes the indices set of the $d$ helper nodes and the $\frac{N}{\theta}\times N$ matrix  $R_{i,\theta}$ of full rank is called the \textit{$\theta$-repair matrix} of node $i$. In addition, the code is preferred to have the $\theta$-optimal access property, i.e., when repairing a failed node $i\in [0:n)$, the amount of accessed data attains the lower bound in \eqref{Eqn_bound_on_gamma}. Clearly, node $i$ has the $\theta$-optimal access property if the $\theta$-repair matrix $R_{i,\theta}$ satisfies that each row has only one nonzero element. 
	
	Obviously, some linear independent equations
	should be chosen  out of those $rN$ parity-check equations in  \eqref{Eqn Code C}  to regenerating a failed node $i\in [0:n)$.  Precisely, for any $t\in [0:r)$, we get $\frac{N}{\theta}$  linear independent equations from the $t$-th PCG of \eqref{Eqn Code C} by multiplying it with an $\frac{N}{\theta} \times N$ matrix $S_{i,\theta}$ of rank $\frac{N}{\theta}$, where $S_{i,\theta}$ is called the \textit{$\theta$-select matrix} of node $i$. As a consequence, the following linear system of equations (LSE) are available,
	\begin{eqnarray}\label{Eqn linear equations}
		\underbrace{\left(\begin{array}{c}
				S_{i,\theta} A_{0,i}\\
				S_{i,\theta} A_{1,i}\\
				\vdots\\
				S_{i,\theta} A_{r-1,i}
			\end{array}\right)\mathbf{f}_i}_{\mathrm{useful~ data}}+\sum_{j\in \mathcal{D}}\underbrace{\left(\begin{array}{c}
				S_{i,\theta} A_{0,j}\\
				S_{i,\theta} A_{1,j}\\
				\vdots\\
				S_{i,\theta} A_{r-1,j}
			\end{array}\right)\mathbf{f}_j}_{\mathrm{intermediate~ data}}
		+\sum_{j\in \mathcal{H}}\underbrace{\left(\begin{array}{c}
				S_{i,\theta} A_{0,j}\\
				S_{i,\theta} A_{1,j}\\
				\vdots\\
				S_{i,\theta} A_{r-1,j}
			\end{array}\right)\mathbf{f}_j}_{\mathrm{interference~ by}~\mathbf{f}_{j}}
		=\mathbf{0},
	\end{eqnarray}
	where  $\mathcal{D}=[0:n)\backslash(\mathcal{H}\cup \{i\})$ is the index set of the $r-\theta$ nodes which are not connected, particularly $\mathcal{D}=\emptyset$ if  $\theta=r$.
	
	Therefore, the optimal repair property indicates that the interference terms caused by $\mathbf{f}_{j}$ have to be cancelled by the downloaded data $R_{i,\theta} \mathbf{f}_{j}$ from node $j\in \mathcal{H}$, i.e.,
	\begin{eqnarray*}
		\mbox{Rank} \left(\left(
		\begin{array}{c}
			R_{i,\theta} \\
			S_{i,\theta} A_{0,j} \\
			\vdots\\
			S_{i,\theta} A_{r-1,j} \\
		\end{array}
		\right)\right) =\frac{N}{\theta}
	\end{eqnarray*}
for all  $j\in \mathcal{H}$ and further for all $j\in [0:n)\backslash\{i\}$ since $\mathcal{H}$ is an arbitrary $d$-subset of $[0:n)$,
	which means that
	\begin{eqnarray*}
		\mbox{Rank} \left(\left(
		\begin{array}{c}
			R_{i,\theta} \\
			S_{i,\theta} A_{t,j} \\
		\end{array}
		\right)\right) =\frac{N}{\theta} \textrm{~for~all~} j\in [0:n)\backslash\{i\} \textrm{~and~}t\in [0:r).
	\end{eqnarray*}
That is, there exists an $\frac{N}{\theta}\times \frac{N}{\theta}$ matrix $\tilde{A}_{t,j,i,\theta}$ such that
	\begin{eqnarray}\label{repair node requirement3}
		S_{i,\theta}A_{t,j}=\tilde{A}_{t,j,i,\theta}R_{i,\theta}  \textrm{~for~}  j\in [0:n)\backslash\{i\} \textrm{~and~}t\in [0:r).
	\end{eqnarray}
	
	Let $\mathcal{D}=\{j_0,j_1,\cdots,j_{r-\theta-1}\}$, by substituting  \eqref{repair node requirement3} into \eqref{Eqn linear equations}, together with the data $R_{i,\theta}\mathbf{f}_j$ downloaded from each helper node $j\in \mathcal{H}$,  \eqref{Eqn linear equations} can be rewriten as
	\begin{eqnarray}\label{Eqn linear equations eq 1}
		\left(\begin{array}{cccc}
			S_{i,\theta} A_{0,i} & \tilde{A}_{0,j_0,i,\theta} & \cdots & \tilde{A}_{0,j_{r-\theta-1},i,\theta}\\
			S_{i,\theta} A_{1,i}  & \tilde{A}_{1,j_0,i,\theta} & \cdots & \tilde{A}_{1,j_{r-\theta-1},i,\theta}\\
			\vdots & \vdots & \vdots & \vdots\\
			S_{i,\theta} A_{r-1,i}  & \tilde{A}_{r-1,j_0,i,\theta} & \cdots & \tilde{A}_{r-1,j_{r-\theta-1},i,\theta}
		\end{array}\right)\left(\begin{array}{c}
			\mathbf{f}_i\\
			R_{i,\theta}\mathbf{f}_{j_0}\\
			\vdots\\
			R_{i,\theta}\mathbf{f}_{j_{r-\theta-1}}
		\end{array}\right)=-\underbrace{\sum_{j\in \mathcal{H}}\left(\begin{array}{c}
				\tilde{A}_{0,j,i,\theta} \\
				\tilde{A}_{1,j,i,\theta}\\
				\vdots\\
				\tilde{A}_{r-1,j,i,\theta}
			\end{array}\right)R_{i,\theta}\mathbf{f}_j}_{\mathrm{known~data}},
	\end{eqnarray}
	where the term on the right hand side (RHS) of \eqref{Eqn linear equations eq 1} is  determined by the downloaded data.  
	It is clear that there are $N+\frac{N}{\theta}|\mathcal{D}|=N+\frac{N}{\theta}(r-\theta)=\frac{rN}{\theta}$ unknown variables with $\frac{rN}{\theta}$ equations in \eqref{Eqn linear equations eq 1}. Then we have the following result.
	\begin{Lemma}\label{lem the requirement of obtaining fi}
		For given $\theta\in [2:r]$ and $i\in [0:n)$, if node $i$ has the $\theta$-optimal repair property, then the $\frac{rN}{\theta}\times \frac{rN}{\theta}$ coefficient matrix of \eqref{Eqn linear equations eq 1} is nonsingular for any $(r-\theta)$-subset $\{j_0,j_1,\cdots,j_{r-\theta-1}\}\subset[0:n)\backslash\{i\}$.
	\end{Lemma}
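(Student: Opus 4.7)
The plan is to prove the contrapositive: if the coefficient matrix $M$ of \eqref{Eqn linear equations eq 1} is singular for some $(r-\theta)$-subset $\mathcal{D}=\{j_0,\ldots,j_{r-\theta-1}\}\subset[0:n)\setminus\{i\}$, with corresponding helper set $\mathcal{H}=[0:n)\setminus(\{i\}\cup\mathcal{D})$, then node $i$ does not have the $\theta$-optimal repair property. I would begin by fixing a nonzero kernel vector $\mathbf{x}_0=(\mathbf{u}^\top,\mathbf{w}_0^\top,\ldots,\mathbf{w}_{r-\theta-1}^\top)^\top\in\ker M$ with $\mathbf{u}\in\mathbb{F}_q^N$ and each $\mathbf{w}_\ell\in\mathbb{F}_q^{N/\theta}$, which unpacks blockwise into
\[
S_{i,\theta}A_{t,i}\mathbf{u}+\sum_{\ell=0}^{r-\theta-1}\tilde{A}_{t,j_\ell,i,\theta}\mathbf{w}_\ell=\mathbf{0},\quad t\in[0:r).
\]

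Next I would promote $\mathbf{x}_0$ to an actual codeword $\mathbf{g}=(\mathbf{g}_0^\top,\ldots,\mathbf{g}_{n-1}^\top)^\top$. Since $R_{i,\theta}$ has full row rank $N/\theta$, one can pick preimages $\mathbf{g}_{j_\ell}\in R_{i,\theta}^{-1}(\mathbf{w}_\ell)$ for each $\ell$ and set $\mathbf{g}_i=\mathbf{u}$. The remaining task is to choose $\mathbf{g}_j\in\ker R_{i,\theta}$ for every $j\in\mathcal{H}$ so that all $r$ block parity-check equations in \eqref{Eqn Code C} hold. Applying $S_{i,\theta}$ to each parity-check and substituting \eqref{repair node requirement3} reproduces exactly the kernel relation above, which is automatically enforced by $\mathbf{x}_0\in\ker M$; it therefore remains only to satisfy the components in $\ker S_{i,\theta}$, which is a linear system on $(\mathbf{g}_j)_{j\in\mathcal{H}}$ confined to $\ker R_{i,\theta}$. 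Solvability of this reduced system is extracted from the MDS hypothesis, which ensures that any $r$ block columns of $A$ form an invertible $rN\times rN$ submatrix, so the projected parity-checks surject onto the target subspace.

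With such a codeword in hand, $\mathbf{g}$ and the all-zero codeword yield identical helper data $(R_{i,\theta}\mathbf{g}_j)_{j\in\mathcal{H}}=\mathbf{0}$. If $\mathbf{u}\neq\mathbf{0}$ the two codewords differ at node $i$, which immediately contradicts the $\theta$-optimal repair property. If $\mathbf{u}=\mathbf{0}$ then some $\mathbf{w}_\ell\neq\mathbf{0}$, and I would interchange $j_\ell$ with a helper index to form a new $(r-\theta)$-subset, rerunning the construction to transfer the nonzero component onto the failed position; this is legitimate because the $\theta$-optimal repair hypothesis applies uniformly for every admissible $\mathcal{H}$. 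The main obstacle is completing the codeword extension in the second step -- certifying that the parity-check components orthogonal to $\mathrm{Im}\,S_{i,\theta}$ can indeed be cleared within $\ker R_{i,\theta}$ -- which is a dimension count that leans on both the MDS property and the rank properties of $S_{i,\theta}$, $R_{i,\theta}$; everything else is routine linear-algebra bookkeeping.
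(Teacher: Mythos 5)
The paper does not actually supply a proof of this lemma; it follows a dimension count ($\frac{rN}{\theta}$ equations in $\frac{rN}{\theta}$ unknowns) and is essentially asserted, so there is no paper argument to match against. Judged on its own, your contrapositive strategy is viable in outline, but two steps have real gaps, and both are repaired by a single fact you never isolate: for any $r$ distinct indices $h_0,\dots,h_{r-1}\in[0:n)\setminus\{i\}$, the $\frac{rN}{\theta}\times\frac{rN}{\theta}$ block matrix $\bigl(\tilde{A}_{t,h_\ell,i,\theta}\bigr)_{t,\ell\in[0:r)}$ is nonsingular. This follows from Lemma~1 together with \eqref{repair node requirement3}: $(A_{t,h_\ell})_{t,\ell}$ is invertible, and left-multiplying by $\mathrm{blkdiag}(S_{i,\theta},\dots,S_{i,\theta})$ (full row rank $\frac{rN}{\theta}$) and factoring $S_{i,\theta}A_{t,h_\ell}=\tilde{A}_{t,h_\ell,i,\theta}R_{i,\theta}$ shows $(\tilde{A}_{t,h_\ell,i,\theta})_{t,\ell}\cdot\mathrm{blkdiag}(R_{i,\theta},\dots,R_{i,\theta})$ has rank $\frac{rN}{\theta}$, hence so does the left factor.

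Your treatment of the $\mathbf{u}=\mathbf{0}$ case is wrong as stated. Interchanging $j_\ell$ with a helper index changes the coefficient matrix $M$ itself (different block columns $\tilde{A}_{t,\cdot,i,\theta}$ appear), so a kernel vector of the original $M$ is not a kernel vector of the new one; ``rerunning the construction'' does not transfer anything. In fact this case needs no repair hypothesis at all: if $(\mathbf{0},\mathbf{w}_0,\dots,\mathbf{w}_{r-\theta-1})\in\ker M$ then $\sum_{\ell}\tilde{A}_{t,j_\ell,i,\theta}\mathbf{w}_\ell=\mathbf{0}$ for all $t$, and extending $\mathcal{D}$ to any $r$-subset and invoking the nonsingularity above forces every $\mathbf{w}_\ell=\mathbf{0}$. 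So every nonzero kernel vector automatically has $\mathbf{u}\neq\mathbf{0}$; the interchange argument should be deleted and replaced by this.

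The codeword-extension step is also incomplete. ``Solvability of this reduced system is extracted from the MDS hypothesis $\dots$ surject onto the target subspace'' is not an argument; MDS by itself gives surjectivity of the parity map over the full spaces $\mathbb{F}_q^N$, not over the confined subspaces $\ker R_{i,\theta}$, and a parameter count does not establish consistency. The clean fix again uses the nonsingularity fact: pick any $r$ helper indices $h_1,\dots,h_r\in\mathcal{H}$, set $\mathbf{g}_j=\mathbf{0}$ for the remaining helpers, and solve $\sum_{\ell=1}^{r}A_{t,h_\ell}\mathbf{g}_{h_\ell}=-A_{t,i}\mathbf{u}-\sum_{\ell=0}^{r-\theta-1}A_{t,j_\ell}\mathbf{g}_{j_\ell}=:\mathbf{c}_t$ for $t\in[0:r)$, which Lemma~1 makes uniquely solvable. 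Do not impose $\mathbf{g}_{h_\ell}\in\ker R_{i,\theta}$ as a constraint; it comes out automatically: applying $S_{i,\theta}$ and using \eqref{repair node requirement3} gives $\sum_{\ell}\tilde{A}_{t,h_\ell,i,\theta}\bigl(R_{i,\theta}\mathbf{g}_{h_\ell}\bigr)=S_{i,\theta}\mathbf{c}_t=\mathbf{0}$ (the last equality is exactly the kernel relation for $\mathbf{x}_0$), and nonsingularity of $(\tilde{A}_{t,h_\ell,i,\theta})_{t,\ell}$ forces $R_{i,\theta}\mathbf{g}_{h_\ell}=\mathbf{0}$. Without this extra step your construction only produces a codeword whose helper downloads lie in some kernel, not zero, and the contradiction with optimal repair does not follow. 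As submitted, the proposal identifies the right contradiction but leaves the load-bearing linear algebra unproven.
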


	\subsection{Standard Basis Sets}
	For any two positive integer $s,w\geq 2$, let $\{e_0,e_1,\cdots, e_{s^w-1}\}$ be the standard basis of $\mathbb{F}_q^{s^w}$, i.e.,
	\begin{equation}\label{Eqn_SB}
		e_a=(0,\cdots,0,1,0,\cdots,0),\,\,a\in [0:s^w),
	\end{equation}
	with only the $a$-th entry being nonzero.
	
	Given $a \in [0:s^w)$, denote $(a_{w-1},a_{w-2},\cdots,a_0)$ as its $s$-ary expansion, i.e.,
	\begin{eqnarray}\label{Eqn a_expansion}
		a=a_{w-1}s^{w-1}+a_{w-2}s^{w-2}+\cdots+a_0,
	\end{eqnarray}
where $a_i$ is the $i$-th element in the $s$-ary expansion of $a$. Throughout this paper, we do not distinguish the integer $a$ and its $s$-ary expansion if the context is clear.
	
	Based on \eqref{Eqn a_expansion}, we further define some subsets of the standard basis set $\{e_0,e_1,\cdots,e_{s^w-1}\}$ as
	\begin{eqnarray}\label{Eqn Vt}
		V_{i,u}=\{e_a|a_i=u, a\in [0:s^w)\}, & 0\le i<w, u\in [0:s).
	\end{eqnarray}
	\begin{Example}
		When $w=3$, $s=2$, Table \ref{table partition} gives $V_{i,u}$, $0\le i< 3$ and $0\le u < 2$.
		\begin{table*}[htbp]
			\centering
			\caption{An illustrative example of $V_{i,u}$}
			\label{table partition}
			\begin{tabular}{|c|c|c|c|c|c|c|c|}
				\hline $i$ & 0 & 1 & 2 & $i$ & 0 & 1 & 2\\
				\hline \multirow{4}{*}{$V_{i,0}$ }&$e_0$&$e_0$ & $e_0$ &\multirow{4}{*}{$V_{i,1}$ }&$e_1$&$e_2$ & $e_4$\\  & $e_2$&$e_1$&$e_1$&&$e_3$&$e_3$&$e_5$\\&$e_4$&$e_4$&$e_2$&& $e_5$&$e_6$&$e_6$\\&$e_6$&$e_5$&$e_3$&&$e_7$&$e_7$&$e_7$\\
				\hline
			\end{tabular}
		\end{table*}
	\end{Example}
	
	For easy of notation,  we also denote by $V_{i,u}$ the $s^{w-1}\times s^w$  matrix, whose rows are formed by vectors $e_a$
	in its corresponding sets,  such that $a$  is sorted in ascending order. For example, when $s=2$ and $w=3$, $V_{1,0}$ can be viewed as a $4\times 8$ matrix
	\begin{eqnarray*}
		V_{1,0}=\left(e_0^{\top}~ e_1^{\top} ~e_4^{\top}~ e_5^{\top}\right)^{\top}.
	\end{eqnarray*}
	
\subsection{Notations}	
	Throughout this paper, the following notations are used.
	\begin{itemize}
		\item For a matrix $Q$, define $Q(u,:)$, $Q(:,v)$ and $Q(u,v)$ as its $u$-th row vector, its $v$-th column vector and the entry in row $u$ and column $v$.
		\item For a matrix $Q$, define $\mbox{blkdiag}(Q,Q,\cdots,Q)_{t}$ as a block diagonal matrix with $Q$ occurring $t$ times.
		\item The symbols $\%$ denotes the modulo operation.
		\item For any positive $a$, denote $I_a$ the identity matrix of order $a$.
	\end{itemize}
	
	\section{A Generic Construction Method}\label{section MDS array code with dz optimal property}

%{\color{red}Should we just choose the base code has the $\delta_0$ optimal repair property? Then C1 can be removed.}	
	
	In this section, we propose a method that can transform an $(n,k)$ MDS array code with the $\delta_0$-optimal repair property for all nodes into a new $(n,k)$ MDS array code with the $\delta_{[0:m)}$-optimal repair property for a set of $\rho~(1\le \rho\le \delta_0)$ goal nodes (GNs), where these $\rho$ GNs are required to satisfy some specific conditions and $\delta_{[0:m)}\subseteq [2:r]$,  while keeping the repair property of the other $n-\rho$ remainder nodes (RNs) intact. Specifically, given an $(n,k)$ base code, let $\mathcal{G}$ be the set of indices of the $\rho$ GNs which we wish to endow with the $\delta_{[0:m)}$-optimal repair property.
		
	\subsection{The Generic Construction Method}\label{Section the generic SA}
In this subsection, we propose the generic construction method, which utilizes a known $(n,k)$ MDS array code $\mathbbmss{C}_0$ with sub-packetization level $\alpha N$ over $\mathbb{F}_q$ and $\delta_0$-optimal repair property as the base code, where $\alpha\ge 1$, $\delta_0\mid N$. Let $(A_{t,i})_{t\in [0:r),i\in [0:n)}$ be the parity-check matrix of base code $\mathbbmss{C}_0$ while the $ \frac{\alpha N}{\delta_z}\times \alpha N$ matrices $R_{i,\delta_z}$ and $S_{i,\delta_z}$, $i\in [0:n)$ respectively denote the $\delta_z$-repair matrix and $\delta_z$-select matrix of  base code $\mathbbmss{C}_0$ if it also has $\delta_z$-optimal repair property for some $\delta_z$ with $z\ge 1$.  
	For convenience, throughout this paper, we always set $N'=\frac{N}{\delta_0}$.

The following example shows an MDS array code that possesses the $\delta_0$-optimal repair property and will be chosen as the base code	throughout the examples of this paper.

			\begin{Example}\label{Exp set Piz for (1,2,3)}
The $(16,10)$ MDS array code in \cite{kumar} has sub-packetizaton level $N=2^8$ and $\delta_0$-optimal repair property for all nodes, and also satisfies some other specific properties which will be illustrated later, where $\delta_0=2$. It can be chosen as the base code $\mathbbmss{C}_0$,  the $\delta_0$-repair matrix $R_{i,\delta_0}$ and $\delta_0$-select matrix $S_{i,\delta_0}$ are defined by
			\begin{eqnarray*}
				R_{i,\delta_0}=S_{i,\delta_0}=V_{\lfloor\frac{i}{2}\rfloor,i\%2},\, 0\le i<16,
			\end{eqnarray*}
where $V_{j,0},V_{j,1}$ ($0\le j<8$) are given in \eqref{Eqn Vt}. 
\end{Example}

	Define
	\begin{eqnarray}\label{eqn notation l}
		l_z=\left\{\begin{array}{ll}
			\frac{\delta}{\delta_z}, & \mathrm{if~} 0\le z<m,\\
			0, & \mathrm{if~}z=m,
		\end{array}\right.
	\end{eqnarray}
	where  $\delta$ is defined in \eqref{eqn value delta}. %Without loss of generality, we assume {\color{blue}$\mathcal{G}\subset [0:k)$}. 
	
	The generic construction method is then carried out through the following two steps.\\
	\textbf{\textit{Step 1: An intermediate MDS array code $\mathbbmss{C}_1$ by space sharing $l_0$ instances of code $\mathbbmss{C}_0$.}}
	
Construct an intermediate MDS array code $\mathbbmss{C}_1$ with sub-packetization level $l_0 \alpha N$ by space sharing $l_0$ instances of the base code $\mathbbmss{C}_0$. Specifically, for each instance $a\in [0:l_0)$, the $t$-th PCG is of the form
	\begin{eqnarray*}
		\sum\limits_{i\in \mathcal{G}}A_{t,i}\mathbf{f}_i^{(a)}+\sum\limits_{i\in [0:n)\backslash(\mathcal{G}\cup \mathcal{R})}A_{t,i}\mathbf{f}_i^{(a)}+\sum\limits_{j\in \mathcal{R}}A_{t,j}\mathbf{f}_j'^{(a)}=\mathbf{0},\,t\in [0:r),
	\end{eqnarray*}
where $\mathcal{R}$ denotes any given $r$-subset of $[0:n)\backslash\mathcal{G}$, $\mathbf{f}_i^{(a)}$ and $\mathbf{f}_j'^{(a)}$ respectively denote the data stored at nodes $i$ and $j$ of an instance of the code $\mathbbmss{C}_0$ for $i\in [0:n)\backslash\mathcal{R}$,  $j\in \mathcal{R}$, and $a\in [0:l_0)$. \\
	\textbf{\textit{Step 2: Construct code $\mathbbmss{C}_2$  by appending some data of each goal node to the   PCGs  of $\mathbbmss{C}_1$}}
	
	Based on code $\mathbbmss{C}_1$, we construct the desired storage code $\mathbbmss{C}_2$ by appending the data $\mathbf{P}_{t,i}^{(a)}(i\in \mathcal{G})$ called \textit{appended-data} to the $t$-th PCG of instance $a$ of $\mathbbmss{C}_1$, which leads to new parity-check equations and means that the data stored at some $r$ nodes will be modified. By convention, we assume that the data $\mathbf{f}_j'^{(a)}$ stored at node $j\in \mathcal{R}$ of instance $a\in [0:l_0)$ is changed to $\mathbf{f}_j^{(a)}$ and the data stored at the other nodes is unchanged. Then the $t$-th PCG of instance $a$ of new code $\mathbbmss{C}_2$ is given by 
	\begin{eqnarray}\label{eqn t-PCG of code C3}
		\sum\limits_{i\in \mathcal{G}}
		(A_{t,i}\mathbf{f}_i^{(a)}+\mathbf{P}_{t,i}^{(a)})+\sum\limits_{j\in [0:n)\backslash(\mathcal{G}\cup \mathcal{R})}
		A_{t,j}\mathbf{f}_j^{(a)}+\sum\limits_{j\in \mathcal{R}}A_{t,j}\mathbf{f}_j^{(a)}=\mathbf{0},& a\in [0:l_0),t\in [0:r),
	\end{eqnarray}
	%	i.e.,
	%	\begin{eqnarray*}\label{enq parity data of code C3}
	%		\sum\limits_{j=k}^{n-1}\left(\begin{array}{c}
	%			A_{0,j}\\
	%			A_{1,j}\\
	%			\vdots\\
	%			A_{r-1,j}
	%		\end{array}\right)\mathbf{f}_j^{(a)}=\sum\limits_{j=k}^{n-1}\left(\begin{array}{c}
	%			A_{0,j}\\
	%			A_{1,j}\\
	%			\vdots\\
	%			A_{r-1,j}
	%		\end{array}\right)\mathbf{f}_j'^{(a)}-\sum\limits_{i=0}^{\rho-1}\left(\begin{array}{c}
	%			\mathbf{P}_{0,i}^{(a)}\\
	%			\mathbf{P}_{1,i}^{(a)}\\
	%			\vdots\\
	%			\mathbf{P}_{r-1,i}^{(a)}
	%		\end{array}\right),\,a\in [0:l_0),
	%	\end{eqnarray*}
	where
	\begin{itemize}
		\item  [P0.] The appended-data $\mathbf{P}_{t,i}^{(a)}$  is to be designed as a linear combination of $\mathbf{f}_i^{(l_z)},\mathbf{f}_i^{(l_z+1)},\cdots,\mathbf{f}_i^{(l_0-1)}$ if $l_{z+1}\le a<l_z$ with $z\in [1:m)$ and  $\mathbf{P}_{t,i}^{(a)}=\mathbf{0}$ if $l_1\le a<l_0$ for $t\in [0:r),i\in \mathcal{G}$.
	\end{itemize}
	
	Obviously, the new code $\mathbbmss{C}_2$ maintains the MDS property of base code $\mathbbmss{C}_0$.
	\begin{Theorem}\label{theorem the MDS property of code C3}
		The new $(n,k)$ code $\mathbbmss{C}_2$ maintains the MDS property of code $\mathbbmss{C}_0$.
	\end{Theorem}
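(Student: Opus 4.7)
The plan is to apply Lemma 1 directly: one must verify that for every $r$-subset $\mathcal{I}=\{i_0,\ldots,i_{r-1}\}\subset[0:n)$, the coefficient matrix obtained by restricting the parity-check system of $\mathbbmss{C}_2$ in \eqref{eqn t-PCG of code C3} to the unknowns $\{\mathbf{f}_{i_j}^{(a)}\}_{j\in[0:r),\,a\in[0:l_0)}$ is nonsingular. This matrix has order $l_0 r\alpha N$, matching the sub-packetization level $l_0\alpha N$ of $\mathbbmss{C}_2$. The strategy is to expose a block lower-triangular structure and invoke the MDS property of the base code $\mathbbmss{C}_0$ instance-by-instance.

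First I would order the block rows by the pair $(a,t)$ and the block columns by $(a,i)$, with the instance index $a$ listed in \emph{decreasing} order $a=l_0-1,l_0-2,\ldots,1,0$. Under this ordering, I claim the coefficient matrix is block lower triangular with $l_0$ diagonal blocks, each of order $r\alpha N$. Indeed, apart from the appended-data $\mathbf{P}_{t,i}^{(a)}$, equation \eqref{eqn t-PCG of code C3} at instance $a$ only couples unknowns of the same instance $a$; so the claim reduces to showing that $\mathbf{P}_{t,i}^{(a)}$ only involves $\mathbf{f}_i^{(b)}$ for $b>a$. By Property P0, this is immediate: for $a\in[l_1,l_0)$ we have $\mathbf{P}_{t,i}^{(a)}=\mathbf{0}$; for $a\in[0,l_1)$, taking the unique $z\in[1:m)$ with $l_{z+1}\le a<l_z$, the term $\mathbf{P}_{t,i}^{(a)}$ is a linear combination of $\mathbf{f}_i^{(b)}$ with $b\in[l_z,l_0)$, and since $a<l_z\le b$, every such $b$ is strictly larger than $a$, placing its block column earlier in the chosen ordering. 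This is precisely the condition for block lower triangularity.

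It then remains to observe that the $a$-th diagonal block is exactly the block matrix $(A_{t,i_j})_{t\in[0:r),\,j\in[0:r)}$, i.e., the same $r\times r$ block submatrix of the parity-check matrix of the base code $\mathbbmss{C}_0$ whose nonsingularity is guaranteed by the MDS property of $\mathbbmss{C}_0$ via Lemma 1. Since a block lower triangular matrix is nonsingular if and only if each of its diagonal blocks is, the full coefficient matrix is nonsingular, and another appeal to Lemma 1 yields the MDS property of $\mathbbmss{C}_2$. The one point requiring care is the indexing: the appended-data pattern must be verified to be strictly upper in the instance index, so the key obstacle is simply to parse Property P0 correctly across the nested intervals $[l_{z+1},l_z)$; once that dependency is made explicit, the argument is purely structural and requires no explicit computation of the nonzero entries of $\mathbf{P}_{t,i}^{(a)}$.
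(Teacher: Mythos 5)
Your proof is correct and relies on exactly the same key observation as the paper's, namely that by P0 the appended-data $\mathbf{P}_{t,i}^{(a)}$ depends only on instances $b>a$, so instances can be recovered from highest to lowest. The paper phrases this as an induction that reconstructs the erased data batch by batch over the intervals $[l_{z+1}:l_z)$, whereas you repackage it as block lower triangularity of the relevant $l_0 r\alpha N\times l_0 r\alpha N$ coefficient matrix and invoke Lemma 1 twice; both are valid renderings of the same structural fact.
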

	\begin{proof}
		The new code $\mathbbmss{C}_2$ possesses the  MDS property if the data stored in any $r$ out of $n$ nodes can be obtained by the remaining $k$ nodes.  Let $i_0,i_1,\cdots,i_{r-1}$  be the indices of those $r$ nodes.
		For any $a\in [0:l_0)$,  we can obtain
		\begin{eqnarray}\label{eqn MDS for resultant code C3}
			\sum\limits_{v=0}^{r-1}A_{t,i_v}\mathbf{f}_{i_v}^{(a)}+\sum\limits_{v=0,i_v\in \mathcal{G}}^{r-1}\mathbf{P}_{t,i_v}^{(a)}=-\underbrace{\sum\limits_{v=0}^{k-1}A_{t,j_v}\mathbf{f}_{j_v}^{(a)}-\sum\limits_{v=0,j_v\in \mathcal{G}}^{k-1}\mathbf{P}_{t,j_v}^{(a)}}_{\mathrm{known~data}},\,\,t\in [0:r),
		\end{eqnarray}
		from \eqref{eqn t-PCG of code C3}, where $\{j_0,j_1,\cdots,j_{k-1}\}=[0:n)\backslash\{i_0,i_1,\cdots,i_{r-1}\}$ and the two terms on RHS of \eqref{eqn MDS for resultant code C3} are  determined by the data  stored at the remaining $k$  nodes.
We prove the MDS property by induction in the following.

i)		
		According to P0, $\mathbf{P}_{t,j}^{(a)}=\mathbf{0}$ for $a\in [l_1:l_0)$, $j\in \mathcal{G}$, thus we can directly obtain $\mathbf{f}_{i_v}^{(a)},v\in [0:r),a\in [l_1:l_0)$ from \eqref{eqn MDS for resultant code C3} by means of the MDS property of the code $\mathbbmss{C}_0$.
		
ii) Suppose that the data $\mathbf{f}_{i_v}^{(a)}, v\in [0:r)$, $a\in [l_z:l_0)$ have been obtained for some $z\in [1:m)$, then for $i\in \{i_0,i_1,\cdots,i_{r-1}\}\cap \mathcal{G}$, we can compute $\mathbf{P}_{t,i}^{(a)}$ ($a\in [l_{z+1}:l_z)$) from $\mathbf{f}_i^{(l_z)},\mathbf{f}_i^{(l_z+1)},\cdots,\mathbf{f}_i^{(l_0-1)}$ according to P0. That is, the second  term on left hand side (LHS) of \eqref{eqn MDS for resultant code C3} is known, then we are able to solve  $\mathbf{f}_{i_v}^{(a)}$, $a\in [l_{z+1}:l_z)$, $v\in [0:r)$ by means of the MDS property of the code $\mathbbmss{C}_0$.

By i) and ii), we thus can reconstruct $\mathbf{f}_{i_0}^{(a)},\mathbf{f}_{i_1}^{(a)},\cdots,\mathbf{f}_{i_{r-1}}^{(a)}$ for all $a\in [0:l_0)$, i.e., the data stored at the $r$  nodes.
	\end{proof}
	
	\subsection{The Precise Form of Appended-data}
	In this subsection, we first introduce two sets and analyze their properties, by which we further give the precise form of the appended-data $\mathbf{P}_{t,j}^{(a)}$.%which will be crucial for determining the appended-data in \eqref{eqn t-PCG of code C3}. Secondly, to guarantee  optimal repair property of code $\mathbbmss{C}_2$, we further give the precise form of appended-data.
	
	For a given $u\in [0:\delta_0)$, define an $\alpha N'\times \alpha N$ matrix  as 
	\begin{eqnarray}\label{eqn the matrix Phi}
		\Phi_{\alpha,u}=\mbox{blkdiag}(\Delta_u,\Delta_u,\cdots,\Delta_u)_\alpha,
	\end{eqnarray}
	where $\Delta_u$ is an $N'\times N$ matrix defined by
	\begin{eqnarray}\label{eqn the matrix Delta}
		\Delta_u=(\mathbf{0}_{N'\times N'},\cdots,\mathbf{0}_{N'\times N'},I_{N'},\mathbf{0}_{N'\times N'},\cdots,\mathbf{0}_{N'\times N'}),u\in [0:\delta_0)
	\end{eqnarray}
	with only the $u$-th block entry being nonzero matrix. 

	For any column vector $\mathbf{f}_i^{(a)}$ of length $\alpha N$, we divide it into $\delta_0$ equal parts $\mathbf{f}_i^{(a)}[0],\mathbf{f}_i^{(a)}[1],\cdots,\mathbf{f}_i^{(a)}[\delta_0-1]$, i.e., 
	%	\begin{eqnarray*}
	%		\mathbf{b}=\left(\begin{array}{c}
	%			\mathbf{b}_0\\
	%			\mathbf{b}_1\\
	%			\vdots\\
	%			\mathbf{b}_{\alpha-1}
	%		\end{array}\right),
	%	\end{eqnarray*}
	%	where $\mathbf{b}_i$ is column vector of length $N$ and named the $i$-th segment of $\mathbf{b}$ for $i\in [0:\alpha)$. Next, we divide $\mathbf{b}_i$ into $\delta_0$ equal parts as
	%	\begin{eqnarray*}
	%		\mathbf{b}_i=\left(\begin{array}{c}
	%			\mathbf{b}_i[0]\\
	%			\mathbf{b}_i[1]\\
	%			\vdots\\
	%			\mathbf{b}_i[\delta_0-1]
	%		\end{array}\right).
	%	\end{eqnarray*}
	%	Then, collecting all the vectors $\mathbf{b}_i[u]$  for  fixed $u\in [0:\delta_0)$ and all $i\in [0:\alpha)$, we get a column vector $\mathbf{b}[u]$ of length $\alpha N'$ as 
	%	\begin{eqnarray*}
	%		\mathbf{b}[u]=\left(\begin{array}{c}
	%			\mathbf{b}_0[u]\\
	%			\mathbf{b}_1[u]\\
	%			\vdots\\
	%			\mathbf{b}_{\alpha-1}[u]
	%		\end{array}\right)
	%	\end{eqnarray*}
	%	which together with \eqref{eqn the matrix Phi} and \eqref{eqn the matrix Delta} derives
	\begin{eqnarray}\label{eqn vector b[u]}
		\mathbf{f}_i^{(a)}[u]=\Phi_{\alpha,u}\mathbf{f}_i^{(a)} \textrm{ for } u\in [0:\delta_0),i\in [0:n) \textrm{ and }a\in[0:l_0),
	\end{eqnarray}
	where $\mathbf{f}_i^{(a)}[u]$ is a column vector of length $\alpha N'$.
	
	For any two column vectors $\mathbf{f}_i^{(a)}[u]$ and $\mathbf{f}_i^{(b)}[v]$,  we say that $\mathbf{f}_i^{(a)}[u]\prec\mathbf{f}_i^{(b)}[v]$ if $a<b$ or $a=b$ and $u<v$, where $i\in [0:n)$, $a,b\in [0:l_0)$ and $u,v\in [0:\delta_0)$. For any $i\in \mathcal{G}$ and $j\in [1:m)$, define $\mathcal{P}_{i,j}$ as an ordered set with the set elements drawing from $\mathbf{f}_i^{(a)}[u]$, $u\in [0:\delta_0)$,  $a\in[0:l_0)$ and placed in ascending order w.r.t. $\prec$, which are generated through the following Algorithm \ref{Alg the set Pij}.  
	\begin{algorithm}[htbp]
		\caption{The way to generate set $\mathcal{P}_{i,j}$, $i\in \mathcal{G}$ and $j\in [1:m)$} \label{Alg the set Pij}
		\begin{algorithmic}[1]
			\Ensure $\mathcal{P}_{i,j}$, $i\in \mathcal{G},j\in [1:m)$, whose elements are column vectors of length $\alpha N'$.
			\For{$j=1$; $j<m$; $j++$}
			\State Set $\mathcal{P}_{i,j}=\{\mathbf{f}_{i}^{(a)}[u]|a\in [l_j:l_{j-1}),u\in [0:\delta_0)\}$;
			\EndFor
			\State Dividing $\mathcal{P}_{i,1}$ into $l_1$ disjoint subsets $\mathcal{P}_{i,1}^{(0)},\mathcal{P}_{i,1}^{(1)},\cdots,\mathcal{P}_{i,1}^{(l_1-1)}$ of equal size.
			\For{$j=2$; $j<m$; $j++$}
			\State $\mathcal{P}_{i,j}:=\mathcal{P}_{i,j}\cup\bigcup\limits_{a=l_j}^{l_{j-1}-1}(\mathcal{P}_{i,1}^{(a)}\cup\cdots\cup\mathcal{P}_{i,j-1}^{(a)})$;
			\State Dividing $\mathcal{P}_{i,j}$ into $l_j$ disjoint subsets $\mathcal{P}_{i,j}^{(0)},\mathcal{P}_{i,j}^{(1)},\cdots,\mathcal{P}_{i,j}^{(l_j-1)}$ of equal size.
			\EndFor
		\end{algorithmic}  	
	\end{algorithm}

Strictly speaking, to ensure that Algorithm \ref{Alg the set Pij} is valid, one needs $\left|\mathcal{P}_{i,j}\right |=(\delta_j-\delta_{j-1})l_j$ for $i\in \mathcal{G}$ and $j\in [1:m)$, which will be shown in P3.

		\begin{Example}\label{Exp set Piz for (2,3)}
Based on the  base code in Example  \ref{Exp set Piz for (1,2,3)}, suppose the goal is to obtain a $(16,10)$ MDS array code $\mathbbmss{C}_2$ having $\{2,3\}$-optimal repair property for the first two nodes, i.e., $\rho=2$ and $\mathcal{G}=[0:2)$. Here $\alpha=1,m=2$, $\delta_0=2,\delta_1=3$, then $l_0=3,l_1=2$ by \eqref{eqn notation l}. By means of  Algorithm \ref{Alg the set Pij}, the sets $\mathcal{P}_{i,j}$ and $\mathcal{P}_{i,j}^{(a)}$, $j\in [1:2)$, $a\in [0:2)$ corresponding to GN $i\in \mathcal{G}$ are
			\begin{eqnarray*}
				\mathcal{P}_{i,1}=\{\mathbf{f}_i^{(2)}[0],\mathbf{f}_i^{(2)}[1]\},\,\,\mathcal{P}_{i,1}^{(0)}=\{\mathbf{f}_i^{(2)}[0]\},\,\,\mathcal{P}_{i,1}^{(1)}=\{\mathbf{f}_i^{(2)}[1]\}.
			\end{eqnarray*}
		\end{Example}	
	\begin{Example}\label{Exp set Piz}
		%Choose the $(16,10)$ MDS array code with sub-packetizaton level $N=2^8$ as base code $\mathbbmss{C}_0$. It only has $2$-optimal repair property for all nodes,  whose  $2$-repair matrix $R_{i,\delta_0}$ and $2$-select matrix $S_{i,\delta_0}$ are defined by
		%\begin{eqnarray*}
		%	R_{i,\delta_0}=S_{i,\delta_0}=V_{\lfloor\frac{i}{2}\rfloor,i\%2},\, 0\le i<16
		%\end{eqnarray*}
		%where $V_{j,0},V_{j,1}$ ($0\le j<8$) are given in \eqref{Eqn Vt}. 
		
		Based on the  base code in Example  \ref{Exp set Piz for (1,2,3)}, suppose the goal is to obtain a $(16,10)$ MDS array code $\mathbbmss{C}_2$ with $\{2,3,4,6\}$-optimal repair property for the first two nodes, i.e., $\rho=2$ and $\mathcal{G}=[0:2)$.
		In this case, $\alpha=1,m=4$, $\delta_0=2,\delta_1=3,\delta_2=4,\delta_4=6$, then $l_0=6,l_1=4,l_2=3,l_3=2$ by \eqref{eqn notation l}.
		By means of  Algorithm \ref{Alg the set Pij}, the sets $\mathcal{P}_{i,j}$ and $\mathcal{P}_{i,j}^{(a)}$, $j\in [1:4)$, $a\in [0:4)$ of GN $i\in \mathcal{G}$ are given in Table \ref{table sets of P} and Table \ref{table sets of P'} respectively.
		\begin{table}[htbp]
			\begin{center}
				\caption{The sets $\mathcal{P}_{i,j}$  of GN $i$ for the code in Example \ref{Exp set Piz}, where $j\in [1:4)$}\label{table sets of P}
				\begin{tabular}{|c|c|c|c|}
					\hline $j$& 1 & 2 & 3 \\
					\hline $\mathcal{P}_{i,j}$ & $\{\mathbf{f}_{i}^{(4)}[0],\mathbf{f}_{i}^{(4)}[1],\mathbf{f}_{i}^{(5)}[0],\mathbf{f}_{i}^{(5)}[1]\}$ & $\{\mathbf{f}_{i}^{(3)}[0],\mathbf{f}_i^{(3)}[1],\mathbf{f}_i^{(5)}[1]\}$ & $\{\mathbf{f}_i^{(2)}[0],\mathbf{f}_i^{(2)}[1],\mathbf{f}_i^{(5)}[0],\mathbf{f}_i^{(5)}[1]\}$\\
					\hline
				\end{tabular}
			\end{center}
		\end{table}	
		\begin{table}[htbp]
			\begin{center}
				\caption{The sets $\mathcal{P}_{i,j}^{(a)}$ of GN $i$ for the code in Example \ref{Exp set Piz}, where $j\in [1:4)$, $a\in [0:4)$ and $a<l_j$}\label{table sets of P'}
				\begin{tabular}{|c|c|c|c|}
					\hline \diagbox{$a$}{$j$}& 1 & 2 & 3 \\
					\hline 0 & $\{\mathbf{f}_{i}^{(4)}[0]\}$ & $\{\mathbf{f}_{i}^{(3)}[0]\}$ & $\{\mathbf{f}_i^{(2)}[0],\mathbf{f}_i^{(2)}[1]\}$\\
					\hline 1 & $\{\mathbf{f}_{i}^{(4)}[1]\}$ & $\{\mathbf{f}_{i}^{(3)}[1]\}$ & $\{\mathbf{f}_i^{(5)}[0],\mathbf{f}_i^{(5)}[1]\}$\\
					\hline 2 & $\{\mathbf{f}_{i}^{(5)}[0]\}$ & $\{\mathbf{f}_{i}^{(5)}[1]\}$ &\\
					\hline 3 & $\{\mathbf{f}_{i}^{(5)}[1]\}$ & & \\
					\hline
				\end{tabular}
			\end{center}
		\end{table}
	\end{Example}

	According to  Algorithm \ref{Alg the set Pij}, we have the following  properties, whose proofs are given in Appendix \ref{appen the proof of property 1}.

	\begin{Property}\label{Pro the number of elements of Piz}
		Given $i\in \mathcal{G}$ and  $j\in [1:m)$,
		\begin{itemize}
			\item [P1.]  $\mathcal{P}_{i,1}\cup\mathcal{P}_{i,2}\cup \cdots \cup \mathcal{P}_{i,j}=\bigcup\limits_{a=0}^{l_j-1}(\mathcal{P}_{i,1}^{(a)}\cup\mathcal{P}_{i,2}^{(a)}\cup\cdots\cup\mathcal{P}_{i,j}^{(a)})=\{\mathbf{f}_i^{(a)}[u]|a\in  [l_j:l_0),u\in [0:\delta_0)\}$; 
			\item [P2.] When $j\ge 2$, 
			$\mathcal{P}_{i,j} \subseteq \{\mathbf{f}_i^{(a)}[u]|a\in [l_j:l_z),u \in [0:\delta_0)\} \cup \bigcup\limits_{a=l_{j}}^{l_z-1}(\mathcal{P}_{i,1}^{(a)}\cup \cdots \cup \mathcal{P}_{i,z}^{(a)})$ for all $z\in [1:j)$; 
			\item [P3.]  $|\mathcal{P}_{i,j} |=(\delta_j-\delta_{j-1})l_j$ and $|\mathcal{P}_{i,j}^{(a)} |=\delta_j-\delta_{j-1}$  for $a\in [0:l_j)$.
		\end{itemize}
	\end{Property}

	Now, we present the precise form of appended-data based on the sets $\mathcal{P}_{i,j}^{(a)}$ for $i\in \mathcal{G},1\le j<m,a\in [0:l_j)$. For convenience of notation, we also denote $\mathcal{P}_{i,j}^{(a)}$, $i\in \mathcal{G},1\le j<m,a\in [0:l_j)$ the column vector of length $(\delta_j-\delta_{j-1})\alpha N'$, which is formed by its elements in ascending order. Then for $i\in \mathcal{G}$, $t\in [0:r)$ and $a\in [0:l_0)$, we define $\mathbf{P}_{t,i}^{(a)}$ as
	\begin{eqnarray}\label{eqn_specific_form_of_P}
		\mathbf{P}_{t,i}^{(a)}=\left\{
		\begin{array}{ll}
			\sum\limits_{j=1}^{w}(K_{t,i,\delta_{j-1}-\delta_0},K_{t,i,\delta_{j-1}-\delta_0+1},\cdots,K_{t,i,\delta_j-\delta_0-1})\mathcal{P}_{i,j}^{(a)}, & \mathrm{if~}a\in [l_{w+1}:l_w),w\in [1:m),\\
			\mathbf{0}, &\mathrm{otherwise},
		\end{array}
		\right.
	\end{eqnarray}
	%where  the $\alpha N \times \alpha N'$ matrix  $K_i$  and the elements $\zeta_u$ ($u\in [0:\delta_{m-1}-\delta_0)$) will be respectively given in C2 and C3 in  the next subsection. 
	where the $\alpha N \times \alpha N'$ matrices  $K_{t,i,v}$ for $t\in [0:r),i\in \mathcal{G},v\in [0:\delta_{m-1}-\delta_0)$ are called  \textit{key matrcies} 
	of node $i$. According to P1, for $i\in \mathcal{G}$, $t\in [0:r)$, given $z\in [1:m)$ and $a\in [l_{z+1}:l_z)$, the appended-data $\mathbf{P}_{t,i}^{(a)}$ defined by \eqref{eqn_specific_form_of_P} is a linear combination of $\mathbf{f}_i^{(l_z)},\mathbf{f}_i^{(l_z+1)},\cdots,\mathbf{f}_i^{(l_0-1)}$, i.e., P0 holds. That is, $\mathbf{P}_{t,i}^{(a)}$  is well defined for \eqref{eqn t-PCG of code C3}.
	
	%\begin{Lemma}\label{lem roughly form of appended-data}
	%For a given $i\in \mathcal{G}$, $z\in [1:m)$ and $a\in [l_{z+1}:l_z)$, the appended-data $\mathbf{P}_{t,i}^{(a)}$ in \eqref{eqn_specific_form_of_P} is satisfied P0, and the coefficient matrix corresponding  the column vector $\mathbf{f}_i^{(b)},b\in [l_z:l_0)$ is a linearly combination of some matrices in set $\{\zeta_j^tK_iW_{\alpha,u}|u\in [0:\delta_0),j\in [0:\mu_z)\}$, where
	%\begin{eqnarray}\label{Eqn def muz}
	%	\mu_z=\delta_z-\delta_0
	%\end{eqnarray}
	%  and $W_{\alpha,u}$ is a $\frac{\alpha N}{\delta_0}\times \alpha N$ matrix of the form
	%  \begin{eqnarray}\label{eqn matrix W}
	%     W_{\alpha,u}=\mathrm{blkdiag}(\underbrace{\Delta_u,\Delta_u,\cdots,\Delta_u}_{\alpha}) \mathrm{~with~} \Delta_u=(\mathbf{0}_{N/\delta_0\times uN/\delta_0},I_{N/\delta_0},\mathbf{0}_{N/\delta_0\times (\delta_0-u-1)N/\delta_0}).
	%\end{eqnarray}
	%\end{Lemma}
	%\begin{proof}
	%  The proof is given in Appendix.
	%\end{proof}
	%According to Lemma \ref{lem roughly form of appended-data}, $\mathbf{P}_{t,i}^{(a)}$  is well defined for \eqref{eqn t-PCG of code C3}.
	
	The following two examples illustrate the whole process of our method.
		\begin{Example}\label{Exp (16,10) MDS array code C3 (2,3)}
			Following up from Example \ref{Exp set Piz for (2,3)}, by applying the generic construction method, we can obtain a $(16,10)$ MDS array code $\mathbbmss{C}_2$ with $\{2,3\}$-optimal repair property for the first two nodes, which is defined by the following parity-check equations:
			\begin{eqnarray*}\label{eqn t-th PCG of (16,10) MDS array code C3 (2,3)}
				\left(\hspace{-2mm}\begin{array}{l}
					A_{t,0}\mathbf{f}_0^{(0)}+\zeta_0^tV_{0,0}^\top\mathbf{f}_{0}^{(2)}[0]\\		
					A_{t,0}\mathbf{f}_0^{(1)}+\zeta_0^tV_{0,0}^\top\mathbf{f}_{0}^{(2)}[1]\\
					A_{t,0}\mathbf{f}_0^{(2)}
				\end{array}\hspace{-2mm}\right)+\left(\hspace{-2mm}\begin{array}{l}
					A_{t,1}\mathbf{f}_1^{(0)}+\zeta_0^tV_{0,1}^\top\mathbf{f}_{1}^{(2)}[0]\\		
					A_{t,1}\mathbf{f}_1^{(1)}+\zeta_0^tV_{0,1}^\top\mathbf{f}_{1}^{(2)}[1]\\
					A_{t,1}\mathbf{f}_1^{(2)}
				\end{array}\hspace{-2mm}\right)+\sum\limits_{j=2}^{15}\left(\hspace{-2mm}\begin{array}{c}
					A_{t,j}\mathbf{f}_j^{(0)}\\
					A_{t,j}\mathbf{f}_j^{(1)}\\
					A_{t,j}\mathbf{f}_j^{(2)}
				\end{array}\hspace{-2mm}\right)=\mathbf{0},&t\in [0:6),
			\end{eqnarray*}
		where the $N\times \frac{N}{2}$ key matrices are
			\begin{eqnarray*}
				K_{t,0,0}=\zeta_0^tV_{0,0}^\top,K_{t,1,0}=\zeta_0^tV_{0,1}^\top,&t\in [0:r)
			\end{eqnarray*}
for $\zeta_0\in \mathbb{F}_q$, and the appended-data are
\begin{equation*}
\mathbf{P}_{t,i}^{(0)}= \zeta_0^tV_{0,i}^\top\mathbf{f}_{i}^{(2)}[0], ~\mathbf{P}_{t,i}^{(1)}= \zeta_0^tV_{0,i}^\top\mathbf{f}_{i}^{(2)}[1] \mbox{~~for~~}i=0,1.
\end{equation*}
		\end{Example}
	
	\begin{Example}\label{Exp (16,10) MDS array code C3}
		Following up from Example \ref{Exp set Piz}, through the generic construction method, we can obtain a $(16,10)$ MDS array code $\mathbbmss{C}_2$ with $\{2,3,4,6\}$-optimal repair property for the first two nodes, which is defined by the following parity-check equations:
		\begin{eqnarray}\label{eqn t-th PCG of (16,10) MDS array code C3}
			\sum\limits_{i=0}^{1}\left(\hspace{-2mm}\begin{array}{l}
				A_{t,i}\mathbf{f}_i^{(0)}+\zeta_0^tV_{0,i}^\top\mathbf{f}_{i}^{(4)}[0]+\zeta_1^tV_{0,i}^\top\mathbf{f}_{i}^{(3)}[0]+\zeta_2^tV_{0,i}^\top\mathbf{f}_{i}^{(2)}[0]+ \zeta_3^tV_{0,i}^\top\mathbf{f}_{i}^{(2)}[1]\\		
				A_{t,i}\mathbf{f}_i^{(1)}+\zeta_0^tV_{0,i}^\top\mathbf{f}_{i}^{(4)}[1]+\zeta_1^tV_{0,i}^\top\mathbf{f}_{i}^{(3)}[1]+\zeta_2^tV_{0,i}^\top\mathbf{f}_{i}^{(5)}[0]+ \zeta_3^tV_{0,i}^\top\mathbf{f}_{i}^{(5)}[1]\\
				A_{t,i}\mathbf{f}_i^{(2)}+\zeta_0^tV_{0,i}^\top\mathbf{f}_{i}^{(5)}[0]+\zeta_1^tV_{0,i}^\top\mathbf{f}_{i}^{(5)}[1]\\
				A_{t,i}\mathbf{f}_i^{(3)}+\zeta_0^tV_{0,i}^\top\mathbf{f}_{i}^{(5)}[1]\\
				A_{t,i}\mathbf{f}_i^{(4)}\\
				A_{t,i}\mathbf{f}_i^{(5)}
			\end{array}\hspace{-2mm}\right)+\sum\limits_{j=2}^{15}\left(\hspace{-2mm}\begin{array}{c}
				A_{t,j}\mathbf{f}_j^{(0)}\\
				A_{t,j}\mathbf{f}_j^{(1)}\\
				A_{t,j}\mathbf{f}_j^{(2)}\\
				A_{t,j}\mathbf{f}_j^{(3)}\\
				A_{t,j}\mathbf{f}_j^{(4)}\\
				A_{t,j}\mathbf{f}_j^{(5)}\\
			\end{array}\hspace{-2mm}\right)=\mathbf{0}
		\end{eqnarray}
		for $t \in [0:6)$, where the $N\times \frac{N}{2}$ key matrices are 
		\begin{eqnarray*}
				K_{t,0,v}=\zeta_v^tV_{0,0}^\top,K_{t,1,0}=\zeta_v^tV_{0,1}^\top,&t\in [0:r),v\in [0:4),
			\end{eqnarray*}	
with $\zeta_0,\zeta_1,\zeta_2$ and $\zeta_3$ being four distinct elements in $\mathbb{F}_q$.
		\end{Example}
	
\subsection{Repair Property}
	In this subsection, we show that GN $i$ possesses the $\delta_{[0:m)}$-optimal repair property and RN $j$ maintains the same optimal repair property as that of base code for all $i\in \mathcal{G}$ and $j\in [0:n)\backslash\mathcal{G}$.  Particularly,
	if node $i$ in the base code $\mathbbmss{C}_0$ has $\delta_z$-optimal repair property for $z\in [0:m)$, then let 
the   $\frac{\alpha N}{\delta_z}\times \alpha N$ full-rank matrices $R_{i,\delta_z}$ and $S_{i,\delta_z}$ denote the $\delta_z$-repair matrix and $\delta_z$-select matrix, respectively.

	Consider the repair of node $i\in [0:n)$ by connecting $d_z=k+\delta_z-1$ surviving nodes where $z\in [0:m)$.  Let the $\frac{l_0\alpha N}{\delta_z}\times l_0\alpha N$ full-rank matrices $R_{i,\delta_z}'$ and $S_{i,\delta_z}'$ respectively be the $\delta_z$-repair matrix  and $\delta_z$-select matrix of node $i$ of code $\mathbbmss{C}_2$  given by
	\begin{eqnarray}\label{eqn the repair,select matrix of C3 for goal nodes}
		R_{i,\delta_z}'=\left\{\begin{array}{ll}
			\left(\begin{array}{c;{2pt/2pt}c}
				\underbrace{\begin{array}{ccc}
						R_{i,\delta_0} & &\\
						& \ddots &\\
						& & R_{i,\delta_0}
				\end{array}}_{l_z\times l_z} & \underbrace{\begin{array}{ccc}
						\mathbf{0}_{\alpha N'\times \alpha N} & \cdots & \mathbf{0}_{\alpha N'\times \alpha N}\\
						\vdots & \ddots & \vdots\\
						\mathbf{0}_{\alpha N'\times \alpha N} & \cdots & \mathbf{0}_{\alpha N'\times \alpha N}
				\end{array}}_{l_z\times (l_0-l_z)}
			\end{array}\right), & \textrm{if~}i\in\mathcal{G},\\
			\mbox{blkdiag}(R_{i,\delta_z},R_{i,\delta_z},\cdots,R_{i,\delta_z})_{l_0}, & \textrm{otherwise},
		\end{array}\right.
	\end{eqnarray}	
	and
	\begin{eqnarray}\label{eqn repair,select matrix of remainder nodes}		
		S_{i,\delta_z}'=\left\{\begin{array}{ll}
			\left(\begin{array}{c;{2pt/2pt}c}
				\underbrace{\begin{array}{ccc}
						S_{i,\delta_0} & &\\
						& \ddots &\\
						& & S_{i,\delta_0}
				\end{array}}_{l_z\times l_z} & \underbrace{\begin{array}{ccc}
						\mathbf{0}_{\alpha N'\times \alpha N} & \cdots & \mathbf{0}_{\alpha N'\times \alpha N}\\
						\vdots & \ddots & \vdots\\
						\mathbf{0}_{\alpha N'\times \alpha N} & \cdots & \mathbf{0}_{\alpha N'\times \alpha N}
				\end{array}}_{l_z\times (l_0-l_z)}
			\end{array}\right),& \textrm{if~}i\in\mathcal{G},\\
			\mbox{blkdiag}(S_{i,\delta_z},S_{i,\delta_z},\cdots,S_{i,\delta_z})_{l_0}, & \textrm{otherwise}.
		\end{array}\right.
	\end{eqnarray}

	In other words, when $d_z$ surviving nodes are connected, we use the equations obtained by multiplying $S_{i,\delta_s}$ on both sides of the equations in \eqref{eqn t-PCG of code C3} to recover the data stored at node $i$, i.e.,
	\begin{eqnarray*}\label{Eqn the product between Sidz and PCG of C3}
		S_{i,\delta_s}A_{t,i}\mathbf{f}_i^{(a)}+\sum\limits_{j=0,j\ne i}^{n-1}S_{i,\delta_s}A_{t,j}\mathbf{f}_j^{(a)}+\sum\limits_{j\in \mathcal{G}}S_{i,\delta_s}\mathbf{P}_{t,j}^{(a)}=\mathbf{0},& t\in [0:r)
	\end{eqnarray*}
	where $s=0,a\in [0:l_z)$ if $i\in \mathcal{G}$ and $s=z,a\in [0:l_0)$ otherwise. Substituting \eqref{repair node requirement3} into the above LSEs, we then get
	\begin{eqnarray}\label{eqn specified form of a-SLE}
		S_{i,\delta_s}A_{t,i}\mathbf{f}_i^{(a)}+\sum\limits_{j=0,j\ne i}^{n-1}\tilde{A}_{t,j,i,\delta_s}R_{i,\delta_s}\mathbf{f}_j^{(a)}+\sum\limits_{j\in \mathcal{G}}S_{i,\delta_s}\mathbf{P}_{t,j}^{(a)}=\mathbf{0}, &t\in [0:r),
	\end{eqnarray}
	where $\tilde{A}_{t,j,i,\delta_s}$ is an $\frac{\alpha N}{\delta_s}\times \frac{\alpha N}{\delta_s}$ matrix defined in \eqref{repair node requirement3}.
	
	First of all, we propose the repair procedure of GNs. To this end, node $i\in \mathcal{G}$ in base code $\mathbbmss{C}_0$ is required to satisfy  the following conditions.
\begin{itemize}
			%\item [C1.] GN $i$ has $\delta_0$-optimal repair/access property in base code $\mathbbmss{C}_0$;
			\item [C1.] $S_{j,\delta_0}K_{t,i,v}=\mathbf{0}$ for any $i,j\in \mathcal{G}$ with $i\ne j$, $t\in [0:r)$ and $v\in [0:\delta_{m-1}-\delta_0)$;
			\item [C2.] For any $i\in \mathcal{G}$, $z\in [1:m)$ and $\mathcal{D}_z=\{j_0,j_1,\cdots,j_{r-\delta_z-1}\}\subset [0:n)\backslash\{i\}$, the $r\alpha N'\times r\alpha N'$ matrix
			\begin{eqnarray} \label{eqn matrix MijDz}
				M_{i,\mathcal{D}_z}=\left(\begin{array}{c;{2pt/2pt}ccc;{2pt/2pt}ccc}
					S_{i,\delta_0}A_{0,i}  & \tilde{A}_{0,j_0,i,\delta_0} & \cdots & \tilde{A}_{0,j_{r-\delta_z-1},i,\delta_0} & S_{i,\delta_0}K_{0,i,0} & \cdots & S_{i,\delta_0}K_{0,i,\delta_z-\delta_0-1} \\
					S_{i,\delta_0}A_{1,i}  & \tilde{A}_{1,j_0,i,\delta_0} & \cdots & \tilde{A}_{1,j_{r-\delta_z-1},i,\delta_0} & S_{i,\delta_0}K_{1,i,0} & \cdots & S_{i,\delta_0}K_{1,i,\delta_z-\delta_0-1}\\
					\vdots & \vdots & \ddots & \vdots & \vdots & \ddots & \vdots\\
					S_{i,\delta_0}A_{r-1,i}  & \tilde{A}_{r-1,j_0,i,\delta_0} & \cdots & \tilde{A}_{r-1,j_{r-\delta_z-1},i,\delta_0} & S_{i,\delta_0}K_{r-1,i,0} & \cdots & S_{i,\delta_0}K_{r-1,i,\delta_z-\delta_0-1}
				\end{array}\right)
			\end{eqnarray}
			is nonsingular over $\mathbb{F}_q$, 
\end{itemize}
where $K_{t,i,v}$ are key matrices of node $i$. Particularly, we also define a $r\alpha N'\times r\alpha N'$ matrix $M_{i,\mathcal{D}_0}$ as in \eqref{eqn matrix MijDz} with $z=0$.

	\textit{\textbf{The Repair Procedure of GNs}:} Assume that GN $i~(i\in\mathcal{G})$ fails and $d_z=k+\delta_z-1$ helper nodes are connected for any given $z\in [0:m)$, then node $i$ is repaired as follows:
	\begin{itemize}
		\item [1)] Download the data $\{R_{i,\delta_0}\mathbf{f}_j^{(b)}|b\in [0:l_z)\}$ from each helper node $j\in \mathcal{H}_z$, where $\mathcal{H}_z$ denotes the set of indices of the $d_z$ helper nodes.
		\item [2)] Choose linear system of equations \eqref{eqn specified form of a-SLE} for $s=0$ and $a=0,1,\cdots,l_z-1$ to solve the data stored at GN $i$, i.e.,
		\begin{eqnarray}\label{eqn a-SLE by using (6)}
			S_{i,\delta_0}A_{t,i}\mathbf{f}_i^{(a)}+\sum\limits_{j=0,j\ne i}^{n-1}\tilde{A}_{t,j,i,\delta_0}R_{i,\delta_0}\mathbf{f}_j^{(a)}+\sum\limits_{j\in \mathcal{G}}S_{i,\delta_0}\mathbf{P}_{t,j}^{(a)}=\mathbf{0}, & a\in [l_{w+1}:l_w),t\in[0:r)
		\end{eqnarray}
		for all $w\in [z:m)$ by noting $$\{0,1,\cdots,l_z-1\}=\bigcup\limits_{w=z}^{m-1}[l_{w+1}:l_w).$$ According to \eqref{eqn_specific_form_of_P}, by first applying  C1 and then substituting the downloaded data into \eqref{eqn a-SLE by using (6)}, we then obtain
		\begin{align}\label{eqn reduction form of a-SLE}
			\hspace{-1cm}\left(\hspace{-2mm}\begin{array}{c}
				S_{i,\delta_0}A_{0,i}\\
				S_{i,\delta_0}A_{1,i}\\
				\vdots\\
				S_{i,\delta_0}A_{r-1,i}
			\end{array}\hspace{-2mm}\right)\mathbf{f}_i^{(a)}+\sum\limits_{j\in \mathcal{D}_z}\left(\hspace{-2mm}\begin{array}{c}
				\tilde{A}_{0,j,i,\delta_0}\\
				\tilde{A}_{1,j,i,\delta_0}\\
				\vdots\\
				\tilde{A}_{r-1,j,i,\delta_0}
			\end{array}\hspace{-2mm}\right)R_{i,\delta_0}\mathbf{f}_{j_v}^{(a)}+\sum\limits_{u=1}^{z}\Gamma_u\mathcal{P}_{i,u}^{(a)}
			=-\sum\limits_{u=z+1}^{w}\Gamma_u\mathcal{P}_{i,u}^{(a)}+*, a\in [l_{w+1}:l_w)
		\end{align}
		for all $w\in [z:m)$, where $\mathcal{D}_z=[0:n)\backslash(\mathcal{H}_z\cup\{i\})$, symbol $*$ denotes a known vector that can be determined by the downloaded data in 1), here $*$ denotes
		\begin{eqnarray*}\label{Eqn the symbol *}
			-\sum\limits_{j\in\mathcal{H}_z}\left(\hspace{-2mm}\begin{array}{c}
				\tilde{A}_{0,j,i,\delta_0}\\
				\tilde{A}_{1,j,i,\delta_0}\\
				\vdots\\
				\tilde{A}_{r-1,j,i,\delta_0}
			\end{array}\hspace{-2mm}\right)R_{i,\delta_0}\mathbf{f}_{j}^{(a)},
		\end{eqnarray*}
		and the $r\alpha N'\times (\delta_u-\delta_{u-1})\alpha N'$ matrix 
		\begin{eqnarray*}
			\Gamma_u=\left(\begin{array}{cccc}
				S_{i,\delta_0}K_{0,i,\delta_{u-1}-\delta_0} & S_{i,\delta_0}K_{0,i,\delta_{u-1}-\delta_0+1} & \cdots &  S_{i,\delta_0}K_{0,i,\delta_{u}-\delta_0-1}\\
				S_{i,\delta_0}K_{1,i,\delta_{u-1}-\delta_0} & S_{i,\delta_0}K_{1,i,\delta_{u-1}-\delta_0+1} & \cdots &  S_{i,\delta_0}K_{1,i,\delta_{u}-\delta_0-1}\\
				\vdots & \vdots & \ddots & \vdots \\
				S_{i,\delta_0}K_{r-1,i,\delta_{u-1}-\delta_0} & S_{i,\delta_0}K_{r-1,i,\delta_{u-1}-\delta_0+1} & \cdots &  S_{i,\delta_0}K_{r-1,i,\delta_{u}-\delta_0-1}\\
			\end{array}\hspace{-2mm}\right),\,\,u\in [1:m).
		\end{eqnarray*}
		Let $\mathcal{D}_z=\{j_0,j_1,\cdots,j_{r-\delta_z-1}\}$, in matrix form, we can rewritten \eqref{eqn reduction form of a-SLE} as
		\begin{eqnarray}\label{eqn reduction matrix form of a-SLE}
		    M_{i,\mathcal{D}_z}\cdot\left(\begin{array}{c}
		         \mathbf{f}_i^{(a)}  \\
		         R_{i,\delta_0}\mathbf{f}_{j_0}^{(a)}\\
		         \vdots\\
		         R_{i,\delta_0}\mathbf{f}_{j_{r-\delta_z-1}}^{(a)}\\
		         \mathcal{P}_{i,1}^{(a)}\\
		         \vdots\\
		         \mathcal{P}_{i,z}^{(a)}\\
		    \end{array}\right)=-\sum\limits_{u=z+1}^{w}\Gamma_u\mathcal{P}_{i,u}^{(a)}+*, a\in [l_{w+1}:l_w),
		\end{eqnarray}
		where $M_{i,\mathcal{D}_z}$ is defined in \eqref{eqn matrix MijDz}.
		\item [3)] Recover the data stored at GN $i$ by sequentially solving \eqref{eqn reduction matrix form of a-SLE} when $w=z,z+1,\cdots,m-1$,
		\begin{itemize}
			\item [3-1)] Compute the first term on RHS of \eqref{eqn reduction matrix form of a-SLE} from the recovered data of GN $i$ if $w>z$.
			\item [3-2)] Recover the data $\mathbf{f}_i^{(a)}$ and $\mathcal{P}_{i,u}^{(a)}$ for $1\le u\le z$ from \eqref{eqn reduction matrix form of a-SLE}.
		\end{itemize}
	\end{itemize}
	
%	\begin{Theorem}\label{Thr_repair_for_goal_node}
%		Assume that the based code $\mathbbmss{C}_0$ has $\delta_0$-optimal repair property for GN $i\in \mathcal{G}$ over $\mathbb{F}_q$, the node $i\in \mathcal{G}$ has the $\delta_{[0:m)}$-optimal repair property in the new code $\mathbbmss{C}_2$  over $\mathbb{F}_q$ if C2-a and C3 hold for this node. Furthermore, GN $i$ has the $\delta_{[0:m)}$-optimal access property in the new code $\mathbbmss{C}_2$ if it has the $\delta_0$-optimal access property for base code $\mathbbmss{C}_0$.
%	\end{Theorem}
	\begin{Theorem}\label{Thr_repair_for_goal_node}
		GN $i\in \mathcal{G}$ of the new code $\mathbbmss{C}_2$ has the $\delta_{[0:m)}$-optimal repair/access property  if 
		RN $i$ of base code $\mathbbmss{C}_0$ satisfies C1 and C2.
	\end{Theorem}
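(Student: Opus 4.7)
The plan is to formalize the three-step repair procedure already sketched just above the theorem and verify that (i) the total download matches the cut-set bound for each repair degree $\delta_z$, (ii) after applying the select matrices the per-instance equations reduce to \eqref{eqn reduction matrix form of a-SLE}, and (iii) that linear system can be solved iteratively by invoking C1, C2, and Properties P1--P3. First I would count the download: from each helper $j\in\mathcal{H}_z$ we pull $\{R_{i,\delta_0}\mathbf{f}_j^{(b)}:b\in[0:l_z)\}$, i.e., $l_z\cdot\alpha N/\delta_0$ symbols. Since the sub-packetization of $\mathbbmss{C}_2$ is $l_0\alpha N$ and $l_z/(l_0\delta_0)=1/\delta_z$, this attains the bound \eqref{Eqn_bound_on_gamma} at $d=d_z$. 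If in addition $\mathbbmss{C}_0$ has the $\delta_0$-optimal access property, each $R_{i,\delta_0}\mathbf{f}_j^{(b)}$ merely reads coordinates, so $\delta_z$-optimal access is inherited.

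Next, for every $a\in[0:l_z)$, I multiply the $t$-th PCG of \eqref{eqn t-PCG of code C3} by $S_{i,\delta_0}$. Using \eqref{repair node requirement3}, all non-$i$ node contributions become $\tilde{A}_{t,j,i,\delta_0}R_{i,\delta_0}\mathbf{f}_j^{(a)}$, which is exactly what was downloaded. For the appended-data, C1 yields $S_{i,\delta_0}\mathbf{P}_{t,j}^{(a)}=\mathbf{0}$ whenever $j\in\mathcal{G}\setminus\{i\}$, because each $\mathbf{P}_{t,j}^{(a)}$ is a linear combination of the columns of $K_{t,j,v}$. Hence the only surviving appended contribution is that of node $i$ itself, which by \eqref{eqn_specific_form_of_P} equals $\sum_{u=1}^{w}\Gamma_u\mathcal{P}_{i,u}^{(a)}$ on the LHS for $a\in[l_{w+1}:l_w)$, producing precisely \eqref{eqn reduction matrix form of a-SLE}.

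The heart of the proof is an induction on $w$ from $z$ up to $m-1$. For the base case $w=z$ the RHS sum is empty, and the unknowns on the LHS are $\mathbf{f}_i^{(a)}$, the $r-\delta_z$ vectors $R_{i,\delta_0}\mathbf{f}_{j_v}^{(a)}$, and $\mathcal{P}_{i,1}^{(a)},\dots,\mathcal{P}_{i,z}^{(a)}$. Using $\alpha N=\delta_0\alpha N'$ and P3, the total number of unknowns is
\[
\alpha N+(r-\delta_z)\alpha N'+\sum_{u=1}^{z}(\delta_u-\delta_{u-1})\alpha N'=r\alpha N',
\]
which matches the order of $M_{i,\mathcal{D}_z}$. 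By C2 this matrix is invertible, so we recover $\mathbf{f}_i^{(a)}$ and $\mathcal{P}_{i,u}^{(a)}$ ($u\le z$) for every $a\in[l_{z+1}:l_z)$. For the inductive step with $w>z$, the RHS carries $\sum_{u=z+1}^{w}\Gamma_u\mathcal{P}_{i,u}^{(a)}$ with $a\in[l_{w+1}:l_w)$; I need this to be already known. This is exactly what P2 buys, since applying it with parameter $z$ gives
\[
\mathcal{P}_{i,u}\subseteq\{\mathbf{f}_i^{(b)}[v]\mid b\in[l_u:l_z),v\in[0:\delta_0)\}\cup\bigcup_{b=l_u}^{l_z-1}(\mathcal{P}_{i,1}^{(b)}\cup\cdots\cup\mathcal{P}_{i,z}^{(b)})
\]
for every $u\in(z,w]$; each $\mathbf{f}_i^{(b)}$ with $b\in[l_{z+1}:l_0)$ has been recovered at an earlier round, and each $\mathcal{P}_{i,u'}^{(b)}$ with $u'\le z$ was output by the base case or an earlier inductive step. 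Thus the RHS is computable, and applying C2 once more returns $\mathbf{f}_i^{(a)}$ and $\mathcal{P}_{i,u}^{(a)}$ for $a\in[l_{w+1}:l_w)$.

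The main obstacle I anticipate is a clean bookkeeping argument showing that every $\mathcal{P}_{i,u}^{(a)}$ appearing in the RHS of \eqref{eqn reduction matrix form of a-SLE} at round $w$ has already been produced by rounds $z,\dots,w-1$; this reduces to a careful invocation of P1 and P2 with a judicious choice of the auxiliary index, but it is the one place where the abstract structure of Algorithm \ref{Alg the set Pij} really has to earn its keep. Once that is in place, the optimal repair claim follows from solving $M_{i,\mathcal{D}_z}^{-1}$ on recovered data, and the optimal access claim follows from the block-diagonal shape of $R_{i,\delta_z}'$ and $S_{i,\delta_z}'$ in \eqref{eqn the repair,select matrix of C3 for goal nodes}--\eqref{eqn repair,select matrix of remainder nodes} together with the corresponding property of $\mathbbmss{C}_0$.
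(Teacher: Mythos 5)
Your outline tracks the paper's own proof closely: the bandwidth count, the reduction to \eqref{eqn reduction matrix form of a-SLE} via C1, the $r\alpha N'$-unknown count matching the order of $M_{i,\mathcal{D}_z}$, and the forward induction over $w$ with P2 supplying the RHS are precisely what the paper does. Three small points need tightening. First, in your inductive step the claim that ``each $\mathbf{f}_i^{(b)}$ with $b\in[l_{z+1}:l_0)$ has been recovered at an earlier round'' is not true as stated: at the start of round $w$ you have only solved \eqref{eqn reduction matrix form of a-SLE} for $a\in[l_w:l_z)$, so you possess $\mathbf{f}_i^{(b)}$ and $\mathcal{P}_{i,u'}^{(b)}$ (with $u'\le z$) only for $b\in[l_w:l_z)$, not for $b\in[l_z:l_0)$. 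The argument still closes because P2 (with auxiliary index $z$) puts the elements of $\mathcal{P}_{i,u}^{(a)}$ for $u\in(z,w]$ inside $\{\mathbf{f}_i^{(b)}[v]:b\in[l_u:l_z)\}\cup\bigcup_{b=l_u}^{l_z-1}(\mathcal{P}_{i,1}^{(b)}\cup\cdots\cup\mathcal{P}_{i,z}^{(b)})$ and $[l_u:l_z)\subseteq[l_w:l_z)$ since $u\le w$, but you should state the correct range. Second, C2 is only posited for $z\in[1:m)$, so the nonsingularity of the coefficient matrix in the $\delta_0$-repair case ($z=0$) follows from Lemma~\ref{lem the requirement of obtaining fi} together with the $\delta_0$-optimal repair property of $\mathbbmss{C}_0$, not from C2. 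Third, after the induction terminates you have directly recovered $\mathbf{f}_i^{(a)}$ only for $a\in[0:l_z)$; the remaining fragments $\mathbf{f}_i^{(a)}$ for $a\in[l_z:l_0)$ must be read off the recovered $\mathcal{P}$-data via P1, since $\bigcup_{a=0}^{l_z-1}(\mathcal{P}_{i,1}^{(a)}\cup\cdots\cup\mathcal{P}_{i,z}^{(a)})=\{\mathbf{f}_i^{(a)}[u]:a\in[l_z:l_0),u\in[0:\delta_0)\}$ when $z>0$, and this closing step is absent from your write-up.
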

	\begin{proof}
	   Let us consider the $\delta_z$-optimal repair property of GN $i$ for any $z\in [0:n)$, i.e., $d_z=k+\delta_z-1$ helper are connected. As shown in 1) of The Repair Procedure of GNs,  $\gamma(d_z)=\frac{d_z\cdot l_z\cdot \alpha N}{\delta_0}=\frac{d_z}{\delta_z}l_0\alpha N=\frac{d_z}{d_z-k+1}l_0\alpha N$ since $\mbox{Rank}(R_{i,\delta_0})=\frac{\alpha N}{\delta_0}$, $\delta_z=d_z-k+1$ and $\frac{l_z}{\delta_0}=\frac{l_0}{\delta_z}$ from \eqref{eqn notation l}, which attains the lower bound in \eqref{Eqn_bound_on_gamma}. Moreover, if node $i$ has the $\delta_0$-optimal access property for base code $\mathbbmss{C}_0$, i.e., the repair matrix $R_{i,\delta_0}$  has  only one nonzero element in each row, then by \eqref{eqn the repair,select matrix of C3 for goal nodes}  GN $i$ has the $\delta_z$-optimal access property in the new code $\mathbbmss{C}_2$. Thus, to prove this theorem, it suffices to show that 3) of The Repair Procedure of GNs can be executed for $w=z,z+1,\cdots,m-1$.
		
		For fixed $a$, according to P3, it is easy to see that there are
		\begin{eqnarray*}
			\alpha N+(r-\delta_z)\alpha N'+\sum\limits_{u=1}^{z}(\delta_u-\delta_{u-1})\alpha N'=\alpha\cdot \delta_0 N'+(r-\delta_0)\alpha N'=r\alpha N'
		\end{eqnarray*}
		unknown variables on LHS of the $r \alpha N'$ equations in \eqref{eqn reduction matrix form of a-SLE}. Note that the coefficient matrix on LHS of \eqref{eqn reduction matrix form of a-SLE} is nonsingular according to C2 if $z>0$ and Lemma \ref{lem the requirement of obtaining fi} if $z=0$ (the $\delta_0$-optimal repair property of code $\mathbbmss{C}_0$). That is, \eqref{eqn reduction matrix form of a-SLE} is solvable if the first term in its RHS  is known, i.e., 3-2) of The Repair Procedure of GNs can be executed. Then we only need to show the following claim.
		\begin{itemize}
			\item [\textbf{Claim:}] For any given $w\in [z:m)$, the first term $\sum\limits_{u=z+1}^{w}\Gamma_u\mathcal{P}_{i,u}^{(a)}$ on RHS of \eqref{eqn reduction matrix form of a-SLE} can be determined for all $a\in [l_{w+1}:l_w)$.
		\end{itemize}
We prove it by induction.

i) When $w=z$, Claim is obvious as $\sum\limits_{u=z+1}^{w}\Gamma_u\mathcal{P}_{i,u}^{(a)}=\mathbf{0}$. 

ii) Assume that Claim holds for all $w\in [z: v]$ where $z\le v<m-1$. Then, $\mathbf{f}_i^{(a)}$ and  $\mathcal{P}_{i,u}^{(a)}$ are available for all $1\le u\le z$ and $a\in [l_{v+1}:l_z)$, which together with P2 imply that the first term on RHS of \eqref{eqn reduction matrix form of a-SLE} for $w=v+1$ is already known. That is, Claim holds for $w=v+1$  and thus for all $z\le w<m$ by the induction. 

Then for $w\in [z:m)$,  3) of The Repair Procedure of GNs can be executed to the end, which means that the data $\mathbf{f}_i^{(a)}$ and $\mathcal{P}_{i,u}^{(a)}$ for all $1\le u\le z$ and $a\in [0:l_z)$ can be recovered. Finally by P1, we already have $\mathbf{f}_i^{(a)}$, $a\in [l_z:l_0)$ from the data in set $\bigcup\limits_{a=0}^{l_z-1}(\mathcal{P}_{i,1}^{(a)}\cup\cdots\cup\mathcal{P}_{i,z}^{(a)})$ if $z>0$, i.e., all the data stored at GN $i$ are regenerated, which finishes the proof.
	\end{proof}

	Example \ref{Exa_Repair} serves to visualize the ideas behind The Repair Procedure of GNs.
	
	\begin{Example}\label{Exa_Repair}
		Following up from Example \ref{Exp (16,10) MDS array code C3}, let us consider the repair of the first GN of the $(16,10)$ MDS array code $\mathbbmss{C}_2$ by connecting to $d_1=12$ helper nodes, i.e., we investigate the first node has $3$-optimal repair property.

		By using \eqref{eqn t-th PCG of (16,10) MDS array code C3} and \eqref{eqn reduction matrix form of a-SLE}, the procedure of repairing the first GN is shown in Table \ref{Table repair of goal node}.
		To save space, we only give unknown variables related to the first GN in \eqref{eqn reduction matrix form of a-SLE} for $a=0,1,2,3$, and show how to obtain the data stored at the first GN in 3-2) of The Repair Procedure of GNs.
		
		\begin{table*}[htbp]
			\centering
			\caption{The procedure of repairing the first node of the $(16,10)$ MDS array code $\mathbbmss{C}_2$ given in Example \ref{Exp (16,10) MDS array code C3} by connecting $12$ helper nodes}.\label{Table repair of goal node}
			\begin{tabular}{|c|c|c|c|c|c|}
				\hline  $w$ & $a$ & The unknown variables & The eliminated variables & The solved variables\\
				\hline  1& 3 & $\mathbf{f}_0^{(3)},\mathbf{f}_0^{(5)}[1]$ &  & $\mathbf{f}_0^{(3)},\mathbf{f}_0^{(5)}[1]$\\
				\hline  2& 2 & $\mathbf{f}_0^{(2)},\mathbf{f}_0^{(5)}[0],\mathbf{f}_0^{(5)}[1]$ & $\mathbf{f}_0^{(5)}[1]$ & $\mathbf{f}_0^{(2)},\mathbf{f}_0^{(5)}[0]$\\
				\hline  \multirow{2}{*}{3} & 1 & $\mathbf{f}_0^{(1)},\mathbf{f}_0^{(3)}[1],\mathbf{f}_0^{(4)}[1],\mathbf{f}_0^{(5)}[0],\mathbf{f}_0^{(5)}[1]$ & $\mathbf{f}_0^{(3)}[1],\mathbf{f}_0^{(5)}[0],\mathbf{f}_0^{(5)}[1]$ & $\mathbf{f}_0^{(1)},\mathbf{f}_0^{(4)}[1]$\\
				\cline{2-6}  &0 & $\mathbf{f}_0^{(0)},\mathbf{f}_0^{(3)}[0],\mathbf{f}_0^{(4)}[0],\mathbf{f}_0^{(2)}[0],\mathbf{f}_0^{(2)}[1]$ & $\mathbf{f}_0^{(3)}[0],\mathbf{f}_0^{(2)}[0],\mathbf{f}_0^{(2)}[1]$ & $\mathbf{f}_0^{(0)},\mathbf{f}_0^{(4)}[0]$\\
				\hline
			\end{tabular}
		\end{table*}
	\end{Example}

	Next, we examine the repair property of the RNs of code $\mathbbmss{C}_2$, which is the same as that of the base code.
		
	\textit{\textbf{The Repair Procedure of RNs}:} Assume that RN $i\in [0:n)\setminus\mathcal{G}$ fails and it has $\delta_z$-optimal repair property for base code $\mathbbmss{C}_0$. When $d_z=k+\delta_z-1$ helper nodes are connected to repair RN $i$, let $\mathcal{H}_z$ be the set of indices of the $d_z$ helper nodes, its stored data is repaired as follows:
	\begin{itemize}
		\item [1)] Download the data $\{R_{i,\delta_z}\mathbf{f}_j^{(b)}|b\in [0:l_0)\}$ from each helper node $j\in \mathcal{H}_z$.
		\item [2)] Choose the linearly system of equations \eqref{eqn specified form of a-SLE} for $s=z$ and $a=0,1,\cdots,l_0-1$ to solve the stored data at RN $i$, i.e.,
		\begin{eqnarray}\label{eqn LSE of repair remainder nodes}
			\left(\hspace{-2mm}\begin{array}{c}
				S_{i,\delta_z}A_{0,i}\\
				S_{i,\delta_z}A_{1,i}\\
				\vdots\\
				S_{i,\delta_z}A_{r-1,i}\\
			\end{array}\hspace{-2mm}\right)\mathbf{f}_i^{(a)}+\sum\limits_{v=0}^{r-\delta_z-1}\left(\hspace{-2mm}\begin{array}{c}
				\tilde{A}_{0,j_v,i,\delta_z}\\
				\tilde{A}_{1,j_v,i,\delta_z}\\
				\vdots\\
				\tilde{A}_{r-1,j_v,i,\delta_z}\\
			\end{array}\hspace{-2mm}\right)R_{i,\delta_z}\mathbf{f}_{j_v}^{(a)}+\sum\limits_{j\in \mathcal{G}}\left(\hspace{-2mm}\begin{array}{c}
				S_{i,\delta_z}\mathbf{P}_{0,j}^{(a)}\\
				S_{i,\delta_z}\mathbf{P}_{1,j}^{(a)}\\
				\vdots\\
				S_{i,\delta_z}\mathbf{P}_{r-1,j}^{(a)}\\
			\end{array}\hspace{-2mm}\right)=*,~a\in [l_{w+1}:l_w)
		\end{eqnarray}
		for all $w\in [0:m)$, where $\{j_0,j_1,\cdots,j_{r-\delta_z-1}\}=\mathcal{D}_z=[0:n)\backslash(\mathcal{H}_z\cup\{i\})$  and similar to \eqref{eqn reduction form of a-SLE}, $*$ denotes a known vector that can be determined by the downloaded data. In matrix form, \eqref{eqn LSE of repair remainder nodes} can be described as
		\begin{eqnarray}\label{eqn LSE of repair remainder nodes in matrix form}
			\setlength\arraycolsep{2pt}
			\left(\begin{array}{cccc}
				S_{i,\delta_z}A_{0,i} & \tilde{A}_{0,j_0,i,\delta_0} & \cdots & \tilde{A}_{0,j_{r-\delta_z-1},i,\delta_0}\\
				S_{i,\delta_z}A_{1,i} & \tilde{A}_{1,j_0,i,\delta_0} & \cdots & \tilde{A}_{1,j_{r-\delta_z-1},i,\delta_0}\\
				\vdots & \vdots & \vdots & \vdots\\
				S_{i,\delta_z}A_{r-1,i} & \tilde{A}_{r-1,j_0,i,\delta_0} & \cdots & \tilde{A}_{r-1,j_{r-\delta_z-1},i,\delta_0}\\
			\end{array}\right)\left(\begin{array}{c}
				\mathbf{f}_i^{(a)}\\
				R_{i,\delta_z}\mathbf{f}_{j_0}^{(a)}\\
				\vdots\\
				R_{i,\delta_z}\mathbf{f}_{j_{r-\delta_z-1}}^{(a)}
			\end{array}\right)+\sum\limits_{j\in \mathcal{G}}\left(\hspace{-1mm}\begin{array}{c}
				S_{i,\delta_z}\mathbf{P}_{0,j}^{(a)}\\
				S_{i,\delta_z}\mathbf{P}_{1,j}^{(a)}\\
				\vdots\\
				S_{i,\delta_z}\mathbf{P}_{r-1,j}^{(a)}\\
			\end{array}\hspace{-1mm}\right)=*,~a\in [l_{w+1}:l_w).
		\end{eqnarray}
		\item [3)] Recover the data stored at RN $i$ by sequentially solving \eqref{eqn LSE of repair remainder nodes in matrix form} for $w=0,1,\cdots,m-1$,
		\begin{itemize}
			\item [3-1)] Compute the second term on LHS of \eqref{eqn LSE of repair remainder nodes in matrix form} from the recovered data (or downloaded data) of GN $j\in \mathcal{G}$ if $w>0$.
			\item [3-2)] Recover the data $\mathbf{f}_i^{(a)}$ and $R_{i,\delta_z}\mathbf{f}_j^{(a)}$ for $j\in \mathcal{D}_z$ from \eqref{eqn LSE of repair remainder nodes in matrix form}.
		\end{itemize}
	\end{itemize}
	
	We now show that RN $i$ of the new code $\mathbbmss{C}_2$ has the same repair property as that of the base code $\mathbbmss{C}_0$, where $i\in [0:n)\backslash\mathcal{G}$.

		\begin{Lemma}\label{lem the important for the repair of remainder node}
		Given $z\in [0:m)$, $a\in [l_{w+1}:l_w)$ with $w\in [1:m)$, $t\in [0:r)$, $i\in [0:n)\backslash\mathcal{G}$ and $j\in \mathcal{G}$, the column vector $S_{i,\delta_z}\mathbf{P}_{t,j}^{(a)}$ can be computed from the data in set $\{R_{i,\delta_z}\mathbf{f}_j^{(b)}|b\in [l_w:l_0)\}$ if for  base code $\mathbbmss{C}_0$,  RN $i$  has the $\delta_z$-optimal repair property and  all GNs $j\in \mathcal{G}$ satisfy
		\begin{itemize}
			\item [C3.] $\mbox{Rank}\left(\left(\begin{array}{c}R_{i,\delta_z}\\S_{i,\delta_z}K_{t,j,v}\Phi_{\alpha,u}\end{array}\right)\right)=\frac{\alpha N}{\delta_z}$ for  $t\in [0:r)$, $u\in [0:\delta_0)$ and $v\in [0:\delta_{m-1}-\delta_0)$, where $\Phi_{\alpha,u}$ is the $\alpha N'\times \alpha N$ matrix defined by \eqref{eqn the matrix Phi}.
		\end{itemize} 
	\end{Lemma}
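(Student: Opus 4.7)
The plan is to expand $\mathbf{P}_{t,j}^{(a)}$ via its explicit definition in \eqref{eqn_specific_form_of_P} and then invoke condition C3 termwise to rewrite each resulting ingredient as a linear function of the downloaded quantities $R_{i,\delta_z}\mathbf{f}_j^{(b)}$. Because $a\in[l_{w+1}:l_w)$, \eqref{eqn_specific_form_of_P} expresses $\mathbf{P}_{t,j}^{(a)}$ as $\sum_{u=1}^{w}(K_{t,j,\delta_{u-1}-\delta_0},\ldots,K_{t,j,\delta_u-\delta_0-1})\mathcal{P}_{j,u}^{(a)}$, so $S_{i,\delta_z}\mathbf{P}_{t,j}^{(a)}$ breaks into a finite sum of terms of the form $S_{i,\delta_z}K_{t,j,v}\mathbf{f}_j^{(b)}[e]$ for various triples $(v,b,e)$ with $v\in[0:\delta_{m-1}-\delta_0)$ and $e\in[0:\delta_0)$.

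The first step I would carry out is tracking which instance indices $b$ can actually occur. For every $u\le w$, every entry of $\mathcal{P}_{j,u}^{(a)}$ lies in $\mathcal{P}_{j,1}\cup\cdots\cup\mathcal{P}_{j,w}$, which by property P1 equals $\{\mathbf{f}_j^{(b)}[e]:b\in[l_w:l_0),\,e\in[0:\delta_0)\}$. Hence every $\mathbf{f}_j^{(b)}[e]$ that appears in the expansion of $\mathbf{P}_{t,j}^{(a)}$ satisfies $b\in[l_w:l_0)$, so each resulting term involves only instances for which $R_{i,\delta_z}\mathbf{f}_j^{(b)}$ is supplied in the hypothesised data set.

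Next I would apply the identity $\mathbf{f}_j^{(b)}[e]=\Phi_{\alpha,e}\mathbf{f}_j^{(b)}$ from \eqref{eqn vector b[u]} to recast each term as $S_{i,\delta_z}K_{t,j,v}\Phi_{\alpha,e}\mathbf{f}_j^{(b)}$. Condition C3 says that stacking $S_{i,\delta_z}K_{t,j,v}\Phi_{\alpha,e}$ on top of $R_{i,\delta_z}$ preserves the rank $\alpha N/\delta_z$; since $R_{i,\delta_z}$ already has full row rank $\alpha N/\delta_z$, this forces the rows of $S_{i,\delta_z}K_{t,j,v}\Phi_{\alpha,e}$ to lie in the row span of $R_{i,\delta_z}$. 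Consequently one obtains a matrix $B_{t,j,v,e}$ over $\mathbb{F}_q$ with $S_{i,\delta_z}K_{t,j,v}\Phi_{\alpha,e}=B_{t,j,v,e}R_{i,\delta_z}$, and therefore $S_{i,\delta_z}K_{t,j,v}\Phi_{\alpha,e}\mathbf{f}_j^{(b)}=B_{t,j,v,e}\bigl(R_{i,\delta_z}\mathbf{f}_j^{(b)}\bigr)$, which is computable from the given data once $b\in[l_w:l_0)$. Summing over all contributions in the expansion then yields $S_{i,\delta_z}\mathbf{P}_{t,j}^{(a)}$.

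The subtle step will be the index-tracking: a priori $\mathbf{P}_{t,j}^{(a)}$ could involve instance indices anywhere in $[0:l_0)$, and one must pin down that only $b\ge l_w$ actually occurs so that the available data set suffices. This is exactly where property P1, applied at level $w$, does the work, and it seems cleaner than unfolding Algorithm~\ref{Alg the set Pij} recursively. The $\delta_z$-optimal repair hypothesis on RN $i$ enters only to guarantee that $R_{i,\delta_z}$ and $S_{i,\delta_z}$ are well-defined as repair/select matrices of the base code $\mathbbmss{C}_0$, which is the prerequisite for C3 to be a meaningful rank condition.
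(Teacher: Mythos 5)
Your proof is correct and follows essentially the same route as the paper's own argument in Appendix~\ref{appen the proof of lem 3}: expand $\mathbf{P}_{t,j}^{(a)}$ via \eqref{eqn_specific_form_of_P}, use P1 (together with the inclusion $\mathcal{P}_{j,u}^{(a)}\subseteq\mathcal{P}_{j,u}$ from the construction in Algorithm~\ref{Alg the set Pij}) to pin the instance indices to $[l_w:l_0)$, rewrite each entry as $\Phi_{\alpha,e}\mathbf{f}_j^{(b)}$ via \eqref{eqn vector b[u]}, and then invoke C3 to factor $S_{i,\delta_z}K_{t,j,v}\Phi_{\alpha,e}$ through $R_{i,\delta_z}$. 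Your write-up is slightly more explicit than the paper's about how the rank condition in C3 yields the factorization $S_{i,\delta_z}K_{t,j,v}\Phi_{\alpha,e}=B_{t,j,v,e}R_{i,\delta_z}$, but this is the same idea.
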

	
	\begin{proof}
		The proof is given in Appendix \ref{appen the proof of lem 3}.
	\end{proof}

	\begin{Theorem}\label{Thr_repair_for_remainder_nodes}
			Given $z\in [0:m)$, RN $i\in [0:n)\backslash\mathcal{G}$ of the new code $\mathbbmss{C}_2$ has the $\delta_z$-optimal repair/access property over $\mathbb{F}_q$ if for base code $\mathbbmss{C}_0$, RN $i$   has the $\delta_z$-optimal repair property  and  all GNs  satisfy C3. 
	\end{Theorem}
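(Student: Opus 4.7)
I would simply verify that the Repair Procedure of RNs described just before the theorem does what it claims. The task then splits cleanly in two: (i) check that the accessed/downloaded amount meets the bound \eqref{Eqn_bound_on_gamma} at the new sub-packetization $l_0\alpha N$, which is essentially a counting argument using \eqref{eqn the repair,select matrix of C3 for goal nodes}; and (ii) show that the linear system \eqref{eqn LSE of repair remainder nodes in matrix form} can actually be solved for $\mathbf{f}_i^{(a)}$ for every $a\in[0:l_0)$, which is the real content.

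Part (i) is a one-line computation. Step~1) downloads $l_0\cdot\tfrac{\alpha N}{\delta_z}$ symbols from each of the $d_z=k+\delta_z-1$ helpers, giving bandwidth $\tfrac{d_z}{\delta_z}\cdot l_0\alpha N=\tfrac{d_z}{d_z-k+1}(l_0\alpha N)$, meeting the bound. The block-diagonal form of $R'_{i,\delta_z}$ in \eqref{eqn the repair,select matrix of C3 for goal nodes} inherits the one-nonzero-per-row property of $R_{i,\delta_z}$, so the optimal-access property is preserved whenever the base code has it. Part (ii) I would establish by induction on $w\in[0:m)$, processing the instances $a\in[l_{w+1}:l_w)$ in this order. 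For the base case $w=0$, every appended-data term $\mathbf{P}_{t,j}^{(a)}$ vanishes by \eqref{eqn_specific_form_of_P}, so \eqref{eqn LSE of repair remainder nodes in matrix form} collapses to the base code's $\delta_z$-optimal repair equation for instance~$a$; its coefficient matrix is nonsingular by Lemma~\ref{lem the requirement of obtaining fi} applied to $\mathbbmss{C}_0$, and we recover both $\mathbf{f}_i^{(a)}$ and $R_{i,\delta_z}\mathbf{f}_{j_v}^{(a)}$ for every $j_v\in\mathcal{D}_z$. For the inductive step at $w\geq 1$, after completing the previous iterations we have $R_{i,\delta_z}\mathbf{f}_j^{(b)}$ for $j\in\mathcal{D}_z$ and $b\in[l_w:l_0)$; combined with the downloaded $R_{i,\delta_z}\mathbf{f}_j^{(b)}$ for $j\in\mathcal{H}_z$ and all $b\in[0:l_0)$, we have this quantity for \emph{every} $j\in\mathcal{G}$ and $b\in[l_w:l_0)$. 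That is exactly the hypothesis of Lemma~\ref{lem the important for the repair of remainder node}, which under C3 yields $S_{i,\delta_z}\mathbf{P}_{t,j}^{(a)}$ for all $a\in[l_{w+1}:l_w)$, $j\in\mathcal{G}$ and $t\in[0:r)$. Substituting these as known vectors, \eqref{eqn LSE of repair remainder nodes in matrix form} reduces once more to the base-code repair equation and is solvable by Lemma~\ref{lem the requirement of obtaining fi}.

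The main subtlety, and the only thing that needs to be tracked carefully, is the interplay between two kinds of ``unknown'' data on the left-hand side of \eqref{eqn LSE of repair remainder nodes in matrix form}: the non-helper data $R_{i,\delta_z}\mathbf{f}_j^{(a)}$ for $j\in\mathcal{D}_z$ and the appended-data contributions $S_{i,\delta_z}\mathbf{P}_{t,j}^{(a)}$ for $j\in\mathcal{G}$. Because a goal node may or may not sit inside $\mathcal{H}_z$, these two sources must be combined to feed Lemma~\ref{lem the important for the repair of remainder node}, and this works precisely because the intervals $[l_{w+1}:l_w)$ tile $[0:l_0)$ in such a way that after completing step $w-1$ the set of recovered indices for any $j\in\mathcal{D}_z$ is exactly $[l_w:l_0)$---which is what Lemma~\ref{lem the important for the repair of remainder node} consumes at step $w$. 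No property of $\mathbf{P}_{t,j}^{(a)}$ beyond P0 and C3 enters, so the remainder of the argument is routine.
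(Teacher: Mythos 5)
Your proposal is correct and follows essentially the same route as the paper's proof: the bandwidth/access count in part (i), then an induction on $w$ that uses Lemma~\ref{lem the requirement of obtaining fi} for nonsingularity of the base-code block and Lemma~\ref{lem the important for the repair of remainder node} under C3 to turn the appended-data terms into known quantities from the downloaded data (for $\mathcal{G}\cap\mathcal{H}_z$) together with the previously repaired intermediate data (for $\mathcal{G}\cap\mathcal{D}_z$). The "main subtlety" you flag — that helper-goal-nodes and dropped-goal-nodes contribute to Lemma~\ref{lem the important for the repair of remainder node} through different channels but combine to cover all $j\in\mathcal{G}$ and $b\in[l_w:l_0)$ — is exactly the observation the paper invokes in its induction step, so there is no material difference.
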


	\begin{proof}
		According to The Repair Procedure of RNs, $\gamma(d_z)=\frac{d_z}{\delta_z}l_0\alpha N=\frac{d_z}{d_z-k+1}l_0\alpha N$ due to $\mbox{Rank}(R_{i,\delta_z})=\frac{\alpha N}{\delta_z}$ and $\delta_z=d_z-k+1$, which attains the lower bound in \eqref{Eqn_bound_on_gamma}. In addition,  if node $i$ of the base code $\mathbbmss{C}_0$ has the $\delta_z$-optimal access property, i.e., the repair matrix $R_{i,\delta_z}$  also has only one nonzero element in each row, then by \eqref{eqn the repair,select matrix of C3 for goal nodes}  RN $i$ has the $\delta_z$-optimal access property in the new code $\mathbbmss{C}_2$.  Therefore, it is sufficient to show that 3) of The Repair Procedure of RNs can be executed under C3.  
		
		According to Lemma \ref{lem the requirement of obtaining fi}, by the $\delta_z$-optimal repair property of RN $i$ in  base code $\mathbbmss{C}_0$, the coefficient matrix of the first term on LHS of \eqref{eqn LSE of repair remainder nodes in matrix form} is nonsingular. Thus given $w\in [0:m)$ and $a\in [l_{w+1}:l_w)$, the data $\mathbf{f}_i^{(a)}$ and $R_{i,\delta_z}\mathbf{f}_{j_v}^{(a)}$, $v\in [0:r-\delta_z)$, can be repaired from \eqref{eqn LSE of repair remainder nodes in matrix form} if the following claim holds.
		\begin{itemize}
			\item [\textbf{Claim:}] Given $w\in [0:m)$, the second term on LHS of \eqref{eqn LSE of repair remainder nodes} for $a\in [l_{w+1}:l_w)$ is known.
		\end{itemize}
		
		By P0, $\mathbf{P}_{t,j}^{(a)}=\mathbf{0}$ for $a\in [l_1:l_0)$ and $j\in \mathcal{G}$, i.e., Claim holds for $w=0$. Suppose that Claim holds for all $0\le w\le s< m-1$, implying that we can obtain  data $R_{i,\delta_z}\mathbf{f}_{j_v}^{(a)}$ for $a\in [l_{s+1}:l_{0})$ and $v\in  [0:r-\delta_z)$. 		
		Therefore by Lemma \ref{lem the important for the repair of remainder node}, we are able to compute the second term on LHS of \eqref{eqn LSE of repair remainder nodes in matrix form} from $R_{i,\delta_z}\mathbf{f}_j^{(b)}$ for $b\in [l_{s+1}:l_0)$ and $j\in\mathcal{G}$, which have been either downloaded ($j\in \mathcal{H}_z\cap \mathcal{G}$) or repaired $(j\in \mathcal{D}_z\cap \mathcal{G}$).  That is, Claim also holds for $w=s+1$ and thus for all $0\le w<m$ by  the induction. This finishes the proof.
	\end{proof}

	Combining Theorems \ref{theorem the MDS property of code C3}-\ref{Thr_repair_for_remainder_nodes}, we then have the following results.
	
	\begin{Theorem}\label{Thr all conditions}
		By choosing an $(n,k)$ MDS array code $\mathbbmss{C}_0$ with the $\delta_0$-optimal repair/access property for all nodes and a set $\mathcal{G}$ of $\rho$ nodes  satisfying C1-C3,  the new $(n,k)$ MDS array code $\mathbbmss{C}_2$ has  the $\delta_{[0:m)}$-optimal repair/access property for these $\rho$ nodes over $\mathbb{F}_q$,  and preserves the $\delta_z$-optimal repair/access property for the other nodes where $z\in [0:m)$.
	\end{Theorem}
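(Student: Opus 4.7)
The plan is to assemble the conclusion by invoking the three theorems already established for the generic construction, namely Theorem \ref{theorem the MDS property of code C3} (MDS property), Theorem \ref{Thr_repair_for_goal_node} ($\delta_{[0:m)}$-optimal repair/access property for each GN), and Theorem \ref{Thr_repair_for_remainder_nodes} ($\delta_z$-optimal repair/access property for each RN). Since the hypotheses of Theorem \ref{Thr all conditions} are exactly that $\mathbbmss{C}_0$ is an $(n,k)$ MDS array code with the $\delta_0$-optimal repair/access property at every node, and that the chosen $\rho$-subset $\mathcal{G}$ satisfies C1, C2, and C3, the proof reduces to checking that these hypotheses feed correctly into each of the three theorems above.

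First, I would invoke Theorem \ref{theorem the MDS property of code C3}: because the appended-data $\mathbf{P}_{t,i}^{(a)}$ defined in \eqref{eqn_specific_form_of_P} are built so as to satisfy property P0 (as already noted right after \eqref{eqn_specific_form_of_P}), and because $\mathbbmss{C}_0$ is MDS, the new code $\mathbbmss{C}_2$ inherits the MDS property with no additional assumptions. This step is essentially a quotation.

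Next, for each GN $i\in\mathcal{G}$, I would apply Theorem \ref{Thr_repair_for_goal_node}. Its hypotheses are precisely that node $i$ in $\mathbbmss{C}_0$ has the $\delta_0$-optimal repair property and that C1 and C2 hold for $i$; both are guaranteed by assumption. This yields the $\delta_z$-optimal repair property of GN $i$ in $\mathbbmss{C}_2$ for every $z\in[0:m)$, and hence the $\delta_{[0:m)}$-optimal repair property. To upgrade from optimal repair to optimal access, I would recall the observation inside the proof of Theorem \ref{Thr_repair_for_goal_node}: by the block structure \eqref{eqn the repair,select matrix of C3 for goal nodes} of $R'_{i,\delta_z}$, if $R_{i,\delta_0}$ of the base code has at most one nonzero entry per row (the $\delta_0$-optimal access hypothesis on $\mathbbmss{C}_0$), then so does $R'_{i,\delta_z}$, establishing the $\delta_{[0:m)}$-optimal access property.

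Finally, for each RN $i\in[0:n)\setminus\mathcal{G}$ and each $z\in[0:m)$, I would apply Theorem \ref{Thr_repair_for_remainder_nodes}. Its hypotheses require that node $i$ has the $\delta_z$-optimal repair property in $\mathbbmss{C}_0$ (given, since all nodes of $\mathbbmss{C}_0$ are assumed to enjoy the $\delta_0$-optimal access, and here we apply it only with $z=0$ in the conclusion as well; more precisely, the theorem says the property is preserved), and that every GN satisfies C3, which is part of our assumption. The access version follows from the same one-nonzero-per-row reasoning based on \eqref{eqn the repair,select matrix of C3 for goal nodes}. Combining these three applications yields the full claim.

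There is no genuine obstacle here; the only thing to verify is notational compatibility between the statements of the three feeder theorems and the hypotheses of Theorem \ref{Thr all conditions}, in particular that C1, C2, C3 are exactly the conditions being propagated and that the base-code property $\delta_0$-optimal access carries through the block-diagonal and bordered block-diagonal constructions in \eqref{eqn the repair,select matrix of C3 for goal nodes} and \eqref{eqn repair,select matrix of remainder nodes} without introducing extra nonzero entries.
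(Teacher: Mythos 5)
Your proposal is correct and matches the paper's approach exactly: the paper gives no explicit proof for this theorem, simply stating ``Combining Theorems \ref{theorem the MDS property of code C3}--\ref{Thr_repair_for_remainder_nodes}, we then have the following results,'' which is precisely the assembly you carry out. You also correctly flag the one delicate point, namely that the ``preserves the $\delta_z$-optimal repair/access property for the other nodes'' conclusion for $z>0$ is only substantive to the extent that the base code already possessed that property at those nodes, which is exactly how Theorem \ref{Thr_repair_for_remainder_nodes} is conditioned.
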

	
	Besides,  we have the following lemma whose proof is given in Appendix \ref{appen the proof of lem 4}. It is very useful when we recursively apply the construction method in the next section.
	
	\begin{Lemma}\label{Theorem the recursion result}
		Assume that  base code $\mathbbmss{C}_0$ has another set $\mathcal{G}'\subset [0:n)\backslash \mathcal{G}$ of $\rho$ nodes  satisfying C1-C3, whose key matrices  are $K_{t,i,v},t\in [0:r),i\in \mathcal{G}',v\in [0:\delta_{m-1}-\delta_0)$. Then, nodes in $\mathcal{G}'$ of new code $\mathbbmss{C}_2$  still satisfy C1-C3 with key matrices  of the form
		\begin{eqnarray}\label{eqn key matrix for recursion}
			K_{t,i,v}'=\mbox{blkdiag}(K_{t,i,v},K_{t,i,v},\cdots,K_{t,i,v})_{l_0},& i\in \mathcal{G}',t\in [0:r),v\in [0:\delta_{m-1}-\delta_0).
		\end{eqnarray}
	\end{Lemma}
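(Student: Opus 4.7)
My approach is to exploit the $l_0$-fold block structure inherited by every object associated with a node in $\mathcal{G}' \subset [0:n)\setminus\mathcal{G}$. First I would observe that by \eqref{eqn repair,select matrix of remainder nodes}, for $i\in\mathcal{G}'$ the select/repair matrices of $\mathbbmss{C}_2$ are $S'_{i,\delta_z}=\mbox{blkdiag}(S_{i,\delta_z},\ldots,S_{i,\delta_z})_{l_0}$ and $R'_{i,\delta_z}=\mbox{blkdiag}(R_{i,\delta_z},\ldots,R_{i,\delta_z})_{l_0}$, the hypothesized key matrices $K'_{t,i,v}$ are $l_0$-fold block diagonals of $K_{t,i,v}$, and the parity-check block $A'_{t,i}$ is the $l_0$-fold block diagonal of $A_{t,i}$. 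By contrast, for $j\in\mathcal{G}$ the block $A'_{t,j}$ is only block upper triangular (with diagonal $A_{t,j}$), because by P0 and P1 every contribution to the appended data $\mathbf{P}^{(a)}_{t,j}$ involves $\mathbf{f}_j^{(b)}$ with $b>a$. This upper triangularity, together with the factorization $\Phi'_{l_0\alpha,u}=\mbox{blkdiag}(\Phi_{\alpha,u},\ldots,\Phi_{\alpha,u})_{l_0}$, are the two structural facts I would use throughout.

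For C1, the product $S'_{j,\delta_0}K'_{t,i,v}$ is block-diagonal with each block $S_{j,\delta_0}K_{t,i,v}=\mathbf{0}$ by base C1, so C1 is immediate. For C3 I would split into two cases. When $i\in[0:n)\setminus(\mathcal{G}\cup\mathcal{G}')$, the combined matrix $\binom{R'_{i,\delta_z}}{S'_{i,\delta_z}K'_{t,j,v}\Phi'_{l_0\alpha,u}}$ is a clean $l_0$-fold block diagonal of the matrix appearing in base C3, yielding rank $l_0\alpha N/\delta_z$ directly. When $i\in\mathcal{G}$, the truncated form of $R'_{i,\delta_z}$ and $S'_{i,\delta_z}$ in \eqref{eqn the repair,select matrix of C3 for goal nodes} and \eqref{eqn repair,select matrix of remainder nodes} zeroes out the rightmost $l_0-l_z$ column sub-blocks and collapses the rest into $l_z$ copies of $\binom{R_{i,\delta_0}}{S_{i,\delta_0}K_{t,j,v}\Phi_{\alpha,u}}$; invoking base C3 at $z=0$ (valid because $\mathcal{G}\subset[0:n)\setminus\mathcal{G}'$) gives rank $l_z\alpha N'$, which matches the target $l_0\alpha N/\delta_z$ via the identity $l_z/\delta_0=l_0/\delta_z$.

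The main obstacle is C2. Fixing $i\in\mathcal{G}'$, $z\in[1:m)$, and $\mathcal{D}_z\subset[0:n)\setminus\{i\}$, the plan is to permute the rows and columns of $M'_{i,\mathcal{D}_z}$ so that the $l_0$ instances are grouped together rather than the $r$ parity-check indices, and to argue that the resulting matrix is block upper triangular with $l_0$ diagonal sub-blocks, each equal to the base code's $M_{i,\mathcal{D}_z}$. The delicate part is the sub-blocks originating from $j_v\in\mathcal{D}_z\cap\mathcal{G}$: here $S'_{i,\delta_0}A'_{t,j_v}$ has off-diagonal sub-blocks of the form $S_{i,\delta_0}K_{t,j_v,v'}\Phi_{\alpha,u'}$, which factor through $R_{i,\delta_0}$ by base C3 at $z=0$ applied to $i\in[0:n)\setminus\mathcal{G}'$; hence $\tilde A'_{t,j_v,i,\delta_0}$ is well-defined and itself block upper triangular with diagonal $\tilde A_{t,j_v,i,\delta_0}$. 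Combined with the purely diagonal contributions from the remaining column groups, this shows that the reordered $M'_{i,\mathcal{D}_z}$ is block upper triangular with each diagonal sub-block equal to $M_{i,\mathcal{D}_z}$, which is nonsingular by base C2 for $i\in\mathcal{G}'$. Nonsingularity of $M'_{i,\mathcal{D}_z}$ then follows. The hard part is precisely this bookkeeping: tracking how the upper-triangular propagation introduced by the appended data interacts with the select matrix across the two kinds of sub-blocks without destroying the triangular form.
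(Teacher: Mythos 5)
Your proposal follows essentially the same three-part strategy as the paper's proof in Appendix C: for C1, reduce to a block-diagonal product of zero blocks; for C3, split into $i\in\mathcal{G}$ (truncated block structure, rank $l_z\alpha N/\delta_0 = l_0\alpha N/\delta_z$) versus $i\in[0:n)\setminus(\mathcal{G}\cup\mathcal{G}')$ (pure block diagonal); and for C2, reorder $M'_{i,\mathcal{D}_z}$ into a block-upper-triangular matrix with $l_0$ copies of $M_{i,\mathcal{D}_z}$ on the diagonal, the upper-triangularity of the blocks $\tilde{A}'_{t,j,i,\delta_0}$ for $j\in\mathcal{G}$ coming from P0 and the factorization through $R_{i,\delta_0}$ that base C3 (equivalently Lemma \ref{lem the important for the repair of remainder node}) provides. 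Apart from a small notational slip (you write ``$i\in[0:n)\setminus\mathcal{G}'$'' where the relevant inclusion is $i\in\mathcal{G}'\subset[0:n)\setminus\mathcal{G}$ when invoking the C3 assumption for the old goal set $\mathcal{G}$), the argument matches the paper's.
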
	
	
	\section{MDS Array Code Construction by Recursively Applying the Construction Method}
	In the previous section, we provided a construction method that can transform a specific $(n,k)$ MDS array code into a new $(n,k)$ MDS array code with the $\delta_{[0:m)}$-optimal repair property for a set of $\rho$ nodes, while the repair/access property of the remaining $n-\rho$ nodes are preserved. In this section,  by recursively applying the generic construction method, we propose a generic Algorithm \ref{Alg_repair_for_all_nodes}  that can build MDS array codes with the $\delta_{[0:m)}$-optimal repair property for all nodes. Specifically, by directly applying Algorithm \ref{Alg_repair_for_all_nodes} to VBK code \cite{kumar},  we get an MDS array code with the $\delta_{[0:m)}$-optimal access property for all nodes.
	
	\subsection{Generic Algorithm for Constructing MDS Array Code with the $\delta_{[0:m)}$-optimal Repair Property for All Nodes}\label{section algorithm for all nodes dz-optimal access}
	
	In this subsection, we introduce the generic algorithm based on a class of special MDS array codes, which is called \textit{transformable MDS} (TMDS) array codes.	
	\begin{Definition}\label{def T-MDS array code}
		An $(n,k)$ MDS array code defined in the form of \eqref{Eqn Code C} with the $\delta_0$-optimal repair property for all nodes is said to be a TMDS array code if there exists a partition $\mathcal{J}_0,\mathcal{J}_1,\cdots,\mathcal{J}_{\mu-1}$ of set $[0:n)$ such that the nodes in $\mathcal{J}_t$ of this code satisfy C1-C2 for $\alpha=1$, and C3 for $z=0,\alpha=1$.
	\end{Definition}
	
%	\begin{Remark}\label{Rem 1}
%		{\color{blue}In order to distinguish the key matrices in Section \ref{section MDS array code with dz optimal property} and the key matrices of TMDS array code, we use $(\Omega_{t,i,v})_{t\in [0:r),v\in [0:\delta_{m-1}-\delta_0)}$ to represent the key matrices of node $i\in [0:n)$ of the TMDS array code, i.e., $K_{t,i,v}=\Omega_{t,i,v}$ for TMDS array code.}
%	\end{Remark}
	
	\begin{Remark}
		If the value $\mu$ in Definition \ref{def T-MDS array code} is $n$, without loss of generality,  we always assume $\mathcal{J}_t=\{t\}$ for $t\in [0:\mu)$.
	\end{Remark}
		
	By means of the TMDS array code, we present a generic algorithm (Algorithm \ref{Alg_repair_for_all_nodes}) that can construct an MDS array code with the $\delta_{[0:m)}$-optimal repair property for all nodes by recursively using the construction method in Section \ref{section MDS array code with dz optimal property} $\mu$ times, where $\mu$ is the value given in Definition \ref{def T-MDS array code}. In the $(s+1)$-th round construction method, we choose code $\mathbbmss{Q}_{s}$ as the base code and denote the resultant code as $\mathbbmss{Q}_{s+1}$, where $0\le s<\mu$ and the key matrices of node $i\in [0:n)$ of code $\mathbbmss{Q}_{s}$ is defined as $K_{t,i,v}^{(s)},t\in [0:r),v\in [0:\delta_{m-1}-\delta_0)$.
	\begin{algorithm}[htbp]
		\caption{}\label{Alg_repair_for_all_nodes}
		\begin{algorithmic}[1]
			\Require An $(n,k)$ TMDS array code $\mathbbmss{Q}_0$ with sub-packetization $N$			
			\Ensure The desired code $\mathbbmss{Q}_\mu$ with the $\delta_{[0:m)}$-optimal repair property for all nodes, where its sub-packetization level is $l_0^\mu N$
		     %  \State	 For node $i\in \mathcal{J}_s$ with $s\in [0:\mu)$, denote $(\Omega_{t,i,v})_{\substack{t\in [0:r),v\in [0:\delta_{m-1}-\delta_0)}}$  the $N\times N'$ key matrices of node $i$ in code $\mathbbmss{Q}_0$   that  satisfy C2-C4{\quad\color{red}Replace $\Omega$ by $K_{t,i,v}^{(0)}$?}
			\For{$s=0$; $s < \mu$; $s++$}
			\State Set code $\mathbbmss{Q}_s$ of sub-packetization level $\alpha_s N$ as the base code, where $\alpha_s=l_0^s$
			\State Designate $\mathcal{J}_s$ given in Definition \ref{def T-MDS array code} as the set $\mathcal{G}$ in the construction method
			\State For $i\in \mathcal{G}$, setting the $\alpha_s N\times \alpha_s N'$ matrix $K_{t,i,v}=K_{t,i,v}^{(s)}$ in \eqref{eqn_specific_form_of_P}, where
			\begin{eqnarray}\label{Eqn key matrix for alg2}
				K_{t,i,v}^{(s)}=\mbox{blkdiag}(K_{t,i,v}^{(0)},K_{t,i,v}^{(0)},\cdots,K_{t,i,v}^{(0)})_{\alpha_s},t\in [0:r),v\in [0:\delta_{m-1}-\delta_0) 
			\end{eqnarray}
			\State Applying the construction method on code $\mathbbmss{Q}_s$ to generate a new code $\mathbbmss{Q}_{s+1}$ with sub-packetization level $l_0\alpha_s N$
			\EndFor
		\end{algorithmic}
	\end{algorithm}
	
	\begin{Theorem}\label{sec:modify adaptive:Thr the requirement for resultant code}
		By choosing an $(n,k)$ TMDS array code over $\mathbb{F}_q$ as base code, a new $(n,k)$ MDS array code generated from Algorithm \ref{Alg_repair_for_all_nodes} has  the $\delta_{[0:m)}$-optimal repair property for all nodes over $\mathbb{F}_q$, where the sub-packetization level of the new $(n,k)$ MDS array code is $l_0^\mu N=(\frac{\delta}{\delta_0})^\mu N$. Moreover, the new $(n,k)$ MDS array code has the $\delta_{[0:m)}$-optimal access property for all nodes if the base code has the $\delta_0$-optimal access property for all nodes.
	\end{Theorem}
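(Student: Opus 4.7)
The plan is to establish Theorem \ref{sec:modify adaptive:Thr the requirement for resultant code} by induction on the round counter $s\in[0:\mu]$ of Algorithm \ref{Alg_repair_for_all_nodes}, maintaining at each $s$ the following three-part invariant: (a) $\mathbbmss{Q}_s$ is an $(n,k)$ MDS array code over $\mathbb{F}_q$ of sub-packetization level $\alpha_s N = l_0^{s} N$; (b) every node in $\bigcup_{t<s}\mathcal{J}_t$ already possesses the $\delta_{[0:m)}$-optimal repair property in $\mathbbmss{Q}_s$, and the $\delta_{[0:m)}$-optimal access property whenever the original TMDS code has the $\delta_0$-optimal access property; (c) every node in $\bigcup_{t\ge s}\mathcal{J}_t$ still has the $\delta_0$-optimal repair/access property in $\mathbbmss{Q}_s$ and, viewed as a potential goal set for the next round, satisfies conditions C1--C3 with respect to the key matrices $K^{(s)}_{t,i,v}$ prescribed by \eqref{Eqn key matrix for alg2}. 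The base case $s=0$ is exactly Definition \ref{def T-MDS array code} together with $\alpha_0=1$.

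For the inductive step, suppose the invariant holds at some level $s<\mu$. By part (c), the code $\mathbbmss{Q}_s$ and the set $\mathcal{G}=\mathcal{J}_s$ meet every hypothesis of Theorem \ref{Thr all conditions}; applying the generic construction method of Section \ref{Section the generic SA} therefore outputs an $(n,k)$ MDS array code $\mathbbmss{Q}_{s+1}$ in which the nodes of $\mathcal{J}_s$ acquire the $\delta_{[0:m)}$-optimal repair/access property. Since the space-sharing step multiplies the sub-packetization level by $l_0$, part (a) and the newly added part of (b) at level $s+1$ follow. The older nodes in $\bigcup_{t<s}\mathcal{J}_t$ lie among the remainder nodes of round $s+1$, so invoking Theorem \ref{Thr_repair_for_remainder_nodes} once for each $z\in[0:m)$ preserves each $\delta_z$-optimal repair/access property they already hold, keeping (b) intact. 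The nodes in $\bigcup_{t>s}\mathcal{J}_t$ are also remainder nodes, and Lemma \ref{Theorem the recursion result} guarantees that they continue to satisfy C1--C3 in $\mathbbmss{Q}_{s+1}$ with key matrices obtained by block-diagonal lifting, which matches exactly the prescription \eqref{Eqn key matrix for alg2} at level $s+1$; this yields (c).

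Reading off the invariant at $s=\mu$ produces the desired code $\mathbbmss{Q}_\mu$ of sub-packetization $l_0^{\mu}N=(\delta/\delta_0)^{\mu}N$ in which every node enjoys the $\delta_{[0:m)}$-optimal repair property, together with the $\delta_{[0:m)}$-optimal access property whenever the TMDS base code has the $\delta_0$-optimal access property. The hard part of the plan is item (c) of the invariant: one must verify that the not-yet-processed nodes retain C1--C3 as the sub-packetization inflates round by round, and that their evolving key matrices remain compatible with the block-diagonal repair/select matrices \eqref{eqn the repair,select matrix of C3 for goal nodes}--\eqref{eqn repair,select matrix of remainder nodes} of the already-processed nodes. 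This compatibility is precisely what Lemma \ref{Theorem the recursion result} supplies for a single round; threading it through $\mu$ rounds uses only the associativity of the $\mbox{blkdiag}$ operator, which makes the rank conditions propagate cleanly. The secondary subtlety of preserving each $\delta_z$-repair property of already-converted nodes through later rounds is disposed of uniformly by applying Theorem \ref{Thr_repair_for_remainder_nodes} once per $z$, with C3 furnished precisely by the lifted key matrices tracked in (c).
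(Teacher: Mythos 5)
Your proposal is correct and takes essentially the same approach as the paper: both arguments propagate conditions C1--C3 from $\mathbbmss{Q}_0$ to each $\mathbbmss{Q}_s$ by recursively applying Lemma \ref{Theorem the recursion result}, then invoke Theorem \ref{Thr all conditions} (built from Theorems \ref{theorem the MDS property of code C3}--\ref{Thr_repair_for_remainder_nodes}) at each round to obtain the repair/access properties. Your explicit three-part invariant spells out the bookkeeping that the paper's one-paragraph reduction-to-base-case argument leaves implicit, but the underlying mechanism is the same.
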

	\begin{proof}
		According to Theorem \ref{Thr all conditions}, to obtain the desired MDS array code from Algorithm \ref{Alg_repair_for_all_nodes},  it is sufficient to show that for any $s\in [0:\mu)$, the nodes in $\mathcal{J}_s$ of code $\mathbbmss{Q}_s$ satisfy C1-C3 by setting the key matrices $K_{t,i,v}^{(s)}, t\in [0:r),i\in \mathcal{J}_s, v\in [0:\delta_{m-1}-\delta_0)$ 
		in \eqref{Eqn key matrix for alg2}. By recursively applying Lemma \ref{Theorem the recursion result}, we only need to varify that the $N\times N'$ matrices $K_{t,i,v}^{(0)},~t\in [0:r),~i\in \mathcal{J}_s, ~v\in [0:\delta_{m-1}-\delta_0)$ are the key matrices of the nodes in $\mathcal{J}_s$ of code $\mathbbmss{Q}_0$ such that they satisfy C1-C3, which is guaranteed by   Definition \ref{def T-MDS array code}.
	\end{proof}
	
	In the following, we provide an example of Algorithm \ref{Alg_repair_for_all_nodes}.
\begin{Example}
Applying  Algorithm \ref{Alg_repair_for_all_nodes} to the base code $\mathbbmss{C}_0$ in Example \ref{Exp set Piz for (1,2,3)}, we obtain the codes $\mathbbmss{Q}_1,\mathbbmss{Q}_2,\cdots,\mathbbmss{Q}_8$ through  eight rounds of the construction method  where  in round $s\in [1:8]$, the set $\{2s-2,2s-1\}$ is chosen as the set $\mathcal{G}$. Let $\mathbf{g}_{s,i}$ of length $3^s N$ be the data stored at node $i\in [0:16)$ of code $\mathbbmss{Q}_s$, where $s\in [1:8]$. For convenience, the  data $\mathbf{g}_{s,i}$  is always represented by 
			\begin{eqnarray*}
				\mathbf{g}_{s,i}=\left(\begin{array}{c}
					\mathbf{f}_i^{(0)} \\
					\mathbf{f}_i^{(1)}\\
					\vdots\\
					\mathbf{f}_i^{(3^s-1)}
				\end{array}\right), & 0\le i<16,\,s\in [1:8],
			\end{eqnarray*}
			where $\mathbf{f}_i^{(a)}$ is a column vector of length $N$. 
			Note that the PCGs of code $\mathbbmss{Q}_1$ has been shown in Example \ref{Exp (16,10) MDS array code C3 (2,3)}, i.e., the code $\mathbbmss{C}_2$ in Example \ref{Exp (16,10) MDS array code C3 (2,3)} is the code $\mathbbmss{Q}_1$.
			
			In what follows, we give the $t$-th PCG of the code $\mathbbmss{Q}_2$, while those of $\mathbbmss{Q}_3,\mathbbmss{Q}_4,\cdots,\mathbbmss{Q}_8$ can be obtained similarly. %To save the space, hereafter we only give the $t$-th PCG of codes $\mathbbmss{Q}_1$ and $\mathbbmss{Q}_2$. 
			By means of  Algorithm \ref{Alg the set Pij}, the sets $\mathcal{P}_{i,1}$ and $\mathcal{P}_{i,1}^{(a)}$, $a\in [0:2)$ of node $i\in \mathcal{G}=[2:4)$ for the second round are 
			\begin{eqnarray*}
			    &&\mathcal{P}_{i,1}=\{\mathbf{g}_{1,i}^{(2)}[0],\mathbf{g}_{1,i}^{(2)}[1]\}=\left\{\left(\begin{array}{c} \mathbf{f}_i^{(6)}[0]\\\mathbf{f}_i^{(7)}[0]\\\mathbf{f}_i^{(8)}[0] \end{array}\right),\left(\begin{array}{c} \mathbf{f}_i^{(6)}[1]\\\mathbf{f}_i^{(7)}[1]\\\mathbf{f}_i^{(8)}[1] \end{array}\right)\right\},\\
				&&\mathcal{P}_{i,1}^{(0)}=\{\mathbf{g}_{1,i}^{(2)}[0]\}=\left\{\left(\begin{array}{c} \mathbf{f}_i^{(6)}[0]\\\mathbf{f}_i^{(7)}[0]\\\mathbf{f}_i^{(8)}[0] \end{array}\right)\right\},\mathcal{P}_{i,1}^{(0)}=\{\mathbf{g}_{1,i}^{(2)}[1]\}=\left\{\left(\begin{array}{c} \mathbf{f}_i^{(6)}[1]\\\mathbf{f}_i^{(7)}[1]\\\mathbf{f}_i^{(8)}[1] \end{array}\right)\right\}.
			\end{eqnarray*}
			%By setting the key matirx $\Omega_{t,i,v}$ {\color{red}DELETE?}of base code $\mathbbmss{C}_0$ as $\zeta_v^tR_{i,\delta_0}^\top=\zeta_v^tV_{\lfloor\frac{i}{2}\rfloor,i\%2}^\top$ for $i\in [0:16),t\in [0:6)$ and $v=0$, 
			Through the generic construction method, the $t$-th PCG of the code $\mathbbmss{Q}_2$ are given as
			\begin{eqnarray*}\setlength\arraycolsep{0.6pt}
				\begin{small}
					\hspace{-3mm}\left(\hspace{-1.5mm}\begin{array}{l}
						A_{t,0}\mathbf{f}_0^{(0)}+\zeta_0^tV_{0,0}^\top\mathbf{f}_0^{(2)}[0]\\
						A_{t,0}\mathbf{f}_0^{(1)}+\zeta_0^tV_{0,0}^\top\mathbf{f}_0^{(2)}[1]\\
						A_{t,0}\mathbf{f}_0^{(2)}\\
						A_{t,0}\mathbf{f}_0^{(3)}+\zeta_0^tV_{0,0}^\top\mathbf{f}_0^{(5)}[0]\\
						A_{t,0}\mathbf{f}_0^{(4)}+\zeta_0^tV_{0,0}^\top\mathbf{f}_0^{(5)}[1]\\
						A_{t,0}\mathbf{f}_0^{(5)}\\
						A_{t,0}\mathbf{f}_0^{(6)}+\zeta_0^tV_{0,0}^\top\mathbf{f}_0^{(8)}[0]\\
						A_{t,0}\mathbf{f}_0^{(7)}+\zeta_0^tV_{0,0}^\top\mathbf{f}_0^{(8)}[1]\\
						A_{t,0}\mathbf{f}_0^{(8)}
					\end{array}\hspace{-1mm}\right)+\left(\hspace{-1.5mm}\begin{array}{l}
						A_{t,1}\mathbf{f}_1^{(0)}+\zeta_0^tV_{0,1}^\top\mathbf{f}_1^{(2)}[0]\\
						A_{t,1}\mathbf{f}_1^{(1)}+\zeta_0^tV_{0,1}^\top\mathbf{f}_1^{(2)}[1]\\
						A_{t,1}\mathbf{f}_1^{(2)}\\
						A_{t,1}\mathbf{f}_1^{(3)}+\zeta_0^tV_{0,1}^\top\mathbf{f}_1^{(5)}[0]\\
						A_{t,1}\mathbf{f}_1^{(4)}+\zeta_0^tV_{0,1}^\top\mathbf{f}_1^{(5)}[1]\\
						A_{t,1}\mathbf{f}_1^{(5)}\\
						A_{t,1}\mathbf{f}_1^{(6)}+\zeta_0^tV_{0,1}^\top\mathbf{f}_1^{(8)}[0]\\
						A_{t,1}\mathbf{f}_1^{(7)}+\zeta_0^tV_{0,1}^\top\mathbf{f}_1^{(8)}[1]\\
						A_{t,1}\mathbf{f}_1^{(8)}
					\end{array}\hspace{-1mm}\right)+	 \left(\hspace{-1.5mm}\begin{array}{c}
						A_{t,2}\mathbf{f}_2^{(0)}+\zeta_0^tV_{1,0}^\top\mathbf{f}_2^{(6)}[0]\\
						A_{t,2}\mathbf{f}_2^{(1)}+\zeta_0^tV_{1,0}^\top\mathbf{f}_2^{(7)}[0]\\
						A_{t,2}\mathbf{f}_2^{(2)}+\zeta_0^tV_{1,0}^\top\mathbf{f}_2^{(8)}[0]\\
						A_{t,2}\mathbf{f}_2^{(3)}+\zeta_0^tV_{1,0}^\top\mathbf{f}_2^{(6)}[1]\\
						A_{t,2}\mathbf{f}_2^{(4)}+\zeta_0^tV_{1,0}^\top\mathbf{f}_2^{(7)}[1]\\
						A_{t,2}\mathbf{f}_2^{(5)}+\zeta_0^tV_{1,0}^\top\mathbf{f}_2^{(8)}[1]\\
						A_{t,2}\mathbf{f}_2^{(6)}\\
						A_{t,2}\mathbf{f}_2^{(7)}\\
						A_{t,2}\mathbf{f}_2^{(8)}
					\end{array}\hspace{-1mm}\right)+	\left(\hspace{-1.5mm}\begin{array}{c}
						A_{t,3}\mathbf{f}_3^{(0)}+\zeta_0^tV_{1,1}^\top\mathbf{f}_3^{(6)}[0]\\
						A_{t,3}\mathbf{f}_3^{(1)}+\zeta_0^tV_{1,1}^\top\mathbf{f}_3^{(7)}[0]\\
						A_{t,3}\mathbf{f}_3^{(2)}+\zeta_0^tV_{1,1}^\top\mathbf{f}_3^{(8)}[0]\\
						A_{t,3}\mathbf{f}_3^{(3)}+\zeta_0^tV_{1,1}^\top\mathbf{f}_3^{(6)}[1]\\
						A_{t,3}\mathbf{f}_3^{(4)}+\zeta_0^tV_{1,1}^\top\mathbf{f}_3^{(7)}[1]\\
						A_{t,3}\mathbf{f}_3^{(5)}+\zeta_0^tV_{1,1}^\top\mathbf{f}_3^{(8)}[1]\\
						A_{t,3}\mathbf{f}_3^{(6)}\\
						A_{t,3}\mathbf{f}_3^{(7)}\\
						A_{t,3}\mathbf{f}_3^{(8)}
					\end{array}\hspace{-1mm}\right)+\sum\limits_{i=4}^{15}	\left(\hspace{-1.5mm}\begin{array}{c}
						A_{t,i}\mathbf{f}_i^{(0)}\\
						A_{t,i}\mathbf{f}_i^{(1)}\\
						A_{t,i}\mathbf{f}_i^{(2)}\\
						A_{t,i}\mathbf{f}_i^{(3)}\\
						A_{t,i}\mathbf{f}_i^{(4)}\\
						A_{t,i}\mathbf{f}_i^{(5)}\\
						A_{t,i}\mathbf{f}_i^{(6)}\\
						A_{t,i}\mathbf{f}_i^{(7)}\\
						A_{t,i}\mathbf{f}_i^{(8)}
					\end{array}\hspace{-1mm}\right)=\mathbf{0},
				\end{small}
			\end{eqnarray*}
		where $0\le t<6$.
\end{Example}

	\subsection{An $(n,k)$ MDS array code $\mathbbmss{G}$ by Applying  Algorithm \ref{Alg_repair_for_all_nodes} to VBK code in \cite{kumar}}
	In this subsection, we generate an MDS array code $\mathbbmss{G}$ by applying Algorithm 2 to the $(n,k)$ VBK code which has the $\delta_0$-optimal access property for all nodes and sub-packetization level $N=\delta_0^\tau$, where $\tau={\lceil \frac{n}{\delta_0}\rceil}$, $\delta_0 \in \{2,3,4\}$ and $r=n-k > \delta_0$. In what follows, we first visit the definition of the VBK code. 
	
	Let $\varepsilon \not\in \{0,1\}$ be an element in the field $\mathbb{F}_q$. For $x\in [0:\tau)$ and $\delta_0=2,3,4$, respectively define $\delta_0\times \delta_0$ matrix $\Theta_x$ as
\begin{equation*}
\Theta_x= \left\{\begin{array}{ll}
				\left(\begin{array}{cc}
			\vartheta_{0,x} & \varepsilon \vartheta_{1,x}\\
			\vartheta_{1,x} & \vartheta_{0,x}
		\end{array}\right), & \textrm{if }\delta_0=2, \\
		\left(\begin{array}{ccc}
			\vartheta_{0,x} & \varepsilon \vartheta_{1,x} & \varepsilon \vartheta_{2,x}\\
			\vartheta_{1,x} & \vartheta_{0,x} & \varepsilon \vartheta_{3,x}\\
			\vartheta_{2,x} & \vartheta_{3,x} & \vartheta_{0,x}
		\end{array}\right), & \textrm{if }\delta_0=3,\\
	\left(\begin{array}{cccc}
			\vartheta_{0,x} & \varepsilon \vartheta_{1,x} & \varepsilon \vartheta_{2,x} & \varepsilon \vartheta_{3,x}\\
			\vartheta_{1,x} & \vartheta_{0,x} & \varepsilon \vartheta_{3,x} & \varepsilon\vartheta_{2,x}\\
			\vartheta_{2,x} & \vartheta_{3,x} & \vartheta_{0,x} & \varepsilon \vartheta_{1,x}\\
			\vartheta_{3,x} & \vartheta_{2,x} & \vartheta_{1,x} & \vartheta_{0,x}
		\end{array}\right),	&\textrm{if }\delta_0=4,
		\end{array}\right.    
\end{equation*}	
	where  $\{\vartheta_{i,x},\varepsilon\vartheta_{i,x},\vartheta_{0,x}|i \in \{1,2,3\}, x\in [0:\tau)\}$ is a collection of distinct elements in $\mathbb{F}_q$  with 
	\begin{eqnarray}\label{Eqn the finite field size of VBK code}
		q\ge \left\{\begin{array}{ll}
			6\lceil \frac{n}{2} \rceil+2, & \textrm{if }\delta_0=2, \\
			18\lceil\frac{n}{\delta_0}\rceil+2, & \textrm{if }\delta_0=3,4.\\
		\end{array}\right.
	\end{eqnarray}

Let $s$ be any given positive integer,	for any $0\le x<s$, $u\in [0:\delta_0)$ and given $a=(a_{s-1},a_{s-2},\cdots,a_0)\in [0:\delta_0^s)$, define 
	\begin{eqnarray}\label{Eqn a(i,u)}
		\pi_{s}(a,x,u)=(a_{s-1},\cdots,a_{x+1},u,a_{x-1},\cdots,a_0),
	\end{eqnarray}  
	i.e., replace the $x$-digit $a_x$ of the vector $a=(a_{s-1},\cdots,a_{x+1},u,a_{x-1},\cdots,a_0)$ by $u$. 
	For $x\in [0:\tau)$, $y\in [0:\delta_0)$, and $t=0,1,\cdots,r-1$, define an $N \times N$ matrix $A_{t,\delta_0x+y}$ as
	\begin{eqnarray}\label{eqn the parity-matrix of VBK code}
		A_{t,\delta_0x+y}=\sum\limits_{a=0}^{N-1}\lambda_{\delta_0x+y,a_x}^te_a^\top e_a+\sum\limits_{a=0,a_x=y}^{N-1}\sum\limits_{u=0,u\ne y}^{\delta_0-1}\varepsilon_{u,y}\lambda_{\delta_0x+y,u}^te_a^\top e_{\pi_\tau(a,x,u)},
	\end{eqnarray}
	where $\{e_a|a\in a\in [0:N)\}$ is the standard basis of $\mathbb{F}_q^N$ defined in \eqref{Eqn_SB},  
	\begin{eqnarray*}\label{eqn coefficients lambda for VBK code}
		\lambda_{\delta_0x+y,v}=\Theta_x(v,y) \textrm{~for~} 0\le x<\tau, 0\le v,y<\delta_0
	\end{eqnarray*}
	 and
	\begin{eqnarray}\label{eqn coefficients gamma for VBK code}
		\varepsilon_{u,y}=\left\{\begin{array}{ll}
			\varepsilon, & \mathrm{if~}u<y,\\
			1, & \mathrm{if~}u>y.
		\end{array}\right.
	\end{eqnarray}
	Then, the $(n,k)$ VBK code is defined by \eqref{Eqn Code C} with parity-check matrices $(A_{t,i})_{t\in [0:r),i\in [0:n)}$ given in \eqref{eqn the parity-matrix of VBK code},  $\delta_0$-repair matrices $R_{i,\delta_0}$ and $\delta_0$-select matrices $S_{i,\delta_0}$  given as
	\begin{eqnarray}\label{eqn the repair, select matrix of VBK code}
		R_{i,\delta_0}=S_{i,\delta_0}=V_{\lfloor\frac{i}{\delta_0}\rfloor,i\%\delta_0}, i\in [0:n).
	\end{eqnarray}

	In the sequel, we show that the VBK code is a TMDS array code with the sets $\mathcal{J}_0,\mathcal{J}_1,\cdots,\mathcal{J}_{\mu-1}$  in Definition \ref{def T-MDS array code} being
	\begin{eqnarray}\label{eqn the set J of VBK code}
		\mathcal{J}_s=\left\{\begin{array}{ll}
			\{s\delta_0,s\delta_0+1,\cdots,s\delta_0+\delta_0-1\}, & \mbox{if~} 0\le s<\mu-1,\\
			\{s\delta_0,s\delta_0+1,\cdots,n-1\}, & \mbox{if~}s=\mu-1,
		\end{array}\right.	
	\end{eqnarray}
where $\mu=\tau$ and the matrix $K_{t,i,v}^{(0)}$ in \eqref{Eqn key matrix for alg2} is set as
	\begin{eqnarray}\label{eqn the key matrix of VBK code}
		K_{t,i,v}^{(0)}=\zeta_v^t R_{i,\delta_0}^\top,& t\in [0:r),i\in [0:n),v\in [0:\delta_{m-1}-\delta_0),
	\end{eqnarray}
	with $\zeta_0,\zeta_1,\cdots,\zeta_{\delta_{m-1}-\delta_0-1}$ being $\delta_{m-1}-\delta_0$ distinct elements in $\mathbb{F}_q\backslash\{\lambda_{i,u}|i\in [0:n),u\in [0:\delta_0)\}$.

	 In what follows, we check that for any $x\in [0:\mu)$, the nodes is set $\mathcal{J}_x$ of VBK code satisfy C1-C2 for $\alpha=1$, and C3 for $\alpha=1,z=0$. For convenience,  let $\epsilon_0,\epsilon_1,\cdots, \epsilon_{N'-1}$ be the standard basis of $\mathbb{F}_q^{N'}$ defined in \eqref{Eqn_SB} from now on,
	where $N'=\frac{N}{\delta_0}=\delta_0^{\tau-1}$. 
	
	First of all, we verify that for any $x\in [0:\mu)$, the nodes in set $\mathcal{J}_x$  of VBK code satisfy C1 for $\alpha=1$ and C3 for $\alpha=1,z=0$  with the help of Lemma \ref{lem the first helpful lemma for the proof of Thm. 10},  whose proof is given in Appendix \ref{appen proof of lems 5,6,8}.
	
	\begin{Lemma}\label{lem the first helpful lemma for the proof of Thm. 10}
		For any $0\le \tilde{x}\ne x <\tau$ and $0\le u,v,h<\delta_0$, 
		\begin{itemize}
			\item [(i)] $V_{x,u}V_{x,u}^\top=I_{N'}$ and $V_{x,u}V_{x,v}^\top=\mathbf{0}$ if $u\ne v$; and
			\item [(ii)] $V_{x,u}(V_{\tilde{x},v}^\top\Delta_h)=T_{x,\tilde{x},v,h}V_{x,u}$ for some $N'\times N'$ matrices $T_{x,\tilde{x},v,h}$ with
			\begin{eqnarray}\label{eqn the row vector of matrix Tijuvh}
				T_{x,\tilde{x},v,h}(a,:)=\left\{\begin{array}{ll}
					\epsilon_{(h,a_{\tau-2},\cdots,a_{\tilde{x}},a_{\tilde{x}-2},\cdots,a_0)}, & \mbox{if~~}0\le x<\tilde{x}<\tau,a_{\tilde{x}-1}=v,\\
					\epsilon_{(h,a_{\tau-2},\cdots,a_{\tilde{x}+1},a_{\tilde{x}-1},\cdots,a_0)}, & \mbox{if~~}0\le \tilde{x}<x<\tau,a_{\tilde{x}}=v,\\
					\mathbf{0}, & \mbox{otherwise},
				\end{array}\right.
			\end{eqnarray}
			where $a=(a_{\tau-2},a_{\tau-3},\cdots,a_0)\in [0:N')$ and  $\Delta_h$ is the $N'\times N$ matrix defined in \eqref{eqn the matrix Delta}.
		\end{itemize}
	\end{Lemma}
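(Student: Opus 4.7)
The plan is to prove both parts by direct entry-wise computation, exploiting the $\delta_0$-ary digit structure of the index sets defining $V_{x,u}$. The key book-keeping observation is that, for any $\bar a\in[0:N')$, the row $V_{x,u}(\bar a,:)=e_a$, where $a\in[0:N)$ is the unique element with $a_x=u$ whose remaining digits agree with those of $\bar a$; explicitly, $a_j=\bar a_j$ for $j<x$, $a_x=u$, and $a_j=\bar a_{j-1}$ for $j>x$. For part (i), both matrices have $(\bar a,\bar a')$-entry equal to $e_ae_{a'}^\top=\delta_{a,a'}$. When $u=v$, the bijection $\bar a\leftrightarrow a$ gives $I_{N'}$; when $u\ne v$, the underlying indices disagree at coordinate $x$, so every entry vanishes.

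For part (ii) I treat the case $0\le x<\tilde x<\tau$; the case $\tilde x<x$ is handled analogously with the two positions swapping roles. I will compute the $(\bar a,c)$-entry of both $V_{x,u}V_{\tilde x,v}^\top\Delta_h$ and $T_{x,\tilde x,v,h}V_{x,u}$ and check agreement. First, $V_{x,u}V_{\tilde x,v}^\top$ has $(\bar a,\bar b)$-entry $\delta_{a,b}$ where $a_x=u$ and $b_{\tilde x}=v$, so it equals $1$ exactly when $a=b$; this forces $a_{\tilde x}=v$, and since $x<\tilde x$ gives $a_{\tilde x}=\bar a_{\tilde x-1}$, we recover precisely the lemma's condition $\bar a_{\tilde x-1}=v$. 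Under this condition $\bar b$ is uniquely determined as the reduction of $a$ with digit $\tilde x$ deleted, so post-multiplication by $\Delta_h$ constrains $c$ to satisfy $c_{\tau-1}=h$ and $(c_{\tau-2},\ldots,c_0)=\bar b$. Translating the digits of $\bar b$ back to those of $\bar a$ piecewise (below $x$, between $x$ and $\tilde x$, at and above $\tilde x$) shows that $c_x=u$ and that the reduction of $c$ with digit $x$ deleted is exactly the vector $(h,\bar a_{\tau-2},\ldots,\bar a_{\tilde x},\bar a_{\tilde x-2},\ldots,\bar a_0)$ prescribed by the lemma.

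On the other side, $(T_{x,\tilde x,v,h}V_{x,u})(\bar a,c)$ equals $V_{x,u}(\bar a',c)$ whenever the row $T_{x,\tilde x,v,h}(\bar a,:)=\epsilon_{\bar a'}$ is nonzero, and this in turn equals $1$ iff $c_x=u$ and the reduction of $c$ with digit $x$ removed is $\bar a'$. Matching this characterization with the description of the LHS obtained above closes the $x<\tilde x$ case; the $\tilde x<x$ case follows by the same argument, with the piecewise digit ranges and the surviving components of $\bar a$ reassembled in the order specified by the second clause in \eqref{eqn the row vector of matrix Tijuvh}. The main obstacle is purely bureaucratic: deleting or inserting a digit at a fixed position shifts all strictly-higher-indexed digits by one, so the digits of $\bar a,\bar b,a,c,\bar a'$ must be related piecewise depending on where they sit relative to $x$ and $\tilde x$, and one must check that the single condition on $\bar a$ (namely $\bar a_{\tilde x-1}=v$ or $\bar a_{\tilde x}=v$) controls the vanishing pattern of the product exactly as in the lemma's formula.
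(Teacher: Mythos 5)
Your approach — computing both sides row by row via $\delta_0$-ary digit bookkeeping — is the same as the paper's (the paper packages the bookkeeping into its $\varphi$ function, but it is the same idea). Part (i) is correct, and the $0\le x<\tilde x<\tau$ sub-case of part (ii), which you carry out in detail, is also correct.

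The gap is the sentence ``the case $\tilde x<x$ is handled analogously with the two positions swapping roles.'' The two cases are not symmetric, precisely because $\Delta_h$ always singles out the top digit position $\tau-1$ regardless of where $x$ and $\tilde x$ sit. Trace your own derivation for $\tilde x<x$: the $\bar a$-th row of $V_{x,u}V_{\tilde x,v}^\top\Delta_h$ equals $e_{h N'+\bar b}$ (provided $\bar a_{\tilde x}=v$), where $\bar b$ is obtained by inserting $u$ at position $x$ of $\bar a$ to form $a$ and then deleting position $\tilde x$ of $a$. Since the deletion now occurs \emph{below} the insertion, it pushes the inserted $u$ from position $x$ down to position $x-1$ of $\bar b$, and position $x$ of $\bar b$ is occupied by $a_x=\bar a_{x-1}$. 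Hence $(h N'+\bar b)_x=\bar a_{x-1}$, which is generically different from $u$, so $e_{h N'+\bar b}$ lies outside the row space of $V_{x,u}$ and \emph{no} matrix $T$ can satisfy $V_{x,u}(V_{\tilde x,v}^\top\Delta_h)=T V_{x,u}$ when $\tilde x<x$. A minimal counterexample: for $\tau=2$, $\delta_0=2$, $x=1$, $\tilde x=0$, $u=v=0$, $h=1$, the product $V_{1,0}V_{0,0}^\top\Delta_1$ has first row $e_2$, which is not in the span of the rows $e_0,e_1$ of $V_{1,0}$, whereas the prescribed $T_{1,0,0,1}V_{1,0}$ has first row $e_1$. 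So statement (ii) is in fact false as written for $\tilde x<x$; the paper's own proof has the same issue at the last step of display \eqref{eqn a*V}, where it asserts $e_{\varphi(\bar a,x,u)}=e_{h\cdot N'+b}$ in both sub-cases even though the two sides differ at digit position $x$ when $\tilde x<x$. Rather than closing the case by analogy, your sketch, if carried out, should surface this mismatch.
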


	\begin{Theorem}\label{Thr VBK code satisfy C2}
		By setting the $N\times N'$ key matrix $K_{t,i,v}^{(0)}$  of node $i$ of VBK code  as in \eqref{eqn the key matrix of VBK code},   the nodes with indices in $\mathcal{J}_x$ ($x\in [0:\mu)$)  of VBK code satisfy C1 for $\alpha=1$  and C3 for $z=0,\alpha=1$.
	\end{Theorem}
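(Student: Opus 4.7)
The plan is to verify the two conditions by directly computing the relevant products of select matrices and key matrices, and then invoking the two parts of Lemma \ref{lem the first helpful lemma for the proof of Thm. 10}. Throughout, I exploit the fact that for $i\in\mathcal{J}_x$ every index satisfies $\lfloor i/\delta_0\rfloor=x$, so from \eqref{eqn the repair, select matrix of VBK code} and \eqref{eqn the key matrix of VBK code} we have $R_{i,\delta_0}=S_{i,\delta_0}=V_{x,\,i\%\delta_0}$ and $K_{t,i,v}^{(0)}=\zeta_v^t V_{x,\,i\%\delta_0}^{\top}$.

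For C1, fix two distinct indices $i,j\in\mathcal{J}_x$. Since both sit in the same block, $\lfloor i/\delta_0\rfloor=\lfloor j/\delta_0\rfloor=x$, while $i\%\delta_0\ne j\%\delta_0$. Plugging into the definitions gives
\[
S_{j,\delta_0}K_{t,i,v}^{(0)}=\zeta_v^{t}\,V_{x,\,j\%\delta_0}\,V_{x,\,i\%\delta_0}^{\top},
\]
and Lemma \ref{lem the first helpful lemma for the proof of Thm. 10}(i) forces this product to vanish, confirming C1.

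For C3 with $z=0$ and $\alpha=1$, pick $j\in\mathcal{J}_x$ and a remainder node $i\in[0:n)\setminus\mathcal{J}_x$. Write $x'=\lfloor i/\delta_0\rfloor$, $y'=i\%\delta_0$, $y=j\%\delta_0$. Since $\mathcal{J}_x$ is by construction the full set of indices whose quotient by $\delta_0$ equals $x$, the assumption $i\notin\mathcal{J}_x$ yields $x'\ne x$. Using $\Phi_{1,u}=\Delta_u$, the lower block of the matrix in C3 becomes
\[
S_{i,\delta_0}K_{t,j,v}^{(0)}\Phi_{1,u}=\zeta_v^{t}\,V_{x',y'}\bigl(V_{x,y}^{\top}\Delta_u\bigr),
\]
and Lemma \ref{lem the first helpful lemma for the proof of Thm. 10}(ii), whose hypothesis is precisely $x'\ne x$, rewrites this as $\zeta_v^{t}\,T_{x',x,y,u}\,V_{x',y'}=\zeta_v^{t}\,T_{x',x,y,u}\,R_{i,\delta_0}$. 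Hence every row of $S_{i,\delta_0}K_{t,j,v}^{(0)}\Phi_{1,u}$ lies in the row span of $R_{i,\delta_0}$, so the stacked $2N'\times N$ matrix in C3 has rank exactly $N'=N/\delta_0$, which is the full row rank of $R_{i,\delta_0}$ alone.

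The only mild subtlety is the possibly shorter last block $\mathcal{J}_{\mu-1}$ when $n$ is not divisible by $\delta_0$. This is harmless: every $i\in\mathcal{J}_{\mu-1}$ still satisfies $\lfloor i/\delta_0\rfloor=\mu-1$, so the identities $R_{i,\delta_0}=V_{\mu-1,\,i\%\delta_0}$ and $K_{t,i,v}^{(0)}=\zeta_v^{t}V_{\mu-1,\,i\%\delta_0}^{\top}$ still hold and both arguments go through verbatim. The entire proof therefore reduces to two short appeals to Lemma \ref{lem the first helpful lemma for the proof of Thm. 10}, and I expect no genuine obstacle beyond a careful bookkeeping of which indices $x,x'$ appear in the $V_{\cdot,\cdot}$ factors so that part (i) applies in one case (same outer index, different inner indices) and part (ii) in the other (different outer indices).
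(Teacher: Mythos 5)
Your proof is correct and follows essentially the same route as the paper's: C1 reduces to $V_{x,u}V_{x,u'}^\top=\mathbf{0}$ via Lemma \ref{lem the first helpful lemma for the proof of Thm. 10}(i), and C3 reduces to $V_{x',y'}(V_{x,y}^\top\Delta_u)=T_{x',x,y,u}V_{x',y'}$ via part (ii), after which the rank is preserved because the lower block is a left multiple of $R_{i,\delta_0}$. Your explicit remark on the shorter final block $\mathcal{J}_{\mu-1}$ is a harmless addition; the computation is identical to the paper's.
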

	\begin{proof}
		For any $i,j\in [0:n)$, let $u=i\%\delta_0$ and $u'=j\%\delta_0$.  
		
		Firstly, consider $i,j\in \mathcal{J}_x$ with $i\ne j$ for $x\in [0:\mu)$. According to \eqref{eqn the set J of VBK code}, we have $\lfloor \frac{i}{\delta_0}\rfloor=\lfloor \frac{j}{\delta_0}\rfloor=x$ and $u\ne u'$. Thus by (i) of Lemma \ref{lem the first helpful lemma for the proof of Thm. 10} and \eqref{eqn the repair, select matrix of VBK code},
		\begin{eqnarray*}
			S_{i,\delta_0}K_{t,j,v}^{(0)}=S_{i,\delta_0}\cdot\zeta_v^tR_{j,\delta_0}^\top=\zeta_v^tV_{x,u}V_{x,u'}^\top=\mathbf{0}, t\in [0:r),v\in [0:\delta_{m-1}-\delta_0),
		\end{eqnarray*}
		which means that the nodes with indices in $\mathcal{J}_x$ of VBK code satisfy C1.
		
		 Next we show that any node  $j\in \mathcal{J}_x$ satisfy C3 for $z=0,\alpha=1$. For any $i\in \mathcal{J}_{\tilde{x}}$ with $0\le \tilde{x}\ne x<\mu=\tau$, then 
		\begin{eqnarray*}
			\textrm{Rank}\left(\left(\begin{array}{c}
				R_{i,\delta_0}\\
				S_{i,\delta_0}(K_{t,j,v}^{(0)}\Phi_{1,h})
			\end{array}\right)\right)
			&=&\textrm{Rank}\left(\left(\begin{array}{c}
				V_{\tilde{x},u}\\
				\zeta_v^tV_{\tilde{x},u}(V_{x,u'}^\top \Delta_h)
			\end{array}\right)\right)\\
			&=&\textrm{Rank}\left(\left(\begin{array}{c}
				V_{\tilde{x},u}\\
				\zeta_v^tT_{\tilde{x},x,u',h}V_{\tilde{x},u}
			\end{array}\right)\right)\\
                        &=&\textrm{Rank}(V_{\tilde{x},u})\\
			&=&{N\over \delta_0}
		\end{eqnarray*}
		for $h\in [0:\delta_0)$, $t\in [0:r)$ and $v\in [0:\delta_{m-1}-\delta_0)$, where the first equality holds due to \eqref{eqn the matrix Phi}, \eqref{eqn the repair, select matrix of VBK code} and \eqref{eqn the key matrix of VBK code}, and the second equality follows from (ii) of Lemma \ref{lem the first helpful lemma for the proof of Thm. 10}.  Then, the nodes with indices in set $\mathcal{J}_x$ satisfy C3 for $z=0,\alpha=1$, 
	\end{proof}
	
	Next, we show that for any $s\in [0:\mu)$, the nodes with indices in set  $\mathcal{J}_s$ of VBK code satisfy C2 for $\alpha=1$. That is, we need to verify that the matrix $M_{i,\mathcal{D}_z}$ defined in \eqref{eqn matrix MijDz} with $\alpha=1$ is nonsingular for any given $i\in [0:n)$, $z\in [1:m)$ and $\mathcal{D}_z=\{j_0,j_1,\cdots,j_{r-\delta_z-1}\}\subset [0:n)\backslash\{i\}$. According to the definition of matrix $M_{i,\mathcal{D}_z}$, the verification of its  nonsingularity requires  to determine the form of $S_{i,\delta_0}A_{t,i}$ and $\tilde{A}_{t,j,i,\delta_0}$ for $0\le i \ne j<n$, which will be ensured by Lemma \ref{lem alignment interference for VBK code}. In addition, Lemmas \ref{lem important for def_P4 of C2} and \ref{lem the helpful lemma for the proof of Thm. 10}  are also 
		critical to proving the invertibility of the matrix $M_{i,\mathcal{D}_z}$. Lemmas \ref{lem important for def_P4 of C2} can be proved similar to the proof of MDS property of VBK code  in \cite{kumar}, thus we omit it here. Whereas,  the proofs of Lemmas \ref{lem alignment interference for VBK code} and \ref{lem the helpful lemma for the proof of Thm. 10} are given in Appendix \ref{appen proof of lems 5,6,8}.

	\begin{Lemma}\label{lem alignment interference for VBK code}
			For any  $i=\delta_0\tilde{x}+\tilde{y}$ and $j=\delta_0x+y\in [0:n)\backslash\{i\}$, where $0\le x,\tilde{x}<\tau$ and $0\le y,\tilde{y}<\delta_0$, 
			\begin{itemize}
				\item [(i)] $S_{i,\delta_0}A_{t,i}=\lambda_{i,\tilde{y}}^tV_{\tilde{x},\tilde{y}}+\sum\limits_{u=0,u\ne \tilde{y}}^{\delta_0-1}\varepsilon_{u,\tilde{y}}\lambda_{i,u}^tV_{\tilde{x},u}$;
				\item [(ii)] The matrix $\tilde{A}_{t,j,i,\delta_0}$  in \eqref{repair node requirement3} is of the form
				\begin{eqnarray}\label{eqn matrix Atji for VBK code}
					\tilde{A}_{t,j,i,\delta_0}=\left\{\begin{array}{ll}
						\sum\limits_{a=0}^{N'-1}\lambda_{j,a_x}^t\epsilon_a^\top \epsilon_a+\sum\limits_{a=0,a_x=y}^{N'-1}\sum\limits_{u=0,u\ne y}^{\delta_0-1}\varepsilon_{u,y}\lambda_{j,u}^t\epsilon_a^\top \epsilon_{\pi_{\tau-1}(a,x,u)}, & \textrm{if~}x<\tilde{x},\\[6pt]
					\lambda_{j,\tilde{y}}^tI_{N'}, & \textrm{if~}x=\tilde{x},\\[6pt]
						\sum\limits_{a=0}^{N'-1}\lambda_{j,a_{x-1}}^t\epsilon_a^\top \epsilon_a+\sum\limits_{a=0,a_{x-1}=y}^{N'-1}\sum\limits_{u=0,u\ne y}^{\delta_0-1}\varepsilon_{u,y}\lambda_{j,u}^t\epsilon_a^\top \epsilon_{\pi_{\tau-1}(a,x-1,u)}, & \textrm{if~}x>\tilde{x},
					\end{array}\right.
				\end{eqnarray}
			\end{itemize}
			where $A_{t,i}$ and $S_{i,\delta_0}$ are defined in \eqref{eqn the parity-matrix of VBK code} and \eqref{eqn the repair, select matrix of VBK code}, respectively.
		\end{Lemma}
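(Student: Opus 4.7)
The plan is to prove both parts by direct computation, exploiting the fact that $S_{i,\delta_0} = V_{\tilde{x},\tilde{y}}$ acts on the standard basis by ``selecting the rows with $a_{\tilde{x}} = \tilde{y}$,'' together with the natural order-preserving bijection between $\{a \in [0:N) : a_{\tilde{x}} = \tilde{y}\}$ and $[0:N')$ obtained by deleting the $\tilde{x}$-th digit of $a$.

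For part (i), I would substitute the definition \eqref{eqn the parity-matrix of VBK code} of $A_{t,i}$ and compute $V_{\tilde{x},\tilde{y}} A_{t,i}$ one row at a time. For each row $e_a$ of $V_{\tilde{x},\tilde{y}}$ (so $a_{\tilde{x}} = \tilde{y}$), the diagonal term of $A_{t,i}$ contributes $\lambda_{i,\tilde{y}}^t e_a$, and the off-diagonal sum (which fires because $a_{\tilde{x}} = \tilde{y}$) contributes $\sum_{u \ne \tilde{y}} \varepsilon_{u,\tilde{y}} \lambda_{i,u}^t e_{\pi_\tau(a,\tilde{x},u)}$. As $a$ ranges over $V_{\tilde{x},\tilde{y}}$'s index set in ascending order, the map $a \mapsto \pi_\tau(a,\tilde{x},u)$ is an order-preserving bijection onto the index set of $V_{\tilde{x},u}$ (it modifies only the $\tilde{x}$-th digit identically, so relative order is preserved). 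Stacking rows then yields precisely $\lambda_{i,\tilde{y}}^t V_{\tilde{x},\tilde{y}} + \sum_{u \ne \tilde{y}} \varepsilon_{u,\tilde{y}} \lambda_{i,u}^t V_{\tilde{x},u}$.

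For part (ii), I would split into the three cases $x = \tilde{x}$, $x < \tilde{x}$, $x > \tilde{x}$. In the case $x=\tilde{x}$ (with $y\neq \tilde{y}$), the off-diagonal sum in $A_{t,j}$ requires $a_x = y \ne \tilde{y}$, so it vanishes on every row of $V_{\tilde{x},\tilde{y}}$; only the diagonal survives and gives $\lambda_{j,\tilde{y}}^t V_{\tilde{x},\tilde{y}}$, i.e., $\tilde{A}_{t,j,i,\delta_0} = \lambda_{j,\tilde{y}}^t I_{N'}$. In the cases $x \ne \tilde{x}$, the key observation is that $\pi_\tau(a,x,u)$ still has $\tilde{x}$-th coordinate equal to $\tilde{y}$, so the image stays in $V_{\tilde{x},\tilde{y}}$'s span and we can read off $\tilde{A}_{t,j,i,\delta_0}$ as a matrix in $\mathbb{F}_q^{N'\times N'}$. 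The index of a row of $V_{\tilde{x},\tilde{y}}$ in $[0:N')$ is obtained from $a$ by deleting the $\tilde{x}$-th digit: if $x < \tilde{x}$ the $x$-th digit survives at position $x$, while if $x > \tilde{x}$ it is shifted down to position $x-1$. Translating $\pi_\tau(a,x,u)$ through this digit-deletion then turns it into $\pi_{\tau-1}(a',x,u)$ or $\pi_{\tau-1}(a',x-1,u)$ respectively, which is exactly the formula in \eqref{eqn matrix Atji for VBK code}.

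The computation itself is routine once the bijection is pinned down; the main obstacle is bookkeeping the relabeling of indices correctly across the three cases (in particular, tracking how the $x$-th digit of the ``long'' index $a \in [0:N)$ corresponds to either the $x$-th or $(x{-}1)$-th digit of the ``short'' index $a' \in [0:N')$). I expect no new algebraic identities are needed—everything follows from the explicit form of $A_{t,j}$ in \eqref{eqn the parity-matrix of VBK code}, the definition of $V_{\tilde{x},\tilde{y}}$ in \eqref{Eqn Vt}, and the defining relation $S_{i,\delta_0} A_{t,j} = \tilde{A}_{t,j,i,\delta_0} R_{i,\delta_0}$ from \eqref{repair node requirement3}.
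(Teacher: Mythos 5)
Your proposal is correct and follows essentially the same route as the paper's own proof: the paper formalizes your ``digit-deletion bijection'' via the map $\varphi$ in \eqref{eqn function phi} and the identity $V_{\tilde{x},\tilde{y}}(a,:)=e_{\varphi(a,\tilde{x},\tilde{y})}$ from \eqref{Eqn Vt for the equiv form}, then computes $V_{\tilde{x},\tilde{y}}(a,:)A_{t,j}$ row by row and uses the facts \eqref{Eqn_fact_varphi} and \eqref{Eqn_fact_varphi_pi} to carry out exactly the index bookkeeping you describe (the $x$-th digit stays at position $x$ when $x<\tilde{x}$ and shifts to $x-1$ when $x>\tilde{x}$). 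Both the vanishing of the off-diagonal sum in the case $x=\tilde{x},\, y\ne\tilde{y}$ and the interference-alignment identity $V_{\tilde{x},\tilde{y}}(a,:)A_{t,j}=\tilde{A}_{t,j,i,\delta_0}(a,:)V_{\tilde{x},\tilde{y}}$ appear in the paper in the form you anticipate.
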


	\begin{Lemma}\label{lem important for def_P4 of C2}
		For any given $i\in [0:n)$ and $\{j_0,j_1,\cdots,j_{s-1}\}\subset [0:n)\backslash\{i\}$ with $1\le s<r$, when $\delta_0=2,3,4$,  the block matrix
		\begin{eqnarray*}
			\left(\begin{array}{cccc}
				\tilde{A}_{0,j_0,i,\delta_0} & \tilde{A}_{0,j_1,i,\delta_0} & \cdots & \tilde{A}_{0,j_{s-1},i,\delta_0}\\
				\tilde{A}_{1,j_0,i,\delta_0} & \tilde{A}_{1,j_1,i,\delta_0} & \cdots & \tilde{A}_{1,j_{s-1},i,\delta_0}\\
				\vdots & \vdots & \vdots & \vdots\\
				\tilde{A}_{s-1,j_0,i,\delta_0} & \tilde{A}_{s-1,j_1,i,\delta_0} & \cdots & \tilde{A}_{s-1,j_{s-1},i,\delta_0}\\
			\end{array}\right)
		\end{eqnarray*}
		of order $sN'$ is nonsingular over $\mathbb{F}_{q}$, where the $N'\times N'$ matrix $\tilde{A}_{t,j,i,\delta_0}$ is given by \eqref{eqn matrix Atji for VBK code}.
	\end{Lemma}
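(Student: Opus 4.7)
The plan is to reduce the claim to the MDS property of a VBK-type parity check matrix of smaller sub-packetization, using the explicit description of $\tilde{A}_{t,j,i,\delta_0}$ provided by Lemma~\ref{lem alignment interference for VBK code}. Write $i=\delta_0\tilde{x}+\tilde{y}$, and for each $v\in[0:s)$ write $j_v=\delta_0 x_v+y_v$. Formula~\eqref{eqn matrix Atji for VBK code} shows that, up to a shift of the sub-block coordinate ($x_v\mapsto x_v$ when $x_v<\tilde{x}$ and $x_v\mapsto x_v-1$ when $x_v>\tilde{x}$), each block $\tilde{A}_{t,j_v,i,\delta_0}$ is literally of the same shape as the parity-check block $A_{t,\delta_0 x + y}$ of the $(n-\delta_0,k-\delta_0)$ VBK code on $N'=\delta_0^{\tau-1}$ coordinates obtained by deleting the $\tilde{x}$-th $\delta_0$-ary digit. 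In the degenerate case $x_v=\tilde{x}$, the block collapses to $\lambda_{j_v,\tilde{y}}^tI_{N'}$, which is the limit of that same structure when the deleted coordinate coincides with the node's own sub-block.

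First, I would fix the index $i$ and relabel columns: define a bijection that sends each $j_v$ with $x_v<\tilde{x}$ to node $\delta_0 x_v+y_v$ of the reduced code, each $j_v$ with $x_v>\tilde{x}$ to node $\delta_0(x_v-1)+y_v$, and treat each $j_v$ with $x_v=\tilde{x}$ as a ``collapsed'' column. After this identification, the block matrix in the lemma becomes the submatrix obtained by taking $s$ (relabeled) columns of the parity-check matrix of a VBK code with sub-packetization $N'$ and the same field coefficients $\lambda_{\cdot,\cdot}$ and $\varepsilon_{\cdot,\cdot}$, except that the columns with $x_v=\tilde{x}$ are replaced by $\lambda_{j_v,\tilde{y}}^tI_{N'}$.

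Second, I would mimic the MDS proof of VBK code from \cite{kumar}. That proof proceeds by a determinant expansion that, row-block by row-block, isolates monomials of a certain Vandermonde-type in the parameters $\lambda_{j_v,u}^t$, and then uses the specified distinctness of the collection $\{\vartheta_{i,x},\varepsilon\vartheta_{i,x},\vartheta_{0,x}\}$ together with the field-size condition \eqref{Eqn the finite field size of VBK code} to conclude nonsingularity. The same expansion applies here: the $s<r$ hypothesis guarantees that only at most $s$ distinct ``$\lambda$ bases'' appear per block row, so the generalized Vandermonde argument yields a nonzero determinant over $\mathbb{F}_q$.

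The main obstacle is the ``collapsed'' blocks with $x_v=\tilde{x}$, since they lack the full $A_{t,j}$ structure and consist of scalar-times-identity. I would handle these by noting that the values $\{\lambda_{j_v,\tilde{y}}\mid j_v\in \mathcal{J}_{\tilde{x}}\setminus\{i\}\}=\{\Theta_{\tilde{x}}(\tilde{y},y_v)\}$ are, by construction of $\Theta_{\tilde{x}}$ and the distinctness hypothesis on $\{\vartheta_{i,x},\varepsilon\vartheta_{i,x},\vartheta_{0,x}\}$, distinct both from one another and from every $\lambda_{j,u}$ appearing in the non-collapsed blocks. This lets us subsume the collapsed columns into the same generalized Vandermonde framework, either by a block row reduction that eliminates them against the diagonal terms of the other blocks, or by interpreting them directly as columns of the extended parity-check matrix of a slightly enlarged VBK code. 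Once that is done, the nonsingularity follows exactly as in the MDS proof in \cite{kumar}.
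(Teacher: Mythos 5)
The paper does not supply a proof of Lemma~\ref{lem important for def_P4 of C2}; it states only that the result ``can be proved similar to the proof of MDS property of VBK code in \cite{kumar}'' and omits the details, so there is no paper-internal argument to compare against line by line. Your proposal is a plausible reconstruction of exactly that strategy: you correctly use Lemma~\ref{lem alignment interference for VBK code} to identify each $\tilde{A}_{t,j,i,\delta_0}$ with $\lfloor j/\delta_0\rfloor\neq\tilde{x}$ as a parity-check block of VBK shape on $N'=\delta_0^{\tau-1}$ coordinates (with the $\tilde{x}$-th digit deleted, and the coordinate index shifted by one when $x>\tilde{x}$), you correctly note the degeneration $\tilde{A}_{t,j,i,\delta_0}=\lambda_{j,\tilde{y}}^t I_{N'}$ when $\lfloor j/\delta_0\rfloor=\tilde{x}$, and your distinctness claim for the collapsed values $\lambda_{j_v,\tilde{y}}$ is sound: they are row-$\tilde{y}$ entries of $\Theta_{\tilde{x}}$, distinct from each other and from every other $\lambda_{j,u}$ by the disjointness hypothesis on $\{\vartheta_{i,x},\varepsilon\vartheta_{i,x},\vartheta_{0,x}\}$.

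Two points to tighten. Your second alternative for handling the collapsed columns, ``interpreting them directly as columns of the extended parity-check matrix of a slightly enlarged VBK code,'' should be dropped: a VBK parity-check block $A_{t,j}$ is never a scalar multiple of the identity, because the off-diagonal $\pi_\tau$ permutation terms are always present, so a column equal to $\lambda^t I_{N'}$ simply is not of VBK type. Your first alternative, a block row reduction eliminating the collapsed columns against the diagonal parts of the structured blocks, is the right move---and in fact it is exactly the device the paper uses in its proof of Lemma~\ref{lem the helpful lemma for the proof of Thm. 10}, where the factorization $\tilde{A}_{t+1,j,i,\delta_0}=\tilde{A}_{t,j,i,\delta_0}\Upsilon_{j,i}$ with $\Upsilon_{j,i}$ diagonal is combined with the block elimination operator $\Psi_s$ to peel off scalar Vandermonde columns one at a time. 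Finally, ``at most $s$ distinct $\lambda$ bases per block row'' is not by itself a determinant argument; what one ultimately has to do, as in \cite{kumar}, is carry out the explicit determinant expansion using the commuting diagonal generators $\Upsilon_{j_v,i}$, and your outline defers that step to \cite{kumar} in precisely the same way the paper does.
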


	\begin{Lemma}\label{lem the helpful lemma for the proof of Thm. 10}
		Let $\beta_0,\beta_1,\cdots,\beta_{r-1}$ be the elements in $\mathbb{F}_{q}$. For any given $i=\delta_0\tilde{x}+\tilde{y}$ and any $0\le s\le p<r$, define a $(r-s)N'\times (r-s)N'$ matrix $H_{i,p,s}$ as
		\begin{eqnarray}\label{eqn the matrix Hzys}
			H_{i,p,s}=\left(\begin{array}{cccc;{2pt/2pt}ccc}
				I_{N'} & I_{N'} & \cdots & I_{N'} &  \tilde{A}_{0,j_0,i,\delta_0} & \cdots & \tilde{A}_{0,j_{r-p-1},i,\delta_0}\\
				\beta_s I_{N'} &  \beta_{s+1}I_{N'} & \cdots & \beta_{p-1}I_{N'} &  \tilde{A}_{1,j_0,i,\delta_0} & \cdots & \tilde{A}_{1,j_{r-p-1},i,\delta_0}\\
				\vdots & \vdots & \vdots & \vdots  & \vdots & \vdots & \vdots\\
				\beta_s^{r-s-1} I_{N'} &  \beta_{s+1}^{r-s-1}I_{N'} & \cdots & \beta_{p-1}^{r-s-1}I_{N'} & \tilde{A}_{r-s-1,j_0,i,\delta_0} & \cdots & \tilde{A}_{r-s-1,j_{r-p-1},i,\delta_0}\\
			\end{array}\right)
		\end{eqnarray}
		where $j_0,j_1,\cdots,j_{r-p-1}\in [0:n)\backslash\{i\}$. Then for any $0\le s<p$,  $|H_{i,p,s}| \ne 0$ if 
		\begin{itemize}
			\item [(i)] $|H_{i,p,s+1}|\ne 0$;
			\item [(ii)] $\beta_u\ne \beta_v$ and $\beta_u\ne \lambda_{j,\tilde{y}}$ for any $0\le u\ne v<p$ and $j\in [0:n)\backslash\{i\}$ with $\lfloor \frac{j}{\delta_0}\rfloor=\lfloor \frac{i}{\delta_0}\rfloor$;
			\item [(iii)] $\beta_u\ne \lambda_{j,v}$ for any $0\le u<p$, $0\le v<\delta_0$ and $j\in [0:n)\backslash\{i\}$ with $\lfloor \frac{j}{\delta_0}\rfloor\ne \lfloor \frac{i}{\delta_0}\rfloor$.
		\end{itemize}
	\end{Lemma}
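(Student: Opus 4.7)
My plan is to show $|H_{i,p,s}|\ne 0$ by expressing it, via block row operations, as a product of nonzero factors involving $|H_{i,p,s+1}|$. The crucial ingredient is a right-multiplicative identity satisfied by the $\tilde{A}$-matrices, which lets me reinterpret the ``shifted'' $\tilde{A}$-blocks produced by the row reduction as a clean product of the smaller matrix $H_{i,p,s+1}$ with a block-diagonal factor.

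First I would subtract $\beta_s$ times block row $k-1$ from block row $k$, for $k=1,\ldots,r-s-1$. This zeros out the first block column below its top entry $I_{N'}$; transforms each remaining Vandermonde column (for $\beta_{s+u}$, $1\le u\le p-s-1$) so that block row $k\ge 1$ becomes $\beta_{s+u}^{k-1}(\beta_{s+u}-\beta_s)I_{N'}$; and transforms each $\tilde{A}$-column so that block row $k\ge 1$ becomes $\tilde{A}_{k,j_v,i,\delta_0}-\beta_s\tilde{A}_{k-1,j_v,i,\delta_0}$. Laplace expansion along the first block column then reduces $|H_{i,p,s}|$ to the determinant of an $(r-s-1)\times(r-s-1)$ block matrix. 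Pulling out the scalar factors $(\beta_{s+u}-\beta_s)^{N'}$ from each Vandermonde column (nonzero by hypothesis (ii)), the remaining matrix $M'$ has exactly the Vandermonde part of $H_{i,p,s+1}$, while its $\tilde{A}$-columns are the ``shifted'' blocks $\tilde{A}_{k'+1,j_v,i,\delta_0}-\beta_s\tilde{A}_{k',j_v,i,\delta_0}$ for $k'=0,\ldots,r-s-2$.

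The central identity I need is
\[
\tilde{A}_{k+1,j,i,\delta_0}-\beta_s\tilde{A}_{k,j,i,\delta_0}=\tilde{A}_{k,j,i,\delta_0}\cdot N_{j,s},
\]
where $N_{j,s}$ is the diagonal $N'\times N'$ matrix whose $(a,a)$-entry equals $\lambda_{j,a_x}-\beta_s$ (respectively $\lambda_{j,a_{x-1}}-\beta_s$) when $\lfloor j/\delta_0\rfloor<\tilde{x}$ (respectively $\lfloor j/\delta_0\rfloor>\tilde{x}$), and simplifies to the scalar matrix $(\lambda_{j,\tilde{y}}-\beta_s)I_{N'}$ when $\lfloor j/\delta_0\rfloor=\tilde{x}$. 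I would verify this identity by directly comparing entries using the explicit forms in \eqref{eqn matrix Atji for VBK code} and the fact that $\pi_{\tau-1}(a,x,u)$ has $x$-th coordinate $u$. Once this identity is in hand, $M'=H_{i,p,s+1}\cdot\Sigma$, where $\Sigma$ is block-diagonal with identity blocks in the Vandermonde positions and the blocks $N_{j_0,s},\ldots,N_{j_{r-p-1},s}$ in the $\tilde{A}$-positions. Taking determinants yields
\[
|H_{i,p,s}|=\prod_{u=1}^{p-s-1}(\beta_{s+u}-\beta_s)^{N'}\cdot|H_{i,p,s+1}|\cdot\prod_{v=0}^{r-p-1}|N_{j_v,s}|.
\]

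To finish, each $|N_{j_v,s}|$ is a product of factors of the form $\lambda_{j_v,u}-\beta_s$; these are nonzero by (ii) when $j_v$ lies in the same group as $i$ (only $\lambda_{j_v,\tilde{y}}$ appears) and by (iii) when $j_v$ lies in a different group (all of $\lambda_{j_v,0},\ldots,\lambda_{j_v,\delta_0-1}$ appear). Combined with (i), this gives $|H_{i,p,s}|\ne 0$. The main obstacle I anticipate is the entrywise verification of the multiplicative identity above; in particular, one must check that the off-diagonal contributions at position $(a,\pi_{\tau-1}(a,x,u))$ on both sides agree, which relies on reading off the correct diagonal entry $\lambda_{j,u}-\beta_s$ of $N_{j,s}$ at the column index $\pi_{\tau-1}(a,x,u)$.
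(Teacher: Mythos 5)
Your proposal is correct and follows essentially the same route as the paper: you perform the row elimination with $\beta_s$ (the paper packages this as left-multiplication by a block matrix $\Psi_s$), you peel off the first block column via the $(I,0,\ldots,0)^{\top}$ structure, and you factor the remaining block as $H_{i,p,s+1}$ times a block-diagonal matrix; your diagonal matrix $N_{j,s}$ is exactly $\Upsilon_{j,i}-\beta_s I_{N'}$, and your key identity $\tilde{A}_{k+1,j,i,\delta_0}-\beta_s\tilde{A}_{k,j,i,\delta_0}=\tilde{A}_{k,j,i,\delta_0}N_{j,s}$ is the paper's one-step shift identity $\tilde{A}_{t+1,j,i,\delta_0}=\tilde{A}_{t,j,i,\delta_0}\Upsilon_{j,i}$ combined with the subtraction, and your final bookkeeping of which $\lambda_{j,\cdot}$ appear on the diagonal (all of them when $\lfloor j/\delta_0\rfloor\ne\tilde x$, only $\lambda_{j,\tilde y}$ when $\lfloor j/\delta_0\rfloor=\tilde x$) matches how the paper invokes conditions (ii) and (iii).
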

	
	\begin{Theorem}\label{Thr VBK code satisfy C3}
		  By setting the $N\times N'$ key matrix $K_{t,i,v}^{(0)}$  of node $i$ of VBK code to be the form  in \eqref{eqn the key matrix of VBK code}, the matrix $M_{i,\mathcal{D}_z}$  in \eqref{eqn matrix MijDz} is nonsingular over $\mathbb{F}_q$ for any given $i=\delta_0\tilde{x}+\tilde{y}\in [0:n)$ and $\mathcal{D}_z=\{j_0,j_1,\cdots,j_{r-\delta_z-1}\}\subset [0:n)\backslash\{i\}$ with $z\in [1:m)$, where $q$ is determined in \eqref{Eqn the finite field size of VBK code}.
	\end{Theorem}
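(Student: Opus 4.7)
The strategy is to perform invertible column operations on $M_{i,\mathcal{D}_z}$ to bring it into the exact form of the matrix $H_{i,\delta_z,0}$ defined in Lemma \ref{lem the helpful lemma for the proof of Thm. 10}, and then conclude by combining Lemmas \ref{lem important for def_P4 of C2} and \ref{lem the helpful lemma for the proof of Thm. 10}.

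First, I would identify the three block types appearing in $M_{i,\mathcal{D}_z}$. Lemma \ref{lem alignment interference for VBK code}(i) yields $S_{i,\delta_0}A_{t,i}=\sum_{u=0}^{\delta_0-1}c_u\lambda_{i,u}^tV_{\tilde{x},u}$ with $c_{\tilde{y}}=1$ and $c_u=\varepsilon_{u,\tilde{y}}\ne 0$ for $u\ne\tilde{y}$; Lemma \ref{lem the first helpful lemma for the proof of Thm. 10}(i) combined with \eqref{eqn the key matrix of VBK code} gives $S_{i,\delta_0}K_{t,i,v}^{(0)}=\zeta_v^tV_{\tilde{x},\tilde{y}}V_{\tilde{x},\tilde{y}}^\top=\zeta_v^tI_{N'}$; and the alignment blocks $\tilde{A}_{t,j_v,i,\delta_0}$ are given explicitly by Lemma \ref{lem alignment interference for VBK code}(ii).

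Next I would apply a column permutation $P\in\mathbb{F}_q^{N\times N}$ that sorts the columns indexed by $[0:N)$ according to the value of their $\tilde{x}$-th digit, acting only on the first $N$ columns of $M_{i,\mathcal{D}_z}$. Since $P$ was chosen so that $V_{\tilde{x},u}P=\Delta_u$ for every $u\in[0:\delta_0)$, this splits the first block-column into $\delta_0$ sub-block-columns of width $N'$, the $u$-th being $c_u\lambda_{i,u}^tI_{N'}$. After rescaling each sub-block-column by $c_u^{-1}$ and reordering all $r$ block-columns so that the $\delta_z$ scalar-Vandermonde blocks (with scalars $\lambda_{i,0},\ldots,\lambda_{i,\delta_0-1},\zeta_0,\ldots,\zeta_{\delta_z-\delta_0-1}$) appear first and the $r-\delta_z$ alignment blocks last, the matrix takes the exact shape of $H_{i,\delta_z,0}$. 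Since every one of these operations has nonzero determinant, it suffices to show $|H_{i,\delta_z,0}|\ne 0$.

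The final step is a downward induction on $s$ from $s=\delta_z$ to $s=0$. The base case $|H_{i,\delta_z,\delta_z}|\ne 0$ is precisely the content of Lemma \ref{lem important for def_P4 of C2} applied to the $r-\delta_z$ alignment blocks, and the inductive step $|H_{i,\delta_z,s+1}|\ne 0\Rightarrow|H_{i,\delta_z,s}|\ne 0$ is Lemma \ref{lem the helpful lemma for the proof of Thm. 10}, provided its distinctness hypotheses (ii) and (iii) on the $\beta$'s hold. The main obstacle will be verifying these distinctness conditions cleanly; the delicate subcase is (ii) with $\beta_u=\lambda_{i,u'}$ and $\lambda_{j,\tilde{y}}$ for some $j\ne i$ in the same group $\mathcal{J}_{\tilde{x}}$, where both sides could a priori collide at the common diagonal value $\vartheta_{0,\tilde{x}}$. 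A short case split on whether $u'=\tilde{y}$ and whether $j\%\delta_0=\tilde{y}$, combined with the defining hypothesis that $\{\vartheta_{i,x},\varepsilon\vartheta_{i,x},\vartheta_{0,x}\}$ is a collection of distinct elements of $\mathbb{F}_q$, rules out any such collision. The remaining distinctness claims—for inter-group indices and for pairs involving the $\zeta_v$'s—follow at once from cross-$x$ distinctness of the $\vartheta$'s and from the choice of $\zeta_0,\ldots,\zeta_{\delta_{m-1}-\delta_0-1}$ as distinct elements outside $\{\lambda_{j,v}\}_{j,v}$.
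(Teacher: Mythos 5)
Your proposal takes essentially the same route as the paper's proof: the paper factors $M_{i,\mathcal{D}_z}=A_{\mathrm{VBK}}B_{\mathrm{VBK}}$, where $B_{\mathrm{VBK}}$ (built from the stack of $w_uV_{\tilde{x},u}$'s) plays exactly the role of your column permutation $P$ composed with the $c_u^{-1}$-rescaling, then reorders block columns to land on $H_{i,\delta_z,0}$ and runs the same downward induction through Lemmas~\ref{lem important for def_P4 of C2} and~\ref{lem the helpful lemma for the proof of Thm. 10}.

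Two remarks. First, a genuine (if small) gap: when $\delta_z=r$, so $\mathcal{D}_z=\emptyset$, your lemma chain does not apply as cited --- Lemma~\ref{lem the helpful lemma for the proof of Thm. 10} requires $p<r$ and Lemma~\ref{lem important for def_P4 of C2} requires $s\ge 1$. In that case $M_{i,\mathcal{D}_z}$ reduces (after your column operations) to a pure block Vandermonde with distinct scalars $\beta_0,\dots,\beta_{r-1}$ on the identity blocks, which must be argued separately; the paper isolates this as its Case~1. Second, a minor inaccuracy in your distinctness discussion: the collision you flag between $\beta_u$ and $\lambda_{j,\tilde{y}}$ at the diagonal value $\vartheta_{0,\tilde{x}}$ cannot actually occur, because for $j\ne i$ in the same group one has $\lambda_{j,\tilde{y}}=\Theta_{\tilde{x}}(\tilde{y},j\%\delta_0)$ with $j\%\delta_0\ne\tilde{y}$, i.e.\ a strictly off-diagonal entry of $\Theta_{\tilde{x}}$, hence never $\vartheta_{0,\tilde{x}}$. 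Your overall conclusion is nevertheless correct, and once the $\delta_z=r$ edge case is added your argument is complete and coincides with the paper's.
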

	
	\begin{proof}
	For simplicity, let 
		\begin{eqnarray}\label{eqn the coefficient beta}
			\beta_u=\left\{\begin{array}{ll}
				\lambda_{i,u}, & \textrm{if }u\in [0:\delta_0),\\
				\zeta_{u-\delta_0}, & \textrm{if }u\in [\delta_0:\delta_{m-1}),
			\end{array}\right.	
		\end{eqnarray}
		where it is noting  from the definitions of $\lambda_{i,u},\zeta_{u-\delta_0}$  that 
		\begin{itemize}
			\item [\textbf{(i)}] $\beta_u\ne \beta_v$, $\beta_u\ne \lambda_{j,a}$ and $\beta_u \ne \lambda_{j',\tilde{y}}$ for any $0\le u\ne v<\delta_z$, $a\in [0:\delta_0)$, $j\in [0:n)$ with $\lfloor\frac{j}{\delta_0}\rfloor \ne \tilde{x}$ and $j'\in [0:n)\backslash\{i\}$ with $\lfloor\frac{j'}{\delta_0}\rfloor= \tilde{x}$. 
		\end{itemize}
		
		By replacing $K_{t,i,v}$ in \eqref{eqn matrix MijDz} with $K_{t,i,v}^{(0)}$, we first calculate
		\begin{align}\label{eqn the form of MiDz(t,:)}
			&\hspace{1.5em}\left(\begin{array}{ccccccc}
	S_{i,\delta_0}A_{t,i} & \tilde{A}_{t,j_0,i,\delta_0} & \cdots & \tilde{A}_{t,j_{r-\delta_z-1},i,\delta_0} & S_{i,\delta_0}K_{t,i,0}^{(0)} & \cdots & S_{i,\delta_0}K_{t,i,\delta_z-\delta_0-1}^{(0)}
\end{array}\right)\notag\\
			&=\left(\hspace{-1mm}\begin{array}{ccccccc}
	\lambda_{i,\tilde{y}}^tV_{\tilde{x},\tilde{y}}+\sum\limits_{u=0,u\ne \tilde{x}}^{\delta_0-1}\varepsilon_{u,\tilde{y}}\lambda_{i,u}^tV_{\tilde{x},u} & \tilde{A}_{t,j_0,i,\delta_0} & \cdots & \tilde{A}_{t,j_{r-\delta_z-1},i,\delta_0} & \zeta_0^tV_{\tilde{x},\tilde{y}}V_{\tilde{x},\tilde{y}}^\top & \cdots & \zeta_{\delta_z-\delta_0-1}^tV_{\tilde{x},\tilde{y}}V_{\tilde{x},\tilde{y}}^\top
\end{array}\hspace{-1mm}\right)\notag\\
			&=\left(\hspace{-1mm}\begin{array}{ccccccccc}
	  \lambda_{i,0}^tI_{N'} & \cdots & \lambda_{i,\delta_0-1}^tI_{N'} & \tilde{A}_{t,j_0,i,\delta_0} & \cdots & \tilde{A}_{t,j_{r-\delta_z-1},i,\delta_0} & \zeta_0^tI_{N'} & \cdots & \zeta_{\delta_z-\delta_0-1}^tI_{N'}
\end{array}\hspace{-1mm}\right)B_{\mathrm{VBK}}\notag\\
			&=\left(\hspace{-2mm}\begin{array}{ccccccccc}
				\beta_{0}^tI_{N'} & \cdots & \beta_{\delta_0-1}^tI_{N'} & \tilde{A}_{t,j_0,i,\delta_0} & \cdots & \tilde{A}_{t,j_{r-\delta_z-1},i,\delta_0} & \beta_{\delta_0}^t I_{N'} & \cdots & \beta_{\delta_z-1}^t I_{N'}
			\end{array}\hspace{-2mm}\right)B_{\mathrm{VBK}}
		\end{align} 		
where the first equality follows from Lemma \ref{lem alignment interference for VBK code}, \eqref{eqn the repair, select matrix of VBK code} and \eqref{eqn the key matrix of VBK code}, the second equality comes from (i) of Lemma \ref{lem the first helpful lemma for the proof of Thm. 10}, the third equality follows from \eqref{eqn the coefficient beta}, and the $rN'\times rN'$ matrix
		\begin{eqnarray*}
			B_{\mathrm{VBK}}\triangleq\left(\begin{array}{cc}
				\begin{array}{c}
				w_0 V_{\tilde{x},0} \\
				\vdots \\
				w_{\delta_0-1}V_{\tilde{x},\delta_0-1} 
				\end{array} & \\
				 & I_{(r-\delta_0)N'}
			\end{array}\right)\textrm{~with~}w_u=\left\{\begin{array}{ll}
				\varepsilon_{u,\tilde{y}}, & \mbox{if~}u\ne \tilde{y},\\
				1, & \mbox{otherwise}.
			\end{array}\right.
		\end{eqnarray*}
It is easy to see that the block matrix $B_{\mathrm{VBK}}$ is nonsingular. 
		
		\textit{Case 1.} If $\delta_z=r$, i.e., $\mathcal{D}_z=\emptyset$, then by \eqref{eqn matrix MijDz}  and \eqref{eqn the form of MiDz(t,:)}, we have that the matrix $M_{i,\mathcal{D}_z}$ is of the form
		\begin{eqnarray}\label{eqn the form of MiDz for dz=r}
			M_{i,\mathcal{D}_z}=\left(\begin{array}{cccccc;{2pt/2pt}ccc}
				I_{N'} & I_{N'} & \cdots & I_{N'} \\
				\beta_0 I_{N'} & \beta_1 I_{N'} & \cdots & \beta_{r-1}I_{N'}\\
				\vdots & \vdots & \ddots & \vdots \\
				\beta_0^{r-1}  & \beta_{1}^{r-1}I_{N'} & \cdots & \beta_{r-1}^{r-1}I_{N'}\\
			\end{array}\right)B_{\mathrm{VBK}},
		\end{eqnarray}
Note that the first block Vandermond matrix on RHS of \eqref{eqn the form of MiDz for dz=r} is nonsingular according to (i), so is the matrix $M_{i,\mathcal{D}_z}$.
		
		\textit{Case 2.} If $\delta_z<r$, we have that the matrix $M_{i,\mathcal{D}_z}$ is of the form
		\begin{eqnarray*}\setlength\arraycolsep{2pt}
			M_{i,\mathcal{D}_z}=\underbrace{\left(\hspace{-1mm}\begin{array}{ccc;{2pt/2pt}ccc;{2pt/2pt}ccc}
					I_{N'}  & \cdots & I_{N'} & \tilde{A}_{0,j_0,i,\delta_0} & \cdots & \tilde{A}_{0,j_{r-\delta_z-1},i,\delta_0} & I_{N'}  & \cdots & I_{N'} \\
					\beta_0 I_{N'}  & \cdots & \beta_{\delta_0-1}I_{N'} &  \tilde{A}_{1,j_0,i,\delta_0} & \cdots & \tilde{A}_{1,j_{r-\delta_z-1},i,\delta_0} & \beta_{\delta_0} I_{N'}  & \cdots & \beta_{\delta_z-1}I_{N'} \\
					\vdots & \vdots  & \vdots  & \vdots & \vdots & \vdots & \vdots & \vdots & \vdots\\
					\beta_0^{r-1} I_{N'} & \cdots & \beta_{\delta_0-1}^{r-1}I_{N'} & \tilde{A}_{r-1,j_0,i,\delta_0} & \cdots & \tilde{A}_{r-1,j_{r-\delta_z-1},i,\delta_0} & \beta_{\delta_0}^{r-1} I_{N'}  & \cdots & \beta_{\delta_z-1}^{r-1}I_{N'}\\
				\end{array}\hspace{-1mm}\right)}_{A_{\mathrm{VBK}}} B_{\mathrm{VBK}}.
		\end{eqnarray*}
		according to \eqref{eqn matrix MijDz} and \eqref{eqn the form of MiDz(t,:)}. Then, the matrix $M_{i,\mathcal{D}_z}$ is nonsingular if the block matrix $A_{\mathrm{VBK}}$ is nonsingular. By switching some block columns of $A_{\mathrm{VBK}}$, we then get $|A_{\mathrm{VBK}}|\ne 0$ if and only if $|H_{i,\delta_z,0}| \ne 0$, where $H_{i,\delta_z,0}$ is defined in \eqref{eqn the matrix Hzys}.  By (i) and Lemma \ref{lem the helpful lemma for the proof of Thm. 10},  $|H_{i,\delta_z,0}|\ne 0$ if $|H_{i,\delta_z,\delta_z}|\ne 0$,  where the invertibility of $H_{i,\delta_z,\delta_z}$
 follows from Lemma \ref{lem important for def_P4 of C2}.
		
		Collecting the above two cases, we can conclude that the matrix $M_{i,\mathcal{D}_z}$ with $\alpha=1$ is noningular over $\mathbb{F}_{q}$ for any $i\in[0:n)$ and $\mathcal{D}_z\in [0:n)\backslash\{i\}$ with $z\in [1:m)$. This finishes the proof.
	\end{proof}
	
	By combining Theorems \ref{Thr VBK code satisfy C2} and \ref{Thr VBK code satisfy C3}, and the $\delta_0$-optimal repair property of VBK code, we have the following theorem.
	\begin{Theorem}\label{Thm VBK is a TMDS code}
		The $(n,k)$ VBK code is a TMDS array code over $\mathbb{F}_{q}$ with $q$  in \eqref{Eqn the finite field size of VBK code}, and  
		\begin{itemize}
			\item The sets $\mathcal{J}_0,\mathcal{J}_1,\cdots,\mathcal{J}_{\mu-1}$  in Definition \ref{def T-MDS array code} are given by \eqref{eqn the set J of VBK code}, where $\mu=\tau=\lceil \frac{n}{\delta_0}\rceil$,
			\item For $i\in [0:n),t\in [0:r)$ and $v\in [0:\delta_{m-1}-\delta_0)$, the $N\times N'$ key matrix $K_{t,i,v}^{(0)}$ in \eqref{Eqn key matrix for alg2} is  $\zeta_v^t R_{i,\delta_0}^{\top}$, where $\zeta_0,\zeta_1,\cdots,\zeta_{\delta_{m-1}-\delta_0-1}$ are all distinct elements in $\mathbb{F}_q\backslash\{\lambda_{i,u}|i\in [0:n),u\in [0:\delta_0)\}$.
		\end{itemize}
	\end{Theorem}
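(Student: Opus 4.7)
The plan is to verify the three bullet points of Definition \ref{def T-MDS array code} directly, drawing on the two preceding theorems that have done the heavy lifting. First, I would observe that the collection $\{\mathcal{J}_0,\mathcal{J}_1,\ldots,\mathcal{J}_{\mu-1}\}$ defined in \eqref{eqn the set J of VBK code} is indeed a partition of $[0:n)$: the sets $\mathcal{J}_s = \{s\delta_0,s\delta_0+1,\ldots,s\delta_0+\delta_0-1\}$ for $0 \le s < \mu-1$ are disjoint blocks of $\delta_0$ consecutive integers, and the last block $\mathcal{J}_{\mu-1}$ picks up the residual indices $\{(\mu-1)\delta_0,\ldots,n-1\}$, with $\mu=\lceil n/\delta_0\rceil=\tau$ ensuring coverage.

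Next, I would invoke \cite{kumar} to assert that the VBK code has the $\delta_0$-optimal access (and hence $\delta_0$-optimal repair) property for all nodes, together with $\delta_0$-repair and $\delta_0$-select matrices given by \eqref{eqn the repair, select matrix of VBK code}. This supplies the baseline repair structure presumed by Definition \ref{def T-MDS array code}, and its field-size requirement from \cite{kumar} is consistent with \eqref{Eqn the finite field size of VBK code}. Since $\delta_{m-1}-\delta_0$ many additional distinct elements $\zeta_0,\ldots,\zeta_{\delta_{m-1}-\delta_0-1}$ outside $\{\lambda_{i,u}\}$ must be chosen to define $K_{t,i,v}^{(0)}=\zeta_v^t R_{i,\delta_0}^{\top}$, I would briefly note that the field size in \eqref{Eqn the finite field size of VBK code} is large enough to accommodate these additional elements disjoint from the existing $\lambda_{i,u}$.

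Then I would invoke Theorem \ref{Thr VBK code satisfy C2} to conclude that, under the choice $K_{t,i,v}^{(0)}=\zeta_v^t R_{i,\delta_0}^{\top}$, every $\mathcal{J}_x$ satisfies C1 for $\alpha=1$ and C3 for $z=0,\alpha=1$; and Theorem \ref{Thr VBK code satisfy C3} to conclude that every node in $\mathcal{J}_x$ satisfies C2 for $\alpha=1$, since the nonsingularity of $M_{i,\mathcal{D}_z}$ is shown there for all $i\in[0:n)$ and all admissible $\mathcal{D}_z$ with $z\in[1:m)$, and the degenerate case $z=0$ (i.e.\ $M_{i,\mathcal{D}_0}$) follows from Lemma \ref{lem the requirement of obtaining fi} applied to the $\delta_0$-optimal repair property of the VBK code.

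Putting these three ingredients together yields all requirements of Definition \ref{def T-MDS array code}, so the VBK code qualifies as a TMDS array code with the stated partition and key matrices. The only mild subtlety I would expect is bookkeeping: making sure the partition $\mathcal{J}_{\mu-1}$ at the boundary (which may have fewer than $\delta_0$ elements when $\delta_0\nmid n$) does not break the C1/C2/C3 arguments, but since those conditions are proved node-wise in Theorems \ref{Thr VBK code satisfy C2} and \ref{Thr VBK code satisfy C3} for every $i\in[0:n)$, restricting to a smaller block of indices only reduces the hypotheses to verify. Thus no real obstacle arises, and the theorem follows by direct assembly.
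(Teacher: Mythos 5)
Your proposal is correct and follows exactly the route the paper takes: the paper itself presents this theorem as an immediate assembly of Theorem \ref{Thr VBK code satisfy C2} (which gives C1 and C3 for $z=0$), Theorem \ref{Thr VBK code satisfy C3} (which gives C2), and the $\delta_0$-optimal repair property of the VBK code from \cite{kumar}, with no separate proof. Your extra remarks on the partition structure, field size, and the boundary block $\mathcal{J}_{\mu-1}$ are harmless sanity checks that the paper leaves implicit.
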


	The following theorem immediately follows from  Theorems \ref{sec:modify adaptive:Thr the requirement for resultant code} and  \ref{Thm VBK is a TMDS code}.   
	\begin{Theorem}\label{theorem G1 obtained by algorithm}
		By choosing VBK code as base code in Algorithm \ref{Alg_repair_for_all_nodes}, an $(n,k)$ MDS array code $\mathbbmss{G}$ with the $\delta_{[0:m)}$-optimal access property for all nodes over $\mathbb{F}_q$ can be obtained, where $q\ge 6\lceil \frac{n}{2} \rceil+2$ if $\delta_0=2$ and $q\ge 18\lceil\frac{n}{\delta_0}\rceil+2$ if $\delta_0=3,4$. Especially, the sub-packetization level of the MDS array code $\mathbbmss{G}$ is $\delta^{\lceil\frac{n}{\delta_0}\rceil}$ with $\delta_0=2,3,4$, where $\delta=\mathrm{lcm}(\delta_0,\delta_1,\cdots,\delta_{m-1})$.
	\end{Theorem}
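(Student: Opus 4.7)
The proof is essentially a direct combination of the two preceding theorems (\ref{sec:modify adaptive:Thr the requirement for resultant code} and \ref{Thm VBK is a TMDS code}), so the plan is to assemble them carefully and then verify that the promised sub-packetization and field-size figures come out right.

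First I would invoke Theorem \ref{Thm VBK is a TMDS code} to witness that the VBK code is a legitimate input to Algorithm \ref{Alg_repair_for_all_nodes}: it is an $(n,k)$ TMDS array code over $\mathbb{F}_q$ with $q$ satisfying \eqref{Eqn the finite field size of VBK code}, the partition $\mathcal{J}_0,\mathcal{J}_1,\dots,\mathcal{J}_{\mu-1}$ of $[0:n)$ is given by \eqref{eqn the set J of VBK code} with $\mu=\tau=\lceil n/\delta_0\rceil$, and the key matrices $K_{t,i,v}^{(0)}=\zeta_v^t R_{i,\delta_0}^{\top}$ are those in \eqref{eqn the key matrix of VBK code}. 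Since from \cite{kumar} VBK possesses the $\delta_0$-optimal \emph{access} property for all nodes (not merely the repair property), the hypothesis of the second statement of Theorem \ref{sec:modify adaptive:Thr the requirement for resultant code} is also met.

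Next I would apply Theorem \ref{sec:modify adaptive:Thr the requirement for resultant code} to $\mathbbmss{Q}_0=\mbox{VBK}$. That theorem immediately yields an $(n,k)$ MDS array code $\mathbbmss{G}=\mathbbmss{Q}_{\mu}$ with the $\delta_{[0:m)}$-optimal access property for all nodes over the same field $\mathbb{F}_q$. The sub-packetization of $\mathbbmss{G}$ is then
\begin{eqnarray*}
l_0^{\mu}\,N \;=\; \left(\tfrac{\delta}{\delta_0}\right)^{\tau}\!\cdot\delta_0^{\tau} \;=\; \delta^{\tau} \;=\; \delta^{\lceil n/\delta_0\rceil},
\end{eqnarray*}
since by \eqref{eqn notation l} $l_0=\delta/\delta_0$, by the VBK construction $N=\delta_0^{\tau}$, and by Theorem \ref{Thm VBK is a TMDS code} $\mu=\tau$. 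This is exactly the sub-packetization asserted in the statement.

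Finally, I would confirm the field-size claim. The field size in Theorem \ref{Thm VBK is a TMDS code} is \eqref{Eqn the finite field size of VBK code}, identical to what the target theorem states, so no enlargement beyond what the base code already needs is required. The only additional elements introduced by Algorithm \ref{Alg_repair_for_all_nodes} are the $\delta_{m-1}-\delta_0\le r-2$ distinct values $\zeta_0,\dots,\zeta_{\delta_{m-1}-\delta_0-1}\in\mathbb{F}_q\setminus\{\lambda_{i,u}\}$; since these are exactly the $\zeta_v$ already used to define the key matrices in Theorem \ref{Thm VBK is a TMDS code}, they were accounted for in proving that the VBK code meets C1--C3 with the chosen keys, and in particular the $q$ of \eqref{Eqn the finite field size of VBK code} has room for them.

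Conceptually there is no real obstacle: both the optimality of the repair/access and the TMDS property of VBK are already established, and the sub-packetization is a one-line arithmetic consequence of $l_0^{\tau}N = \delta^{\tau}$. The only point to check with any care is that the order in which Algorithm \ref{Alg_repair_for_all_nodes} scans $\mathcal{J}_0,\dots,\mathcal{J}_{\mu-1}$ and the recursive key-matrix inflation rule \eqref{Eqn key matrix for alg2} are exactly the ones under which Theorem \ref{Thm VBK is a TMDS code} verifies C1--C3 at the initial layer $\mathbbmss{Q}_0$; Lemma \ref{Theorem the recursion result} then propagates these conditions to every subsequent layer $\mathbbmss{Q}_{s}$, which is what Theorem \ref{sec:modify adaptive:Thr the requirement for resultant code} needs in its inductive argument.
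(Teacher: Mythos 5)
Your proof is correct and follows exactly the route the paper takes: the paper states this theorem as an immediate consequence of Theorems \ref{sec:modify adaptive:Thr the requirement for resultant code} and \ref{Thm VBK is a TMDS code}, and you fill in precisely the details that make that implication go through --- that the VBK code is a TMDS code with $\mu=\tau=\lceil n/\delta_0\rceil$ and the stated key matrices, that VBK has the $\delta_0$-optimal \emph{access} (not merely repair) property so the access version of Theorem \ref{sec:modify adaptive:Thr the requirement for resultant code} applies, and the arithmetic $l_0^{\mu}N=(\delta/\delta_0)^{\tau}\delta_0^{\tau}=\delta^{\lceil n/\delta_0\rceil}$. Your remark that the $\zeta_v$'s are already accounted for in the proof of Theorem \ref{Thm VBK is a TMDS code} (so no further field enlargement is needed beyond \eqref{Eqn the finite field size of VBK code}) is exactly the right observation to close the field-size part of the claim.
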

	
	\begin{Remark}
		Note that the generic construction method and Algorithm \ref{Alg_repair_for_all_nodes} have wide potential applications. For example, it can be verified that both the YB codes 1 and 2 in \cite{Barg_1} are TMDS array codes and can be chosen as the base code. However, the resultant codes are not as good as the code $\mathbbmss{G}$ particularly in terms of the sub-packetization level, because the sub-packetization levels of YB codes 1 and 2 in \cite{Barg_1} are much larger than that of the VBK code in \cite{kumar}, i.e, the base code of $\mathbbmss{G}$. Therefore, we do not present the two resultant codes in this paper.
	\end{Remark}

	\section{Comparisons}
	In this section, we give  comparisons  of some key parameters among the  proposed MDS array code $\mathbbmss{G}$  and some existing  notable  MDS  codes with $\delta_{[0:m)}$-optimal repair property for all nodes, where $\delta_{[0:m)}=\{\delta_0,\delta_1,\cdots,\delta_{m-1}\}$. 
	
	Table \ref{Table comp} compares the details of these codes, while Table \ref{Table comp1} - \ref{Table comp3} compare the new MDS array code $\mathbbmss{G}$, YB codes 3 and 4 in terms of the sub-packetization level,  the smallest possible size of field  with characteristic two, and the storage capacity ($N\log q$) for $\delta_0=2,3$ and $4$, respectively. From these tables, we see that the proposed MDS array code $\mathbbmss{G}$ has the following advantages:
	\begin{itemize}
		%\item The new MDS array codes $\mathbbmss{G}_1$ and $\mathbbmss{G}_3$ have the $\delta_{[0:m)}$-optimal access property for all nodes.
		\item The new MDS array code $\mathbbmss{G}$ has the $\delta_{[0:m)}$-optimal access property for all nodes.
	
		%\item Compared with YB code 3, the three new $(n,k)$ MDS array codes derived in this paper have a smaller finite field size under the same parameters $n,k$ and set $\delta_{[0:m)}$, but do not possess the optimal update property.
		\item Compared with YB code 3, the new $(n,k)$ MDS array code $\mathbbmss{G}$ has a smaller finite field size under the same parameters $n,k$ and set $\delta_{[0:m)}$, but do not possess the optimal update property.
		%\item Compared the $(n,k)$ YB codes 3, 4 and code $\mathbbmss{G}_1$ with the same set $\delta_{[0:m)}$, the sup-packetization level of code $\mathbbmss{G}_2$ is decreased by a factor of $\delta^{n-\lceil\frac{n}{2}\rceil}$ in contrast to YB codes 3 and 4. As a cost, the finite field size of the new code $\mathbbmss{G}_1$ is larger than that of YB code 3. 
		
		\item In contrast to YB codes 3 and 4 with the same $n,k$ and set $\delta_{[0:m)}$, the sub-packetization level of code $\mathbbmss{G}$ is much smaller than that of YB codes 3 and 4. More precisely,
		    \begin{enumerate}
		    	\item For $\delta_0=2,3,4$, the sub-packetization level of code $\mathbbmss{G}$ is decreased by a factor of $\delta^{n-\lceil\frac{n}{\delta_0}\rceil}$ in contrast to YB codes 3 and 4;
		    	\item For $\delta_0>4$, consider the new code $\mathbbmss{G}$ with the $(\{4\}\cup \delta_{[0:m)})$-optimal access property for all nodes, its sub-packetization level $N=(\mathrm{lcm}(4,\delta))^{\lceil\frac{n}{4} \rceil}$ is decreased by a factor of $\eta$ in contrast to YB codes 3 and 4, where
		    	\begin{eqnarray*}
		    		\eta=\left\{\begin{array}{ll}
		    			\delta^{n-\lceil\frac{n}{4}\rceil}, & \mathrm{if~} 4\mid \delta\\
		    			\frac{\delta^{n-\lceil\frac{n}{4}\rceil}}{2^{\lceil \frac{n}{4}\rceil}}, & \mathrm{if~} 2\mid \delta \mathrm{~and~}4\nmid \delta\\
		    			\frac{\delta^{n-\lceil\frac{n}{4}\rceil}}{4^{\lceil \frac{n}{4}\rceil}}, & \mathrm{otherwise}
		    		\end{array}\right.
		    	\end{eqnarray*}
		    	Moreover, it supports one more repair degree than YB codes 3 and 4 in this case.
		    \end{enumerate}
		\item Compared with YB codes 3 and 4, the field size of new code $\mathbbmss{G}$ is smaller than that of YB code 3 but at most $6$ times larger than that of YB code 4. Since the sub-packetization level of code $\mathbbmss{G}$ is decreased logarithmically with the code length $n$ and value $\delta_0$, thus the total storage ($N \log q$ bits) at each node of our new code $\mathbbmss{G}$  is much smaller than those of YB codes 3 and 4 under the same condition that all of them are constructed over the smallest possible finite field $\mathbb{F}_q$, as shown in Tables \ref{Table comp1}-\ref{Table comp3}.
		
	\end{itemize}

	\begin{table}[htbp]
		\begin{center}
			\caption{A comparison of some key parameters among the $(n,k)$ MDS array codes $\mathbbmss{G}$, YB codes 3 and 4.}\label{Table comp}
			\begin{tabular}{|c|c|c|c|c|}
				\hline
				&Sub-packetization& \multirow{2}{*}{Field size}   & \multirow{2}{*}{Remark} & \multirow{2}{*}{Reference} \\
				&level $N$  &    & &\\
				\hline\hline
				\tabincell{c}{New MDS \\array code $\mathbbmss{G}$}  & $
				\left\{\begin{array}{ll}
				    \delta^{\lceil\frac{n}{\delta_0}\rceil}, & \mathrm{if~}\delta_0 \in [2:5)\\
				    (\mathrm{lcm}(4,\delta))^{\lceil\frac{n}{4} \rceil}, & \mathrm{if~}\delta_0\ge 5
				\end{array}\right.$ & $q\ge \left\{\begin{array}{ll}
					6\lceil\frac{n}{2}\rceil+2, & \mathrm{if~}\delta_0=2\\
					18 \lceil\frac{n}{\delta_0}\rceil+2, & \mathrm{if~}\delta_0=3,4\\
					18 \lceil\frac{n}{4}\rceil+2, & \mathrm{if~}\delta_0 \ge 5\\
				\end{array}\right.$  & Optimal access & Theorem \ref{theorem G1 obtained by algorithm}\\
				\hline\hline
				\tabincell{c}{YB code 3}  & $\delta^n$ & $q \ge \delta n$ & Optimal update &  \cite{kumar} \\
				\hline
				\tabincell{c}{YB code 4}  & $\delta^n$ & $q\ge n+1$ & Optimal access & \cite{kumar}\\
				\hline
			\end{tabular}
		\end{center}
	\end{table}
	
	\begin{table}[htbp]
		\begin{center}
			\caption{A comparison of some parameters among the $(24,20)$ MDS array codes $\mathbbmss{G}$, YB codes 3 and 4 for $\delta_0=2$, where we set the finite field size as the power of 2}\label{Table comp1}
			\begin{tabular}{|c|c|c|c|c|c|}
				\hline
				&  \multirow{2}{*}{Set of $\delta_{[0:m)}$} & Sub-packetization  &  The finite field   &  Storage capacity  & \multirow{2}{*}{$\frac{\textrm{Storage capacity}}{\textrm{Storage capacity of YB code 4}}$} \\
				& & level $N$ & size $q$ & ($N\log q$) & \\
				\hline\hline
				New code $\mathbbmss{G}$  & \multirow{3}{*}{$\{2,3\}$} &  $6^{12}$ & $2^7$ & $7\times 6^{12}$ & $6.43\times 10^{-10}$\\
				YB code 3 & & $6^{24}$ & $2^8$ & $8\times 6^{24}$ & 1.6\\
				YB code 4 & & $6^{24}$ & $2^5$ & $5\times 6^{24}$ & 1\\
				\hline
				New code $\mathbbmss{G}$  & \multirow{3}{*}{$\{2,4\}$} & $4^{12}$ & $2^7$ & $7\times 4^{12}$ & $8.34\times 10^{-8}$\\
				YB code 3 & & $4^{24}$ & $2^7$ & $7\times 4^{24}$  & 1.4\\
				YB code 4 & & $4^{24}$ & $2^5$ & $5\times 4^{24}$  & 1\\
				\hline
				New code $\mathbbmss{G}$ & \multirow{3}{*}{$\{2,3,4\}$} & $12^{12}$ & $2^7$ & $7\times 12^{12}$ & $1.57\times 10^{-13}$ \\
				YB code 3  & & $12^{24}$ & $2^9$ & $9\times 12^{24}$ & 1.8\\
				YB code 4 & & $12^{24}$ & $2^5$ & $5\times 12^{24}$  & 1\\
				\hline
			\end{tabular}
		\end{center}
	\end{table}
	
	\begin{table}[htbp]
		\begin{center}
			\caption{A comparison of some parameters among the $(24,19)$ MDS array codes $\mathbbmss{G}$, YB codes 3 and 4 for $\delta_0=3$, where we set the finite field size as the power of 2}\label{Table comp2}
			\begin{tabular}{|c|c|c|c|c|c|}
				\hline
				&  \multirow{2}{*}{Set of $\delta_{[0:m)}$} & Sub-packetization  &  The finite field   &  Storage capacity  & \multirow{2}{*}{$\frac{\textrm{Storage capacity}}{\textrm{Storage capacity of YB code 4}}$} \\
				& & level $N$ & size $q$ & ($N\log q$) & \\
				\hline\hline
				New code $\mathbbmss{G}$  & \multirow{3}{*}{$\{3,4\}$} &  $12^{8}$ & $2^8$ & $7\times 12^{8}$ & $7.57\times 10^{-18}$ \\
				YB code 3 & &  $12^{24}$ & $2^9$ & $9\times 12^{24}$ & 1.8\\
				YB code 4 & &  $12^{24}$ & $2^5$ & $5\times 12^{24}$ &  1\\
				\hline
				New code $\mathbbmss{G}$ & \multirow{3}{*}{$\{3,5\}$} & $15^{8}$ & $2^8$ & $8\times 15^{8}$ & $2.44\times 10^{-19}$\\
				YB code 3 & & $15^{24}$ & $2^9$ & $9\times 15^{24}$ & 1.8\\
				YB code 4 & & $15^{24}$ & $2^5$ & $5\times 15^{24}$ & 1\\
				\hline			
				New code $\mathbbmss{G}$ & \multirow{3}{*}{$\{3,4,5\}$} & $60^{8}$ & $2^8$ & $8\times 60^{8}$ & $5.67\times 10^{-29}$\\
				YB code 3 & & $60^{24}$ & $2^{11}$ & $11\times 60^{24}$ & 2.2\\
				YB code 4 & & $60^{24}$ & $2^5$ & $5\times 60^{24}$ & 1\\
				\hline
			\end{tabular}
		\end{center}
	\end{table}
	
	\begin{table}[htbp]
		\begin{center}
			\caption{A comparison of some parameters among the $(24,18)$ MDS array codes $\mathbbmss{G}$, YB codes 3 and 4 for $\delta_0=4$, where we set the finite field size as the power of 2}\label{Table comp3}
			\begin{tabular}{|c|c|c|c|c|c|}
				\hline
				&  \multirow{2}{*}{Set of $\delta_{[0:m)}$} & Sub-packetization  &  The finite field   &  Storage capacity  & \multirow{2}{*}{$\frac{\textrm{Storage capacity}}{\textrm{Storage capacity of YB code 4}}$} \\
				& & level $N$ & size $q$ & ($N\log q$) & \\
				\hline\hline
				New code $\mathbbmss{G}$  & \multirow{3}{*}{$\{4,5\}$} &  $20^{6}$ & $2^7$ & $7\times 20^{6}$ & $5.34\times 10^{-24}$\\
				YB code 3 & & $20^{24}$ & $2^9$ & $9\times 20^{24}$ & 1.8\\
				YB code 4 & &  $20^{24}$ & $2^5$ & $5\times 20^{24}$ & 1 \\
				\hline
				New code $\mathbbmss{G}$  & \multirow{3}{*}{$\{4,6\}$} & $12^{6}$ & $2^7$ & $7\times 12^{6}$ & $5.26\times 10^{-20}$\\
				YB code 3 & & $12^{24}$ & $2^9$ & $9\times 12^{24}$ & 1.8\\
				YB code 4 & & $12^{24}$ & $2^5$ & $5\times 12^{24}$ & 1\\
				\hline
				New code $\mathbbmss{G}$  & \multirow{3}{*}{$\{4,5,6\}$} & $60^{6}$ & $2^7$ & $7\times 60^{6}$ & $1.38\times 10^{-32}$ \\
				YB code 3 & & $60^{24}$ & $2^{11}$ & $11\times 60^{24}$ & 2.2 \\
				YB code 4 & & $60^{24}$ & $2^5$ & $5\times 60^{24}$ & 1\\
				\hline
			\end{tabular}
		\end{center}
	\end{table}

	\section{Conclusion}
	In this paper, we provided a generic construction method and further proposed an algorithm that can transform an existing TMDS array code with $\delta_0$-optimal repair property for all nodes into a new MDS array code with all nodes having $\delta_{[0:m)}$-optimal repair property, where $1<\delta_0<\delta_1<\cdots<\delta_{m-1}\le r$. A new explicit construction of high-rate MDS array code $\mathbbmss{G}$ is obtained by directly applying the algorithm to VBK code, where each node of the new code $\mathbbmss{G}$ has the $\delta_{[0:m)}$-optimal access property. The comparisons show that the new code $\mathbbmss{G}$ outperforms existing MDS array codes (i.e., YB codes 3 and 4) in terms of the field size and/or the sub-packetization level under the same parameters $n,k$ and subset $\delta_{[0:m)}$ of $[2:r]$. Extending our generic construction method and specific algorithm to any MDS array codes with $\delta_0$-optimal repair property for all nodes is part of our ongoing work.

	\appendices	
	
	\section{Proof of Property \ref{Pro the number of elements of Piz}}\label{appen the proof of property 1}
	
	All the three properties rely on a fact from Lines 2  and 6 of Algorithm \ref{Alg the set Pij} that
	\begin{eqnarray}\label{Eqn_Pw}
		\mathcal{P}_{i,j}=\bigcup\limits_{a=0}^{l_j-1}\mathcal{P}_{i,j}^{(a)}=\{\mathbf{f}_{i}^{(a)}[u]|a\in [l_j:l_{j-1}),u\in [0:\delta_0)\}\cup\bigcup\limits_{a=l_j}^{l_{j-1}-1}(\mathcal{P}_{i,1}^{(a)}\cup\cdots\cup\mathcal{P}_{i,j-1}^{(a)}), ~j\in [1:m).
	\end{eqnarray}
%{\color{red}No need to use $w$ instead of $j$.}
	
	Firstly, we prove P1 by induction. 
	
	i) If $j=1$, then P1 is a direct consequence of Lines 2  and 4 of Algorithm \ref{Alg the set Pij}. 
	
	ii) Suppose that  P1 holds for $j=w$, where $1\le w<m-1$, i.e.,
	\begin{equation}\label{Eqn_P1_j=w}
		\mathcal{P}_{i,1}\cup \mathcal{P}_{i,2} \cdots \cup \mathcal{P}_{i,w}=\bigcup\limits_{a=0}^{l_w-1}(\mathcal{P}_{i,1}^{(a)}\cup\mathcal{P}_{i,2}^{(a)}\cup\cdots\cup\mathcal{P}_{i,w}^{(a)})=\{\mathbf{f}_i^{(a)}[u]|a\in  [l_w:l_0),u\in [0:\delta_0)\}.
	\end{equation}
	Then, it follows from \eqref{Eqn_Pw} and \eqref{Eqn_P1_j=w} that
	\begin{eqnarray*}
		\mathcal{P}_{i,1}\cup \mathcal{P}_{i,2} \cdots \cup \mathcal{P}_{i,w+1}&=&\{\mathbf{f}_{i}^{(a)}[u]|a\in [l_{w+1}:l_{w}),u\in [0:\delta_0)\}\cup \{\mathbf{f}_i^{(a)}[u]|a\in  [l_w:l_0),u\in [0:\delta_0)\}\\
		&=&\{\mathbf{f}_{i}^{(a)}[u]|a\in [l_{{w+1}}:l_{0}),u\in [0:\delta_0)\}
	\end{eqnarray*}
	and
	\begin{eqnarray*}
		&~&\bigcup\limits_{a=0}^{l_{w+1}-1}(\mathcal{P}_{i,1}^{(a)}\cup\mathcal{P}_{i,2}^{(a)}\cup\cdots\cup\mathcal{P}_{i,{w+1}}^{(a)})\\
		&=&(\bigcup\limits_{a=0}^{l_{w+1}-1} \mathcal{P}_{i,{w+1}}^{(a)}) \cup \bigcup\limits_{a=0}^{l_{w+1}-1}(\mathcal{P}_{i,1}^{(a)}\cup\mathcal{P}_{i,2}^{(a)}\cup\cdots\cup\mathcal{P}_{i,{w}}^{(a)})\\
		%&=&\mathcal{P}_{i,{w+1}}\cup (\bigcup\limits_{a=0}^{l_{w+1}-1}\mathcal{P}_{i,1}^{(a)}\cup\mathcal{P}_{i,2}^{(a)}\cup\cdots\cup\mathcal{P}_{i,{w}}^{(a)}) {\color{red}\rm this ~equality ~is ~redundant ~and~ can ~be ~removed}\\
		&=&\{\mathbf{f}_{i}^{(a)}[u]|a\in [l_{{w+1}}:l_{w}),u\in [0:\delta_0)\}\cup \bigcup\limits_{a=l_{w+1}}^{l_{w}-1}(\mathcal{P}_{i,1}^{(a)}\cup\cdots\cup\mathcal{P}_{i,w}^{(a)})\cup \bigcup\limits_{a=0}^{l_{w+1}-1}(\mathcal{P}_{i,1}^{(a)}\cup\mathcal{P}_{i,2}^{(a)}\cup\cdots\cup\mathcal{P}_{i,w}^{(a)})\\
		&=&\{\mathbf{f}_{i}^{(a)}[u]|a\in [l_{{w+1}}:l_{w}),u\in [0:\delta_0)\}\cup \bigcup\limits_{a=0}^{l_{w}-1}(\mathcal{P}_{i,1}^{(a)}\cup\mathcal{P}_{i,2}^{(a)}\cup\cdots\cup\mathcal{P}_{i,w}^{(a)})\\
		&=&\{\mathbf{f}_{i}^{(a)}[u]|a\in [l_{{w+1}}:l_{0}),u\in [0:\delta_0)\},
	\end{eqnarray*}
	i.e., P1 also holds for $j=w+1$.
	
	Secondly, we prove P2. When $z=j-1$, P2 is obvious from \eqref{Eqn_Pw}. For $z\le j-2$, by applying \eqref{Eqn_Pw} $j-z$ times we have
	\begin{eqnarray*}
		\mathcal{P}_{i,j}&=&\{\mathbf{f}_{i}^{(a)}[u]|a\in [l_{{j}}:l_{j-1}),u\in [0:\delta_0)\}\cup \bigcup\limits_{a=l_{j}}^{l_{j-1}-1}(\mathcal{P}_{i,1}^{(a)}\cup\mathcal{P}_{i,2}^{(a)}\cup\cdots\cup\mathcal{P}_{i,j-1}^{(a)})\\
		&\subseteq& \{\mathbf{f}_{i}^{(a)}[u]|a\in [l_{{j}}:l_{j-1}),u\in [0:\delta_0)\}\cup \mathcal{P}_{i,j-1}\cup \bigcup\limits_{a=l_{j}}^{l_{j-1}-1}(\mathcal{P}_{i,1}^{(a)}\cup\mathcal{P}_{i,2}^{(a)}\cup\cdots\cup\mathcal{P}_{i,j-2}^{(a)})\\
		& = & \{\mathbf{f}_{i}^{(a)}[u]|a\in [l_{{j}}:l_{j-2}),u\in [0:\delta_0)\}\cup \bigcup\limits_{a=l_{j}}^{l_{j-2}-1}(\mathcal{P}_{i,1}^{(a)}\cup\mathcal{P}_{i,2}^{(a)}\cup\cdots\cup\mathcal{P}_{i,j-2}^{(a)})\\
		& \vdots &\\
		& \subseteq & \{\mathbf{f}_{i}^{(a)}[u]|a\in [l_j:l_{z+1}),u\in [0:\delta_0)\}\cup \mathcal{P}_{i,z+1}\cup \bigcup\limits_{a=l_{j}}^{l_{z+1}-1}(\mathcal{P}_{i,1}^{(a)}\cup\mathcal{P}_{i,2}^{(a)}\cup\cdots\cup\mathcal{P}_{i,z}^{(a)})\\
		& =& \{\mathbf{f}_{i}^{(a)}[u]|a\in [l_j:l_z),u\in [0:\delta_0)\}\cup \bigcup\limits_{a=l_{j}}^{l_z-1}(\mathcal{P}_{i,1}^{(a)}\cup\mathcal{P}_{i,2}^{(a)}\cup\cdots\cup\mathcal{P}_{i,z}^{(a)}).
	\end{eqnarray*}
	
	Thirdly, we prove P3 for $1\le j<m$ by induction. 
	
i)	
If $j=1$, then P3 follows from Lines 2 and 4 of Algorithm \ref{Alg the set Pij}. 

ii) Suppose that  P3 holds for all $j=1,2,\cdots,w$ where $1\le w<m-1$, i.e.,
$|\mathcal{P}_{i,j} |=(\delta_j-\delta_{j-1})l_j$ and $|\mathcal{P}_{i,j}^{(a)} |=\delta_j-\delta_{j-1}$  for $a\in [0:l_j)$ and $j=1,2,\cdots,w$.
We next prove that P3 holds for $j=w+1$. 

Given $i\in \mathcal{G}$ and $w\in [1: m)$,  by P1 and the hypothesis, we have
	\begin{eqnarray}
	  \delta_0(l_0-l_w)&=&|\{\mathbf{f}_i^{(a)}[u]|a\in  [l_w:l_0),u\in [0:\delta_0)\}|\notag \\
	    &=& |\bigcup\limits_{a=0}^{l_w-1}(\mathcal{P}_{i,1}^{(a)}\cup \mathcal{P}_{i,2}^{(a)}\cup \cdots \cup \mathcal{P}_{i,w}^{(a)})|\notag\\
	    &\le &\sum\limits_{a=0}^{l_w-1}(|\mathcal{P}_{i,1}^{(a)}|+|\mathcal{P}_{i,2}^{(a)}|+\cdots+|\mathcal{P}_{i,w}^{(a)}|)\notag\\
	    &=&l_w[(\delta_1-\delta_0)+(\delta_2-\delta_1)+\cdots+(\delta_w-\delta_{w-1})]\notag\\
	    &=&l_w(\delta_w-\delta_0)\label{Eqn useful for proving P3}\\
	    &=&\delta_0(l_0-l_w)\notag,
	\end{eqnarray}
which implies 
\begin{equation*}
|\bigcup\limits_{a=0}^{l_w-1}(\mathcal{P}_{i,1}^{(a)}\cup \mathcal{P}_{i,2}^{(a)}\cup \cdots \cup \mathcal{P}_{i,w}^{(a)})|=\sum\limits_{a=0}^{l_w-1}(|\mathcal{P}_{i,1}^{(a)}|+|\mathcal{P}_{i,2}^{(a)}|+\cdots+|\mathcal{P}_{i,w}^{(a)}|),    
\end{equation*}
i.e., 
	\begin{equation}\label{Eqn_disjointP}
\mathcal{P}_{i,s}^{(a)}\cap  \mathcal{P}_{i,s'}^{(a')}=\emptyset \mbox{~for~}s,s'\in [1:w]\mbox{~and~}a,a'\in [0:l_w) \mbox{~with~}	(s,a)\ne (s',a').    
	\end{equation}
	
	Then, by  \eqref{eqn notation l},  \eqref{Eqn_Pw} and \eqref{Eqn_disjointP}, similarly to \eqref{Eqn useful for proving P3} we obtain
	\begin{eqnarray*}
		|\mathcal{P}_{i,w+1}|%&=&|\{\mathbf{f}_{i}^{(a)}[u]|u\in [0:\delta_0),a\in [l_{w+1}:l_w)\}\cup\bigcup\limits_{a=l_{w+1}}^{l_w-1}(\mathcal{P}_{i,1}^{(a)}\cup\cdots\cup\mathcal{P}_{i,w}^{(a)})|\\
		%&=&|\{\mathbf{f}_{i}^{(a)}[u]|u\in [0:\delta_0),a\in [l_{w+1}:l_w)\}|+\sum\limits_{a=l_{w+1}}^{l_w-1}(|\mathcal{P}_{i,1}^{(a)}|+\cdots+|\mathcal{P}_{i,w}^{(a)}|)\\
		&=&|\{\mathbf{f}_{i}^{(a)}[u]|u\in [0:\delta_0),a\in [l_{w+1}:l_w)\}|+(l_w-l_{w+1})(\delta_w-\delta_0)\\
		&=&\delta_0(l_w-l_{w+1})+(l_w-l_{w+1})(\delta_w-\delta_0)\\
		&=&\delta_w(l_w-l_{w+1})\\
		&=&(\delta_{w+1}-\delta_w)l_{w+1},
	\end{eqnarray*}
	which together with Line 7 of Algorithm \ref{Alg the set Pij} shows P3 holds for $j=w+1$. This finishes the proof.

	\section{Proof of Lemma \ref{lem the important for the repair of remainder node}}\label{appen the proof of lem 3}
	
	Given $a\in [l_{w+1}:l_w)$ with $1\le w<m$, according to P3, we set
	\begin{eqnarray*}
		\mathcal{P}_{j,s}^{(a)}=\left\{\mathbf{f}_j^{(b_{\delta_{s-1}-\delta_0})}[u_{\delta_{s-1}-\delta_0}],\cdots,\mathbf{f}_j^{(b_{\delta_s-\delta_0-1})}[u_{\delta_s-\delta_0-1}]\right\} \textrm{ for all } s\in [1:w], j\in \mathcal{G},
	\end{eqnarray*}
	which together with \eqref{eqn_specific_form_of_P}  gives
	\begin{eqnarray}\label{eqn roughly form of P}
		S_{i,\delta_z}\mathbf{P}_{t,j}^{(a)}=\sum\limits_{v=0}^{\delta_w-\delta_0-1}S_{i,\delta_z} K_{t,j,v}\mathbf{f}_j^{(b_v)}[u_v]=\sum\limits_{v=0}^{\delta_w-\delta_0-1}S_{i,\delta_z} K_{t,j,v}\Phi_{\alpha,u_v}\mathbf{f}_j^{(b_v)}
	\end{eqnarray}
	for $i\in [0:n)\backslash\mathcal{G}$, $j\in \mathcal{G}$, $t\in [0:r)$ and $z\in [0:m)$, where $\{(b_p,u_p)|p\in [0:\delta_w-\delta_0)\}\subset [0:l_0)\times [0:\delta_0)$, and the second equality holds because of \eqref{eqn vector b[u]}. Moreover, it follows from P1 and Lines 2, 7 of Algorithm \ref{Alg the set Pij} that 
	\begin{eqnarray*}
		\mathcal{P}_{j,s}^{(a)} \subseteq  \{\mathbf{f}_j^{(b)}[u]|b\in [l_w:l_0),u \in [0:\delta_0)\}, s\in [1:w],
	\end{eqnarray*}
	which implies that $b_p\in [l_w:l_0)$ for any $p\in [0:\delta_w-\delta_0)$.

	Then, for any $a\in [l_{w+1}:l_w)$ with $w\in [1:m)$, any $z\in [0:m)$, $t\in [0:r)$, $i\in [0:n)\backslash\mathcal{G}$ and $j\in \mathcal{G}$, according to \eqref{eqn roughly form of P}, we are able to compute the vector $S_{i,\delta_z}\mathbf{P}_{t,j}^{(a)}$ from the data in set $\{R_{i,\delta_z}\mathbf{f}_j^{(b)}|b\in [l_w:l_0)\}$ based on C3.
	
	\section{Proof of Lemma \ref{Theorem the recursion result}}\label{appen the proof of lem 4}
	Our task is to prove that the matrices $(K'_{t,i,v})_{t\in [0:r),i\in \mathcal{G}',v\in [0:\delta_{m-1}-\delta_0)}$ given in \eqref{eqn key matrix for recursion} are the key matrices such that the nodes in set $\mathcal{G}'$ of new code $\mathbbmss{C}_2$ satisfy C1-C3.

	Firstly, we verify that the nodes in set $\mathcal{G}'$ of new code $\mathbbmss{C}_2$ satisfy C1. For any $t\in [0:r)$, $v\in [0:\delta_{m-1}-\delta_0)$ and $i,j\in \mathcal{G}'$ with $i\ne j$, by  \eqref{eqn repair,select matrix of remainder nodes} and \eqref{eqn key matrix for recursion}, we then have
	\begin{eqnarray*}
		S_{i,\delta_0}'K_{t,j,v}'=\mbox{blkdiag}(S_{i,\delta_0}K_{t,j,v},S_{i,\delta_0}K_{t,j,v},\cdots,S_{i,\delta_0}K_{t,j,v})_{l_0}.
	\end{eqnarray*}
	Recall that the nodes in set $\mathcal{G}'$ of base code $\mathbbmss{C}_0$ satisfy C1, thus the nodes in set $\mathcal{G}'$ of new code $\mathbbmss{C}_2$ also satisfy C1.

	Secondly,  we check that  the nodes in set $\mathcal{G}'$ of new code $\mathbbmss{C}_2$ satisfy C2.  Given a node $j\in [0:n)$ of code $\mathbbmss{C}_2$,
	let $A_{t,j}'~(t\in [0:r))$ be its parity-check matrix and $\mathbf{g}_j=((\mathbf{f}_j^{(0)})^\top,(\mathbf{f}_j^{(1)})^\top,\cdots,(\mathbf{f}_j^{(l_0-1)})^\top)^\top$ be the 
	data stored at node $j$. Then, for any $j\in \mathcal{G}'$, by \eqref{eqn t-PCG of code C3},
	\begin{eqnarray*}
		A_{t,j}'\mathbf{g}_j=\left\{\begin{array}{ll}
			\mbox{blkdiag}(A_{t,j},A_{t,j},\cdots,A_{t,j})_{l_0}\mathbf{g}_j+\left(\begin{array}{c}
				\mathbf{P}_{t,j}^{(0)}\\
				\mathbf{P}_{t,j}^{(1)}\\
				\vdots\\
				\mathbf{P}_{t,j}^{(l_0-1)}\\
			\end{array}\right), & \mbox{if~}j\in \mathcal{G},\\
			\mbox{blkdiag}(A_{t,j},A_{t,j},\cdots,A_{t,j})_{l_0}\mathbf{g}_j,& \mbox{otherwise}.
		\end{array}\right.
	\end{eqnarray*}
	Thus, for any $i\in \mathcal{G}'$ and $j\in [0:n)\backslash\{i\}$,  by   \eqref{repair node requirement3}, \eqref{eqn the repair,select matrix of C3 for goal nodes} and \eqref{eqn repair,select matrix of remainder nodes},  we have
	\begin{eqnarray}\label{eqn the third equation for prove Thm. 5}
		S_{i,\delta_0}'A_{t,i}'=\mbox{blkdiag}(S_{i,\delta_0}A_{t,i},S_{i,\delta_0}A_{t,i},\cdots,S_{i,\delta_0}A_{t,i})_{l_0}  
	\end{eqnarray}
	and 
	\begin{eqnarray}\label{eqn the fourth equation for prove Thm. 5}
		\tilde{A}_{t,j,i,\delta_0}'=\left\{\begin{array}{ll}
			\left(\begin{array}{cccc}
				\tilde{A}_{t,j,i,\delta_0} & \# & \cdots & \#\\
				& \tilde{A}_{t,j,i,\delta_0} & \cdots & \# \\
				& & \ddots & \vdots\\
				& & & \tilde{A}_{t,j,i,\delta_0}\\
			\end{array}\right), & \mbox{if~}j\in \mathcal{G},\\
			\mbox{blkdiag}(\tilde{A}_{t,j,i,\delta_0},\tilde{A}_{t,j,i,\delta_0},\cdots,\tilde{A}_{t,j,i,\delta_0})_{l_0}, &\mbox{otherwise},
		\end{array}\right.
	\end{eqnarray}
	where the case of $j\in \mathcal{G}$ follows from P0 and Lemma \ref{lem the important for the repair of remainder node}, $\tilde{A}_{t,j,i,\delta_0}'$ is the matrix defined in \eqref{repair node requirement3}, and symbol $\#$ denotes some matrices which we do not care about the exact expression.

	Let $M_{i,\mathcal{D}_z}'$ be the matrix defined in \eqref{eqn matrix MijDz},  where one should note that the symbols $A,K,S$ in \eqref{eqn matrix MijDz} are replaced by $A',K',S'$, respectively. According to  \eqref{eqn the third equation for prove Thm. 5} and  \eqref{eqn the fourth equation for prove Thm. 5}, by exachanging some block rows and block columns of matrix $M_{i,\mathcal{D}_z}'$, we obtain  
	\begin{eqnarray*}
		\mbox{Rank}(M_{i,\mathcal{D}_z}')=\mbox{Rank}\left(\left(\begin{array}{cccc}
			M_{i,\mathcal{D}_z} & \# & \cdots & \#\\
			& M_{i,\mathcal{D}_z} & \cdots & \#\\
			& & \ddots & \vdots\\
			& & & M_{i,\mathcal{D}_z}
		\end{array}\right)\right),
	\end{eqnarray*}
	which  finishes the proof of C2, together with the fact that the nodes in $\mathcal{G}'$ of base code satisfy C2 as well.

	Finally, we show that the nodes in $\mathcal{G}'$ of new code $\mathbbmss{C}_2$ satisfy C3. Note that the sub-packetization level of new code $\mathbbmss{C}_2$ is $l_0 \alpha N$, let $\alpha'=l_0 \alpha$. Then by \eqref{eqn the matrix Phi}, 
	\begin{eqnarray}\label{eqn the matrix for Phi with a=a'}
		\Phi_{\alpha',u}=\mbox{blkdiag}(\Delta_u,\Delta_u,\cdots,\Delta_u)_{\alpha'}=\mbox{blkdiag}(\Phi_{\alpha,u},\Phi_{\alpha,u},\cdots,\Phi_{\alpha,u})_{l_0}
	\end{eqnarray} 
	due to $\alpha'=l_0 \alpha$. Thus by  \eqref{eqn repair,select matrix of remainder nodes}, \eqref{eqn key matrix for recursion}  and \eqref{eqn the matrix for Phi with a=a'}, for any $j\in \mathcal{G}'$, $i\in [0:n)\backslash \mathcal{G}'$, $t\in [0:r)$, $v\in [0:\delta_{m-1}-\delta_0)$ and $u\in [0:\delta_0)$, we get
	\begin{eqnarray}\label{eqn the second equation for prove Thm. 5}
		S_{i,\delta_z}'K_{t,j,v}'\Phi_{\alpha',u}=\left\{\begin{array}{ll}
			\left(\begin{array}{c;{2pt/2pt}c}
				\underbrace{\begin{array}{ccc}
						S_{i,\delta_0}K_{t,j,v}\Phi_{\alpha,u} & &\\
						& \ddots &\\
						& & S_{i,\delta_0}K_{t,j,v}\Phi_{\alpha,u}
				\end{array}}_{l_z\times l_z} & \underbrace{\begin{array}{ccc}
						\mathbf{0}_{\alpha N'\times \alpha N} & \cdots & \mathbf{0}_{\alpha N'\times \alpha N}\\
						\vdots & \vdots & \vdots\\
						\mathbf{0}_{\alpha N'\times \alpha N} & \cdots & \mathbf{0}_{\alpha N'\times \alpha N}
				\end{array}}_{l_z\times (l_0-l_z)}
			\end{array}\right), & \textrm{ if }i\in \mathcal{G},\\
			\mbox{blkdiag}(S_{i,\delta_z}K_{t,j,v}\Phi_{\alpha,u},S_{i,\delta_z}K_{t,j,v}\Phi_{\alpha,u},\cdots,S_{i,\delta_z}K_{t,j,v}\Phi_{\alpha,u})_{l_0}, & \textrm{otherwise},
		\end{array}	\right.
	\end{eqnarray}
	where one should note that \eqref{eqn the second equation for prove Thm. 5} holds for $i\not\in \mathcal{G}$ if and only if the node $i$ has $\delta_z$-optimal repair property in new code $\mathbbmss{C}_2$.

	Combining \eqref{eqn the repair,select matrix of C3 for goal nodes},  \eqref{eqn repair,select matrix of remainder nodes}  and \eqref{eqn the second equation for prove Thm. 5}, we obtain
	\begin{eqnarray*}
		\mbox{Rank}\left(\left(\begin{array}{c}
			R_{i,\delta_z}'\\
			S_{i,\delta_z}'K_{t,j,v}'\Phi_{\alpha',u}
		\end{array}\right)\right)=l_z\cdot \textrm{Rank}\left(\left(\begin{array}{c}
			R_{i,\delta_0}\\
			S_{i,\delta_0}K_{t,j,v}\Phi_{\alpha,u}\\
		\end{array}\right)\right)=l_z\cdot \frac{\alpha N}{\delta_0}=\frac{l_0\alpha N}{\delta_z} & \textrm{ if }i\in \mathcal{G}
	\end{eqnarray*}
	and
	\begin{eqnarray*}
		\mbox{Rank}\left(\left(\begin{array}{c}
			R_{i,\delta_z}'\\
			S_{i,\delta_z}'K_{t,j,v}'\Phi_{\alpha',u}
		\end{array}\right)\right)=l_0\cdot \textrm{Rank}\left(\left(\begin{array}{c}
			R_{i,\delta_z}\\
			S_{i,\delta_z}K_{t,j,v}\Phi_{\alpha,u}\\
		\end{array}\right)\right)=\frac{l_0\alpha N}{\delta_z} & \textrm{ if }i\in [0:n)\backslash( \mathcal{G}\cup \mathcal{G}')
	\end{eqnarray*}
	since the nodes in $\mathcal{G}'$ of base code $\mathbbmss{C}_0$ satisfy C3, where we make use of the fact $\frac{l_z}{\delta_0}=\frac{l_0}{\delta_z}$ from \eqref{eqn notation l}.
	That is,  the nodes in $\mathcal{G}'$ of new code $\mathbbmss{C}_2$ satisfy C3.

	\section{Proofs of Lemmas \ref{lem the first helpful lemma for the proof of Thm. 10}, \ref{lem alignment interference for VBK code} and
	 \ref{lem the helpful lemma for the proof of Thm. 10}
	}\label{appen proof of lems 5,6,8}
	%{\color{red}Can use $x$ instead of $\tilde{x}$ in the following def.?}
	Before proving those three lemmas, let us introduce some necessary notations. Note that $N'=\frac{N}{\delta_0}=\delta_0^{\tau-1}$. For a given $a=(a_{\tau-2},a_{\tau-3},\cdots,a_0)\in [0:N')$, define
	\begin{eqnarray}\label{eqn function phi}
		\varphi(a,x,u)=(a_{\tau-2},\cdots,a_x,u,a_{x-1},\cdots,a_0)
	\end{eqnarray}
	for any $x\in [0:\tau)$, $u\in [0:\delta_0)$, i.e., insert the value $u$ between the $x$-digit and $(x-1)$-digit of the vector $a$ if $x<\tau-1$, and insert the value $u$ before the $(\tau-1)$-digit if $x=\tau-1$. Then by \eqref{Eqn Vt}, we  easily get 
	\begin{eqnarray}\label{Eqn Vt for the equiv form}
		V_{x,u}(a,:)=e_{\varphi(a,x,u)}, & a\in [0:N').
	\end{eqnarray}
	
	By \eqref{Eqn_SB}, \eqref{Eqn a(i,u)} and \eqref{eqn function phi}, we have the following simple facts.
	\begin{Fact}
	1) For $0\le a=(a_{\tau-1},a_{\tau-2},\cdots,a_0),b=(b_{\tau-1},b_{\tau-2},\cdots,b_0)<N$,
		\begin{eqnarray}\label{Eqn_fact_EB}
		e_ae_b^{\top}=\left\{\begin{array}{ll}
		1, & \mathrm{if~} a=b,\\
		0, &	\mathrm{otherwise}.
		\end{array}
		\right.
	\end{eqnarray}
	
		2) For  $a=(a_{\tau-2},a_{\tau-3},\cdots,a_0)\in [0:N')$, $0\le x, \tilde{x}<\tau$ and $0\le u,v<\delta_0$, 
		\begin{eqnarray}\label{Eqn_fact_varphi}
		(\varphi(a,\tilde{x},u))_{x}=\left\{\begin{array}{ll}
				a_x, & \mathrm{if~}x<\tilde{x},\\
				u, & \mathrm{if~}x=\tilde{x},\\
				a_{x-1}, & \mathrm{if~}x>\tilde{x},\\
			\end{array}\right.
		\end{eqnarray}
	and	
		\begin{eqnarray}\label{Eqn_fact_varphi_pi}
                   \pi_\tau(\varphi(a,\tilde{x},u),x,v)=\left\{\begin{array}{ll}
				\varphi(\pi_{\tau-1}(a,x,v),\tilde{x},u), & \mathrm{if~}x<\tilde{x},\\
				\varphi(a,\tilde{x},v), & \mathrm{if~}x=\tilde{x},\\
				\varphi(\pi_{\tau-1}(a,x-1,v),\tilde{x},u), & \mathrm{if~}x>\tilde{x}.\\
			\end{array}\right.
		\end{eqnarray}

	\end{Fact}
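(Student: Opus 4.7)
The plan is that both parts of the Fact are book-keeping statements which unwind from the definitions \eqref{Eqn_SB}, \eqref{Eqn a_expansion}, \eqref{Eqn a(i,u)}, and \eqref{eqn function phi}. I will therefore organise the verification as two short arguments, one per part, and carry out only the index chase rather than any nontrivial algebra.

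For \eqref{Eqn_fact_EB}, I would simply note that by \eqref{Eqn_SB} the row vector $e_a \in \mathbb{F}_q^N$ has a single nonzero entry, namely a $1$ at position $a$. Thus $e_a e_b^{\top} = \sum_{c=0}^{N-1}(e_a)_c (e_b)_c$, and the only index $c$ that can make a term nonzero is $c=a=b$, in which case the term equals $1$. This gives \eqref{Eqn_fact_EB} at once.

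For \eqref{Eqn_fact_varphi} I would observe that $\varphi(a,\tilde{x},u) = (a_{\tau-2}, \ldots, a_{\tilde x}, u, a_{\tilde x-1}, \ldots, a_0)$ is a $\tau$-digit vector obtained by inserting $u$ at position $\tilde x$ and shifting the higher digits of $a$ up by one. Reading off the digit in position $x$ of this vector gives exactly the three cases in \eqref{Eqn_fact_varphi}: for $x < \tilde x$ the digit is untouched and equals $a_x$; for $x = \tilde x$ the digit is the inserted $u$; and for $x > \tilde x$ the digit in position $x$ of $\varphi(a,\tilde x,u)$ comes from position $x-1$ of $a$, i.e.\ equals $a_{x-1}$.

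For \eqref{Eqn_fact_varphi_pi} I would apply $\pi_\tau(\cdot,x,v)$ to the vector $b = \varphi(a,\tilde x,u)$, which by definition replaces $b_x$ by $v$, and then match the result digit-by-digit against the claimed RHS in each of the three cases. If $x = \tilde x$, only the inserted digit $u$ is overwritten by $v$, so the result is literally $\varphi(a,\tilde x,v)$. If $x < \tilde x$, the overwritten digit is $b_x = a_x$; first replacing $a_x$ by $v$ in $a$ yields $\pi_{\tau-1}(a,x,v)$, and then inserting $u$ at position $\tilde x$ gives the same vector, so the two sides agree. If $x > \tilde x$, the overwritten digit is $b_x = a_{x-1}$; since the insertion has shifted the original position $x-1$ of $a$ up to position $x$, replacing $a_{x-1}$ by $v$ first (giving $\pi_{\tau-1}(a,x-1,v)$) and then inserting $u$ at position $\tilde x$ produces the same $\tau$-digit vector. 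The only subtlety is keeping track of whether the edit in $a$ happens at index $x$ or $x-1$, which is dictated by whether $x$ is above or below the insertion point $\tilde x$; there is no real obstacle, just a careful matching of indices in the three cases.
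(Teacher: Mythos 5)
Your proof is correct and follows the same approach the paper implicitly intends: the paper presents these as ``simple facts'' that follow directly from the definitions in \eqref{Eqn_SB}, \eqref{Eqn a(i,u)} and \eqref{eqn function phi} without giving any argument, and your index-chasing verification supplies exactly the bookkeeping that justifies the three cases. Nothing is missing; the case split on $x$ relative to $\tilde x$ and the distinction between editing at index $x$ versus $x-1$ are handled correctly.
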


	\textbf{\textit{Proof of Lemma \ref{lem the first helpful lemma for the proof of Thm. 10}}}
	
	Clearly, (i)  is true because of \eqref{Eqn Vt} and \eqref{Eqn_fact_EB}.

	Next, we prove (ii) for  $0\le u,v,h<\delta_0$ and $0\le x \ne \tilde{x}<\tau$. Given $a=(a_{\tau-2},a_{\tau-3},\cdots,a_0)\in [0:N')$, on one hand, 
	\begin{eqnarray}
		V_{x,u}(a,:)\cdot(V_{\tilde{x},v}^\top\Delta_h)&=& e_{\varphi(a,x,u)}\cdot(e_{\varphi(0,\tilde{x},v)}^\top,e_{\varphi(1,\tilde{x},v)}^\top,\cdots,e_{\varphi(N'-1,\tilde{x},v)}^\top)\left(\begin{array}{c}
				e_{h\cdot N'}\\
				e_{h\cdot N'+1}\\
				\vdots\\
				e_{h\cdot N'+N'-1}
			\end{array}\right)\notag\\
		&=& e_{\varphi(a,x,u)}\sum\limits_{b=0}^{N'-1}e_{\varphi(b,\tilde{x},v)}^\top e_{h\cdot N'+b}\notag\\
		&=&\sum\limits_{b=0}^{N'-1}(e_{\varphi(a,x,u)}e_{\varphi(b,\tilde{x},v)}^\top)e_{h\cdot N'+b}\notag\\
		&=&\sum_{\varphi(b,\tilde{x},v)=\varphi(a,x,u)} e_{h\cdot N'+b} \label{Eqn ehs}
	\end{eqnarray}
	where the first identity follows from \eqref{eqn the matrix Delta} and \eqref{Eqn Vt for the equiv form}, and  the fourth identity comes from 
	\eqref{Eqn_fact_EB}.
	
For $x\ne \tilde{x}$,  by \eqref{eqn function phi}, we have  $\varphi(a,x,u)=\varphi(b,\tilde{x},v)$ if and only if
	\begin{eqnarray}\label{Eqn_abphi}
		b=\left\{\begin{array}{ll}
			(a_{\tau-2},\cdots,a_{\tilde{x}},a_{\tilde{x}-2},\cdots,a_{x},u,a_{x-1},\cdots,a_0), & \mbox{if~}0\le x<\tilde{x}<\tau,a_{\tilde{x}-1}=v,\\
			(a_{\tau-2},\cdots,a_{x},u,a_{x-1},\cdots,a_{\tilde{x}+1},a_{\tilde{x}-1},\cdots,a_0), & \mbox{if~}0\le \tilde{x}<x<\tau, a_{\tilde{x}}=v.
		\end{array}\right.
	\end{eqnarray}

	On the other hand, applying \eqref{eqn the row vector of matrix Tijuvh}, \eqref{Eqn Vt for the equiv form} and  \eqref{Eqn_abphi}, we get
	\begin{eqnarray}\label{eqn a*V}
		T_{x,\tilde{x},v,h}(a,:) \cdot V_{x,u}&=&\left\{\begin{array}{ll}
			\epsilon_{\bar{a}}V_{x,u}, & \textrm{if~} 0\le x<\tilde{x}<\tau,a_{\tilde{x}-1}=v \textrm{~or~}0\le \tilde{x}<x<\tau,a_{\tilde{x}}=v\\
			\mathbf{0}, & \textrm{otherwise}
		\end{array}\right.\notag\\
		&=&\left\{\begin{array}{ll}
			e_{\varphi(\bar{a},x,u)}, & \textrm{if~} 0\le x<\tilde{x}<\tau,a_{\tilde{x}-1}=v \textrm{~or~}0\le \tilde{x}<x<\tau,a_{\tilde{x}}=v\\
			\mathbf{0}, & \textrm{otherwise}
		\end{array}\right.\notag\\
		&=&\left\{\begin{array}{ll}
			e_{h\cdot N'+b}, & \textrm{if~}\varphi(b,\tilde{x},v)=\varphi(a,x,u),\\
			\mathbf{0}, & \textrm{otherwise},
		\end{array}\right.
	\end{eqnarray}
	where 
	\begin{eqnarray*}
		\bar{a}=\left\{\begin{array}{ll}
			(h,a_{\tau-2},\cdots,a_{\tilde{x}},a_{\tilde{x}-2},\cdots,a_0), & \textrm{if }0\le x<\tilde{x}<\tau,a_{\tilde{x}-1}=v\\
			(h,a_{\tau-2},\cdots,a_{\tilde{x}+1},a_{\tilde{x}-1},\cdots,a_0), & \textrm{if }0\le \tilde{x}<x<\tau,a_{\tilde{x}}=v
		\end{array}\right.
	\end{eqnarray*}
	Collecting \eqref{Eqn ehs} and \eqref{eqn a*V}, we complete the proof.
	
	\vspace{3mm}
	In what follows, we give the proofs of Lemma \ref{lem alignment interference for VBK code} and \ref{lem the helpful lemma for the proof of Thm. 10}, in which we always let $i=\delta_0\tilde{x}+\tilde{y}$ and $j=\delta_0 x+y$, where $0\le \tilde{x},x<\tau$ and $0\le \tilde{y},y<\delta_0$.
	
	\textbf{\textit{Proof of Lemma \ref{lem alignment interference for VBK code}}}
	
%Let $i=\delta_0\tilde{x}+\tilde{y}$ and $j=\delta_0x+y$, where $0\le x,\tilde{x}<\tau$ and $0\le y,\tilde{y}<\delta_0$. 	
For any given $a=(a_{\tau-2},a_{\tau-3},\cdots,a_0)\in [0:N')$, according to \eqref{eqn the parity-matrix of VBK code}, \eqref{eqn coefficients gamma for VBK code} and \eqref{Eqn Vt for the equiv form}, we have
	\begin{eqnarray}\label{eqn e_aQti}
		V_{\tilde{x},\tilde{y}}(a,:)\cdot A_{t,j}&=&e_{\varphi(a,\tilde{x},\tilde{y})}\cdot A_{t,j}\notag\\
		&=&e_{\varphi(a,\tilde{x},\tilde{y})}(\sum\limits_{b=0}^{N-1}\lambda_{j,b_x}^te_b^\top e_b+\sum\limits_{b=0,b_x=y}^{N-1}\sum\limits_{u=0,u\ne y}^{\delta_0-1}\varepsilon_{u,y}\lambda_{j,u}^te_b^\top e_{\pi_\tau(b,x,u)})\nonumber\\
		&=&\left\{\begin{array}{ll}
				\lambda_{j,y}^te_{\varphi(a,\tilde{x},\tilde{y})}+\sum\limits_{u=0,u\ne y}^{\delta_0-1}\varepsilon_{u,y}\lambda_{j,u}^te_{\pi_\tau(\varphi(a,\tilde{x},\tilde{y}),x,u)}, &\mbox{if~}(\varphi(a,\tilde{x},\tilde{y}))_x= y,\\
				\lambda_{j,(\varphi(a,\tilde{x},\tilde{y}))_x}^te_{\varphi(a,\tilde{x},\tilde{y})}, & \mbox{otherwise},
			\end{array}\right.
	\end{eqnarray}
	due to \eqref{Eqn_fact_EB}.  

%{\color{red} Is it better to replace $\delta_0x+y$ with $j$ and $\delta_0\tilde{x}+\tilde{y}$ with $i$?}
	
	Particularly, when $x=\tilde{x}$, \eqref{eqn e_aQti} becomes
	\begin{equation*}\label{eqn VtQti for VBK code eq 1}
		V_{\tilde{x},\tilde{y}}(a,:)A_{t,j}=\left\{\begin{array}{ll}
			\lambda_{i,\tilde{y}}^tV_{\tilde{x},\tilde{y}}(a,:)+\sum\limits_{u=0,u\ne \tilde{y}}^{\delta_0-1}\varepsilon_{u,\tilde{y}}\lambda_{i,u}^tV_{\tilde{x},u}(a,:), & \textrm{if~} i=j,\\
			\lambda_{j,\tilde{y}}^tV_{\tilde{x},\tilde{y}}(a,:), & \textrm{if } i\ne j \textrm{ and } x=\tilde{x}
			\end{array}\right.
	\end{equation*}
	since $(\varphi(a,\tilde{x},\tilde{y}))_{\tilde{x}}=\tilde{y}$ and $e_{\pi_\tau(\varphi(a,\tilde{x},\tilde{y}),\tilde{x},u)}=e_{\varphi(a,\tilde{x},u)}=V_{\tilde{x},u}(a,:)$ according to   \eqref{Eqn Vt for the equiv form}, \eqref{Eqn_fact_varphi} and  \eqref{Eqn_fact_varphi_pi}. That is,% by \eqref{eqn VtQti for VBK code eq 1}, we get 
	\begin{eqnarray*}\label{eqn VtQti for VBK code}
		V_{\tilde{x},\tilde{y}}A_{t,j}=
		\left\{\begin{array}{ll}
			\lambda_{i,\tilde{y}}^tV_{\tilde{x},\tilde{y}}+\sum\limits_{u=0,u\ne \tilde{y}}^{\delta_0-1}\varepsilon_{u,\tilde{y}}\lambda_{i,u}^tV_{\tilde{x},u}, & \mbox{if~} i=j,\\
			\lambda_{j,\tilde{y}}^tV_{\tilde{x},\tilde{y}}, & \textrm{if } i\ne j \textrm{ and }x=\tilde{x},
		\end{array}\right.
	\end{eqnarray*}
	which together with \eqref{eqn the repair, select matrix of VBK code} implies  
	\begin{eqnarray*}
		S_{i,\delta_0}A_{t,i}=\lambda_{i,\tilde{y}}^tV_{\tilde{x},\tilde{y}}+\sum\limits_{u=0,u\ne \tilde{y}}^{\delta_0-1}\varepsilon_{u,\tilde{y}}\lambda_{i,u}^tV_{\tilde{x},u} 
	\end{eqnarray*}
	and 
	\begin{equation*}
		S_{i,\delta_0}A_{t,j}=\lambda_{j,\tilde{y}}^tR_{i,\delta_0} \mbox{~for~} 0\le i\ne j<n \mbox{~with~} x=\tilde{x},  
	\end{equation*}
	i.e., (i) is true and (ii) holds for $0\le i\ne j<n$ with $x=\tilde{x}$.
	
	Next we prove this lemma for $x\ne \tilde{x}$. Herein we only check the case of $x<\tilde{x}$ since the case of $x>\tilde{x}$ can be proved in a similar manner. In this case, i.e., $(\varphi(a,\tilde{x},\tilde{y}))_x=a_x$ by \eqref{Eqn_fact_varphi}, then \eqref{eqn e_aQti} turns into
	\begin{eqnarray}\label{Eqn the product between V[a] and Atj}
	V_{\tilde{x},\tilde{y}}(a,:)A_{t,j}
		&=&\left\{\begin{array}{ll}
			\lambda_{j,y}^te_{\varphi(a,\tilde{x},\tilde{y})}+\sum\limits_{u=0,u\ne y}^{\delta_0-1}\varepsilon_{u,y}\lambda_{j,u}^te_{\varphi(\pi_{\tau-1}(a,x,u),\tilde{x},\tilde{y})}, &\mbox{if~} a_x= y,\\
			\lambda_{j,a_x}^te_{\varphi(a,\tilde{x},\tilde{y})}, & \mbox{otherwise},
		\end{array}\right.\notag\\
		&=&\left\{\begin{array}{ll}
			(\lambda_{j,y}^tV_{\tilde{x},\tilde{y}}(a,:)+\sum\limits_{u=0,u\ne y}^{\delta_0-1}\varepsilon_{u,y}\lambda_{j,u}^tV_{\tilde{x},\tilde{y}}(\pi_{\tau-1}(a,x,u),:), &\mbox{if~}a_x= y,\\
			\lambda_{j,a_x}^t V_{\tilde{x},\tilde{y}}(a,:), & \mbox{otherwise},
		\end{array}\right.\notag\\
		&=&\left\{\begin{array}{ll}
			(\lambda_{j,y}^t\epsilon_a+\sum\limits_{u=0,u\ne y}^{\delta_0-1}\varepsilon_{u,y}\lambda_{j,u}^t\epsilon_{\pi_{\tau-1}(a,x,u)})V_{\tilde{x},\tilde{y}}, &\mbox{if~}a_x= y,\\
			\lambda_{j,a_x}^t\epsilon_a\cdot V_{\tilde{x},\tilde{y}}, & \mbox{otherwise},
		\end{array}\right.\notag\\
		&=&\epsilon_a(\sum\limits_{b=0}^{N'-1}\lambda_{j,b_x}^t\epsilon_b^\top\epsilon_b+\sum\limits_{b=0,b_x=y}^{N'-1}\sum\limits_{u=0,u\ne y}^{\delta_0-1}\varepsilon_{u,y}\lambda_{j,u}^t\epsilon_b^\top \epsilon_{\pi_{\tau-1}(b,x,u)})V_{\tilde{x},\tilde{y}}\notag\\
		&=&\epsilon_a \tilde{A}_{t,j,i,\delta_0}\cdot V_{\tilde{x},\tilde{y}}\notag\\
		&=&\tilde{A}_{t,j,i,\delta_0}(a,:)\cdot V_{\tilde{x},\tilde{y}}
	\end{eqnarray}
where the first equality follows from \eqref{Eqn_fact_varphi_pi}, the second equality comes from \eqref{Eqn Vt for the equiv form},  and the fourth equality  can be  derived similarly to the third equality in \eqref{eqn e_aQti}. Applying \eqref{eqn the repair, select matrix of VBK code} and \eqref{Eqn the product between V[a] and Atj}, we have $S_{i,\delta_0}A_{t,j}=\tilde{A}_{t,j,i,\delta_0}R_{i,\delta_0}$, which finishes the proof.

	\vspace{3mm}
	
	\textbf{\textit{Proof of Lemma  \ref{lem the helpful lemma for the proof of Thm. 10}}}

	Hereafter we only check the case of $0\le s<p-1$ since the case of $s=p-1$ can be verified  similarly. 	For $i,j\in [0:n)$ with $i\ne j$, define
	%let  $i=\delta_0\tilde{x}+\tilde{y}$ and $j=\delta_0x+y$, define
	\begin{eqnarray}\label{eqn matrix Upsilon}
		 \Upsilon_{j,i}=\left\{\begin{array}{ll}
			\sum\limits_{a=0}^{N'-1}\lambda_{j,a_x}\epsilon_a^\top \epsilon_a, & \textrm{if }x<\tilde{x},\\
			[6pt]
			\lambda_{j,\tilde{y}}I_{N'}, & \textrm{if }x=\tilde{x}\\
			[6pt]
			\sum\limits_{a=0}^{N'-1}\lambda_{j,a_{x-1}}\epsilon_a^\top \epsilon_a, & \textrm{if }x>\tilde{x},
		\end{array}\right.
	\end{eqnarray}
which together with \eqref{eqn matrix Atji for VBK code} implies 	
\small{
\begin{eqnarray}\label{Eqn_Bt}
		\tilde{A}_{t,j,i,\delta_0}\Upsilon_{j,i}&=&\left\{\begin{array}{ll}
			(\sum\limits_{a=0}^{N'-1}\lambda_{j,a_x}^t\epsilon_a^\top \epsilon_a+\sum\limits_{a=0,a_x=y}^{N'-1}\sum\limits_{u=0,u\ne y}^{\delta_0-1}\varepsilon_{u,y}\lambda_{j,u}^t\epsilon_a^\top \epsilon_{\pi_{\tau-1}(a,x,u)})\cdot\sum\limits_{b=0}^{N'-1}\lambda_{j,b_x}\epsilon_b^\top \epsilon_b, & \textrm{if~}x<\tilde{x},\\[6pt]
			(\lambda_{j,\tilde{y}}^tI_{N'})\cdot \lambda_{j,\tilde{y}}I_{N'}, & \textrm{if~}x=\tilde{x},\\[6pt]
			(\sum\limits_{a=0}^{N'-1}\lambda_{j,a_{x-1}}^t\epsilon_a^\top \epsilon_a+\sum\limits_{a=0,a_{x-1}=y}^{N'-1}\sum\limits_{u=0,u\ne y}^{\delta_0-1}\varepsilon_{u,y}\lambda_{j,u}^t\epsilon_a^\top \epsilon_{\pi_{\tau-1}(a,x-1,u)})\cdot \sum\limits_{b=0}^{N'-1}\lambda_{j,b_{x-1}}\epsilon_b^\top \epsilon_b, & \textrm{if~}x>\tilde{x},
		\end{array}\right.\notag\\
		&=&\left\{\begin{array}{ll}
			\sum\limits_{a=0}^{N'-1}(\lambda_{j,a_x}^t\epsilon_a^\top \epsilon_a\sum\limits_{b=0}^{N'-1}\lambda_{j,b_x}\epsilon_b^\top \epsilon_b)+\sum\limits_{a=0,a_x=y}^{N'-1}(\sum\limits_{u=0,u\ne y}^{\delta_0-1}\varepsilon_{u,y}\lambda_{j,u}^t\epsilon_a^\top \epsilon_{\pi_{\tau-1}(a,x,u)}\cdot\sum\limits_{b=0}^{N'-1}\lambda_{j,b_x}\epsilon_b^\top \epsilon_b), & \textrm{if~}x<\tilde{x},\\[6pt]
			\lambda_{j,\tilde{y}}^{t+1}I_{N'}, & \textrm{if~}x=\tilde{x},\\[6pt]
			\sum\limits_{a=0}^{N'-1}(\lambda_{j,a_{x-1}}^t\epsilon_a^\top \epsilon_a\sum\limits_{b=0}^{N'-1}\lambda_{j,b_{x-1}}\epsilon_b^\top \epsilon_b)+\sum\limits_{a=0,a_{x-1}=y}^{N'-1}(\sum\limits_{u=0,u\ne y}^{\delta_0-1}\varepsilon_{u,y}\lambda_{j,u}^t\epsilon_a^\top \epsilon_{\pi_{\tau-1}(a,x-1,u)}\cdot \sum\limits_{b=0}^{N'-1}\lambda_{j,b_{x-1}}\epsilon_b^\top \epsilon_b), & \textrm{if~}x>\tilde{x},
		\end{array}\right.\notag\\
		&=&\left\{\begin{array}{ll}
			\sum\limits_{a=0}^{N'-1}\lambda_{j,a_x}^{t+1}\epsilon_a^\top \epsilon_a+\sum\limits_{a=0,a_x=y}^{N'-1}\sum\limits_{u=0,u\ne y}^{\delta_0-1}\varepsilon_{u,y}\lambda_{j,u}^{t+1}\epsilon_a^\top \epsilon_{\pi_{\tau-1}(a,x,u)}, & \textrm{if~}x<\tilde{x},\\[6pt]
			\lambda_{j,\tilde{y}}^{t+1}I_{N'}, & \textrm{if~}x=\tilde{x},\\[6pt]
			\sum\limits_{a=0}^{N'-1}\lambda_{j,a_{x-1}}^{t+1}\epsilon_a^\top \epsilon_a+\sum\limits_{a=0,a_{x-1}=y}^{N'-1}\sum\limits_{u=0,u\ne y}^{\delta_0-1}\varepsilon_{u,y}\lambda_{j,u}^{t+1}\epsilon_a^\top \epsilon_{\pi_{\tau-1}(a,x-1,u)}, & \textrm{if~}x>\tilde{x},
			\end{array}\right.\notag\\
		&=&\tilde{A}_{t+1,j,i,\delta_0}
	\end{eqnarray}
	}
	for $0\le i \ne j<n$ and $0\le t<r-1$. Let us define a block lower triangular matrix $\Psi_s$ of order $(r-s)N'$ as
	\begin{eqnarray*}  \label{eqn def Psij}
		\Psi_s=\underbrace{\left(\begin{array}{cccc}
				I_{N'} & & &\\
				\beta_s I_{N'} & -I_{N'} & & \\
				& \ddots & \ddots &  \\
				& & \beta_s I_{N'} & -I_{N'}
			\end{array}\right)}_{(r-s)\times(r-s)}.
	\end{eqnarray*} 
	 	 Then by \eqref{Eqn_Bt}, multiplying $H_{i,p,s}$ on the left by $\Psi_s$ we obtain			
	\begin{eqnarray*}
		\Psi_s H_{i,p,s}=\left(\begin{array}{c;{2pt/2pt}c;{2pt/2pt}c}
			I_{N'} & W_1 & W_2\\
			\hdashline
			\mathbf{0}_{(r-s-1)N'\times N'} & W_3Q_{s,0} & W_4Q_{s,1}
		\end{array}\right)
	\end{eqnarray*}
	where $W_1=(\underbrace{I_{N'},\cdots,I_{N'}}_{p-s-1})$, $W_2=\left(\begin{array}{ccc}
		\tilde{A}_{0,j_0,i,\delta_0} & \cdots & \tilde{A}_{0,j_{r-p-1},i,\delta_0}
	\end{array}\right)$, 
	\begin{eqnarray*}
		W_3=\left(\begin{array}{ccc}
				I_{N'} & \cdots & I_{N'} \\
				%\beta_{s+1}I_{N'} & \cdots & \beta_{p-1}I_{N'}\\
				\vdots & \ddots & \vdots \\
				\beta_{s+1}^{r-s-2}I_{N'} & \cdots & \beta_{p-1}^{r-s-2}I_{N'}\\
			\end{array}\right),~W_4=\left(\begin{array}{ccc}
				\tilde{A}_{0,j_0,i,\delta_0} & \cdots & \tilde{A}_{0,j_{r-p-1},i,\delta_0}\\
				%B_{1,j_0} & \cdots & B_{1,j_{r-p-1}}\\
				\vdots & \ddots & \vdots\\
				\tilde{A}_{r-s-2,j_0,i,\delta_0} & \cdots & \tilde{A}_{r-s-2,j_{r-p-1},i,\delta_0}\\
			\end{array}\right),   
	\end{eqnarray*}
	and
	\begin{eqnarray*}
	Q_{s,0}=\left(\begin{array}{ccc}	
			(\beta_s-\beta_{s+1})I_{N'} & & \\
			&\ddots &\\
			& & (\beta_s-\beta_{p-1})I_{N'}
		\end{array}\right),~	Q_{s,1}=\left(\begin{array}{ccc}
			\beta_s I_{N'}-\Upsilon_{j_0,i} & &\\
			& \ddots &\\
			& & \beta_s I_{N'}-\Upsilon_{j_{r-p-1},i}
		\end{array}\right).
	\end{eqnarray*}	
	
Then we obtain
	\begin{eqnarray*}  \label{eqn T-MDS_4}
		|\Psi_s||H_{i,p,s}|&=&|\Psi_s H_{i,p,s}|\notag\\
		&=&|I_{N'}|	\left|\left(\begin{array}{c;{2pt/2pt}c}
			W_3Q_{s,0} & W_4 Q_{s,1}\\
		\end{array}\right)\right|\notag\\&=&
		\left|\left(\begin{array}{c;{2pt/2pt}c}
			W_3 & W_4\\
		\end{array}\right)\left(\begin{array}{cc}
			Q_{s,0} & \\
			& Q_{s,1}
		\end{array}\hspace{-1.5mm}\right) \right|\notag\\
		&=&\left|\left(\hspace{-1.5mm}\begin{array}{cccc;{2pt/2pt}ccc}
			I_{N'} & I_{N'} & \cdots & I_{N'} &  \tilde{A}_{0,j_0,i.\delta_0} & \cdots & \tilde{A}_{0,j_{r-p-1},i,\delta_0}\\
			\beta_{s+1} I_{N'} &  \beta_{s+2}I_{N'} & \cdots & \beta_{p-1}I_{N'} &  \tilde{A}_{1,j_0,i.\delta_0} & \cdots & \tilde{A}_{1,j_{r-p-1},i,\delta_0}\\
			\vdots & \vdots & \vdots & \vdots  & \vdots & \vdots & \vdots\\
			\beta_{s+1}^{r-s-2} I_{N'} &  \beta_{s+2}^{r-s-2}I_{N'} & \cdots & \beta_{p-1}^{r-s-2}I_{N'} & \tilde{A}_{r-s-2,j_0,i.\delta_0} & \cdots & \tilde{A}_{r-s-2,j_{r-p-1},i,\delta_0}\\
		\end{array}\hspace{-1.5mm}\right)\hspace{-1.5mm}\left(\hspace{-1.5mm}\begin{array}{cc}
			Q_{s,0} & \\
			& Q_{s,1}
		\end{array}\hspace{-1.5mm}\right) \right|\notag\\
		&=&|\Psi_s||H_{i,p,s+1}||Q_{s,0}||Q_{s,1}|,
	\end{eqnarray*}	
where the last equality follows from  the definition of matrix $H_{i,p,s}$ in \eqref{eqn the matrix Hzys}.
	It is clear that $|Q_{s,0}|\ne 0$ since $\beta_u \ne \beta_v$ for any $0\le u\ne v<p$. Additionally, for any $j\in [0:n)\backslash\{i\}$, by \eqref{eqn matrix Upsilon} we have
	\begin{eqnarray*}
		\beta_u I_{N'}-\Upsilon_{j,i}=\left\{\begin{array}{ll}
			\sum\limits_{a=0}^{N'-1}(\beta_u-\lambda_{j,a_x})\epsilon_a^\top \epsilon_a, & \textrm{if~}x<\tilde{x},\\
			(\beta_u-\lambda_{j,\tilde{y}})I_{N'}, & \textrm{if~}x=\tilde{x},\\
			\sum\limits_{a=0}^{N'-1}(\beta_u-\lambda_{j,a_{x-1}})\epsilon_a^\top \epsilon_a, & \textrm{if~}x>\tilde{x},
		\end{array}\right.
	\end{eqnarray*}
	which is nonsingular according to conditions (ii) and (iii) in this lemma.  Then,  we arrive at the desired conclusion.


\begin{thebibliography}{99}
		
		\bibitem{array codes} M. Blaum, P.G. Farell, and H. van Tilborg, ``Array codes," \textit{Handbook of Coding Theory}, V. Pless and W. C. Huffman, Eds. Elsevier Science, 1998, vol. II, ch. 22, pp. 1855-1909.
		
		\bibitem{HDFS} D. Borthakur, ``HDFS Architecture Guide,"  in  \textit{Hadoop Apache Project,} 2008. [Online]. Available:
		http://hadoop.apache.org/common/docs/current/hdfs design.pdf
		
		\bibitem{total} R. Bhagwan, K. Tati, Y.-C. Cheng, S. Savage, and G.M. Voelker,
		``Total recall: System support for automated availability management,"
		in \textit{Proc. 1st Symposium on Networked Systems Design and Implementation (NSDI)}, San Francisco, CA, Mar. 2004.
		
		\bibitem{Dimakis} A.G. Dimakis, P. Godfrey, Y. Wu, M. Wainwright, and K. Ramchandran, ``Network coding for distributed storage systems," \textit{IEEE Trans. Inform. Theory,} vol. 56, no. 9, pp. 4539-4551, Sep. 2010.
		
		\bibitem{Dhash} F. Dabek, J. Li, E. Sit, J. Robertson, M. Kaashoek, and R. Morris, ``Designing
		a DHT for low latency and high throughput," in \textit{Proc. 1st Symposium on Networked Systems Design and Implementation (NSDI)}, San Francisco, CA, Mar. 2004.
		
		\bibitem{Micro} C. Huang, H. Simitci, Y. Xu, A. Ogus, B. Calder, P. Gopalan, J. Li, and
		S. Yekhanin, ``Erasure coding in Windows Azure storage," in \textit{Proc. 2012 USENIX Annual Technical Conference}, Boston, MA, pp. 1-12, Jun. 2012.
		
		\bibitem{repair_parity_zigzag} J. Li and X.H. Tang, ``Optimal exact repair strategy for the parity nodes of the $(k+2,k)$ Zigzag code," \textit{IEEE Trans. Inform. Theory,} vol. 62, no. 9, pp. 4848-4856, Sep. 2016.
		
		\bibitem{new_modified_zigzag} J. Li, X.H. Tang, and W. Xiang, ``A New Construction of $(k+2,k)$ Minimal Storage Regenerating Code  Over $\mathbf{F}_3$ with Optimal Access Property for All Nodes,"  \textit{IEEE Communications Letters,} vol. 20, no. 7, pp. 1289-1292, Jul. 2016.
		
		\bibitem{invariant_subspace} J. Li, X.H.  Tang, and U. Parampalli, ``A framework of constructions of minimal storage regenerating codes
		with the optimal access/update property," \textit{IEEE Trans. Inform. Theory,} vol. 61, no. 4, pp. 1920-1932, Apr. 2015.
		
		\bibitem{all_nodes} J. Li, X.H. Tang, and C. Tian , ``A Generic Transformation for Optimal Repair Bandwidth and Rebuilding Access in MDS codes,"   \textit{Proc. IEEE Int. Symp. Inform. Theory,} Aachen, Germany, pp. 1623-1627, Jun. 2017.
		
		\bibitem{t-transformation}  Y. Liu, J. Li, and X.H. Tang, ``A Generic Transformation to Generate MDS Codes with $\delta$-Optimal Access Property," \textit{arxiv preprint 	arXiv:2107.07733v2}, 2021.
		
		\bibitem{RES} N. Raviv, S. Natalia, and E. Tuvi, ``Constructions of high-rate minimum storage regenerating codes over small fields," \textit{IEEE Trans. Inform. Theory},  vol. 63, no. 4, pp. 2015-2038. Apr. 2017
		
		\bibitem{RS_codes} I. Reed and G. Solomon, ``Polynomial codes over certain finite fields,"  \textit{J. Soc.
			Ind. Appl. Math.,} vol. 8, no. 2, pp. 300-304, Jun. 1960.
		
		\bibitem{ocean} S. Rhea, C. Wells, P. Eaton, D. Geels, B. Zhao, H. Weatherspoon, and J. Kubiatowicz, ``Maintenance-free global data storage," \textit{IEEE Internet Comput.,} vol. 5, no. 5, pp. 40-49, Sep.-Oct. 2001.
		
		\bibitem{Sasidharan_Kumar} B. Sasidharan, G.K. Agarwal, and P.V. Kumar, ``A high-rate MSR code with polynomial sub-packetization level,"  \textit{Proc. IEEE Int. Symp. Inform. Theory,} Hong Kong, China, pp. 2051-2055, Jun. 2015.
		
		\bibitem{coupled_layer} B. Sasidharan, V. Myna, and P.V. Kumar, ``An explicit, coupled-layer construction of a high-rate MSR code with low sub-packetization level, small field size and all-node repair," \textit{arXiv preprint arXiv:1607.07335}.
		
		\bibitem{coupled} B. Sasidharan, V. Myna, and P.V. Kumar, ``An explicit, coupled-layer construction of a high-rate MSR code with low sub-packetization level, small field size and $d<(n-1)$," \textit{Proc. IEEE Int. Symp. Inform. Theory,} Aachen, Germany, pp. 2048-2052, Jun. 2017.
				
		\bibitem{Hadamard_strategy} X.H. Tang, B. Yang, J. Li, and H.D.L. Hollmann, ``A new repair strategy for the hadamard minimum storage regenerating codes for distributed storage systems," \textit{IEEE Trans. Inform. Theory,}  vol. 61, no. 10, pp. 5271-5279, Oct. 2015.
		
		\bibitem{zigzag} T. Tamo, Z. Wang, and J. Bruck, ``Zigzag codes: MDS array codes
		with optimal rebuilding," \textit{IEEE Trans. Inform. Theory,} vol. 59, no. 3, pp. 1597-1616, Mar. 2013.
		
		\bibitem{kumar}  M. Vajha, B.S. Babu, and P.V. Kumar, ``Explict MSR Codes with Optimal Access, Optimal Sub-packetization and Small Field Size for $d=k+1,k+2,k+3$," \textit{arxiv preprint arxiv:1804.00598}, 2018.
		
		\bibitem{extended_zigzag} Z. Wang, I. Tamo, and J. Bruck, ``On codes for optimal rebuilding access," in \textit{Proc. 49th Annu. Allerton Conf. Commun., Control, Comput.,} Monticello, IL, pp. 1374-1381, Sep. 2011.
		
		\bibitem{Long_arxiv} Z. Wang, T. Tamo, and J. Bruck, ``Explicit minimum storage regenerating codes,"  \textit{IEEE Trans. Inform. Theory,}  vol. 62, no. 8, pp. 4466-4480, Aug. 2016.	
		
		\bibitem{Barg_1} M. Ye and A. Barg, ``Explicit constructions of high-rate MDS array codes with optimal repair bandwidth," \textit{IEEE Trans. Inform. Theory,} vol. 63, no. 4, pp. 2001-2014, Apr. 2017.
		
		\bibitem{Barg_2} M. Ye and A. Barg, ``Explicit constructions of optimal-access MDS codes with nearly optimal sub-packetization," \textit{IEEE Trans. Inform. Theory,} vol. 63, no. 10, pp. 6307-6317, Oct. 2017.
		
		
	\end{thebibliography}
\end{document}